\theoremstyle{thmstyleone}%
\newtheorem{theorem}{Theorem}
\newtheorem{lemma}{Lemma}
\newtheorem{proposition}{Proposition}
\newtheorem{assumption}{Assumption}%
\theoremstyle{thmstyletwo}%
\newtheorem{example}{Example}%
\newtheorem{remark}{Remark}%
\theoremstyle{thmstylethree}%
\newtheorem{definition}{Definition}
\newcommand{\1}{{\bf 1}}
\definecolor{darkblue}{rgb}{.1, 0.1,.8}
\definecolor{darkgreen}{rgb}{0,0.8,0.2}
\definecolor{darkred}{rgb}{.8, .1,.1}
\newcommand{\bth}{\begin{theorem}}
\newcommand{\ethe}{\end{theorem}}
\newcommand{\bre}{\begin{remark}\em }
\newcommand{\ere}{\end{remark}}
\newcommand{\ble}{\begin{lemma}}
\newcommand{\ele}{\end{lemma}}
\newcommand{\bde}{\begin{definition}}
\newcommand{\ede}{\end{definition}}
\newcommand{\bco}{\begin{corollary}}
\newcommand{\eco}{\end{corollary}}
\newcommand{\bpr}{\begin{proposition}}
\newcommand{\epr}{\end{proposition}}
\newcommand{\bexer}{\begin{exercise}}
\newcommand{\eexer}{\end{exercise}}
\newcommand{\bexam}{\begin{example}}
\newcommand{\eexam}{\end{example}}
\newcommand{\bfi}{\begin{fig}}
\newcommand{\efi}{\end{fig}}
\newcommand{\btab}{\begin{tab}}
\newcommand{\etab}{\end{tab}}
\newcommand{\var}{{\rm var}}
\newcommand{\cov}{{\rm cov}}
\newcommand{\corr}{{\rm corr}}
\renewcommand{\P}{{\mathbb P}}
\newcommand{\E}{{\mathbb E}}
\newcommand{\beao}{\begin{eqnarray*}}
\newcommand{\eeao}{\end{eqnarray*}\noindent}
\newcommand{\beam}{\begin{eqnarray}}
\newcommand{\eeam}{\end{eqnarray}\noindent}
\newcommand{\beqq}{\begin{equation}}
\newcommand{\eeqq}{\end{equation}\noindent}
\newcommand{\bce}{\begin{center}}
\newcommand{\ece}{\end{center}}
\newcommand*{\dif}{\mathop{}\!\mathrm{d}}
\newcommand{\barr}{\begin{array}}
\newcommand{\earr}{\end{array}}
\renewcommand{\P}{{\mathbb P}}
\renewcommand{\E}{{\mathbb E}}
\begin{document}

\journaltitle{Journals of the Royal Statistical Society}
\DOI{DOI HERE}
\copyrightyear{XXXX}
\pubyear{XXXX}
\access{Advance Access Publication Date: Day Month Year}
\appnotes{Original article}

\firstpage{1}


\title[GOF tests for heavy-tailed random fields]{Goodness-of-fit tests for heavy-tailed random fields}

\author[1]{Ying Niu}
\author[2]{Zhao Chen}
\author[3]{Christina Dan Wang}
\author[4,$\ast$]{Yuwei Zhao}

\authormark{Niu et al.}

\address[1]{\orgdiv{Shanghai Center for Mathematical Sciences}, \orgname{Fudan University}, \orgaddress{\street{220 Handan Road}, \postcode{CN-200433}, \state{Shanghai}, \country{China}}}
\address[2]{\orgdiv{School of Data Science}, \orgname{Fudan University}, \orgaddress{\street{220 Handan Road}, \postcode{CN-200433}, \state{Shanghai}, \country{China}}}
\address[3]{\orgdiv{Business Division}, \orgname{New York University Shanghai}, \orgaddress{\street{567 West Yangsi Road}, \postcode{CN-200124}, \state{Shanghai}, \country{China}}}
\address[4]{\orgdiv{Department of Foundational Mathematics}, \orgname{Xi‘an Jiaotong-Liverpool University}, \orgaddress{\street{110 Ren'ai Road}, \postcode{CN-215123}, \state{Jiangsu}, \country{China}}}

\corresp[$\ast$]{Yuwei Zhao, Xi'an Jiaotong-Liverpool University, Suzhou, CN-215123, China. \href{Email:yuwei.zhao@xjtlu.edu.cn}{yuwei.zhao@xjtlu.edu.cn}}
  
\received{Date}{0}{Year}
\revised{Date}{0}{Year}
\accepted{Date}{0}{Year}



\abstract{
We develop goodness-of-fit tests for max-stable
random fields, which are used to model heavy-tailed spatial data. The test statistics are constructed based
on the Fourier transforms of the indicators of extreme values in the heavy-tailed
spatial data, whose asymptotic distribution is a Gaussian random field
under a hypothesized max-stable random field. Since the covariance
structure of the limiting Gaussian random field lacks an explicit
expression, we propose a stationary bootstrap procedure for
spatial fields to approximate critical values. Simulation studies confirm the theoretical distributional results, and applications to PM$2.5$ and temperature data illustrate the practical utility of the proposed method for model assessment.
}

\keywords{Bootstrap, Fourier analysis, Goodness-of-fit test, Random
  field, Regular variation} 

\maketitle

\section{Introduction}\label{sec:introduction}

Heavy-tailed parametric models in
\citet{embrechts:kluppelberg:mikosch:1997} provide a natural
framework for modeling extremes in the time series setting. By far the
most common of such models are max-stable models, including the
max-moving averages and the Brown-Resnick model. These
models are generalized for random fields, and they are widely used in
analyzing the extremes of climate data, for instance, in temperature
studies ~\citep{davison:gholamrezaee:2012, OestingMarco2022ACTt} and
assessments of water levels ~\citep{huser:davison:2014,
  asadi:davison:engelke:2015}. Despite substantial progress in
modeling extreme events, relatively little attention has been devoted
to developing goodness-of-fit tests for these models. These tests are
essential for validating distributional assumptions and ensuring
reliable inferences. 

Our goodness-of-fit tests are based on the dependence structure of
extremes in a strictly stationary, isotropic, and heavy-tailed
$\mathbb{R}^d$-valued random field $(X_{\mathbf{t}})_{\mathbf{t} \in \mathbb{Z}^2}$ with
a generic element $X$, where $d$ is a
positive integer. Note that the field need not be Markovian. To our knowledge, goodness-of-fit tests for random
fields are only developed for Markov random fields in
\citet{kaiser:2012, biswas:2024}. Instead of working with
$X_{\mathbf{t}}$, we deal with its indicator of an extreme,
$ \1(|X_{\mathbf{t}}| > x\}$. Here, $|\cdot|$ is the norm
of $X$, $x$ is a relatively large positive real number, the event
$\{|X_{\mathbf{t}}| > x\}$ represents that an extreme happens at the
location $\mathbf{t}$, and the indicator function $\1(\cdot)$ takes value
$1$ when the event happens and value $0$ otherwise. With a proper
normalizing parameter $\big(\P(|X| >x) \big)^{-1}$, the limit of the covariance between
$\1(|X_{\mathbf{t}}| > x\}$ and $\1(|X_{\mathbf{t} +\mathbf{h}}| > x\}$ for
some $\mathbf{h}\in \mathbb{Z}^2$ as $x\to +\infty$ exists under mild conditions, and
this limit is referred to as the {\em spatial extremogram} for
$(X_{\mathbf{t}})$ in \citet{cho:davis:ghosh:2016}. The Fourier
transform and the inverse Fourier transform of the spatial extremogram
are called the {\em extremal
  periodogram} and the {\em
 extremal integrated periodogram}, respectively. The test statistic of the
goodness-of-fit test is defined as a continuous function of the
extremal integrated periodogram. We will prove a functional central
limit theorem for the extremal integrated periodogram, whose limit is
a Gaussian random field without an explicit expression of the
covariance matrix, and the 
asymptotic distribution of the test statistic is obtained by an
application of continuous 
mapping theorem. We will use numerical methods to derive the critical
values for the test, including a Monte Carlo method and a bootstrap
procedure. Our results extend the results on the spatial extremogram,
the extremal periodogram, and the extremal integrated periodogram for
a heavy-tailed time series in a series of papers by
Mikosch~et~al.~\citep{davis:mikosch:2009, mikosch:zhao:2015,mikosch:zhao:2012,davis:mikosch:zhao:2013}.

We assume that the random field $(X_{\mathbf{t}})$ is observed on an
$n\times n$ lattice $\Lambda_n^2 = \{ 1, \ldots, n\}^2$. Let $M_0 = M_0(\phi )$
represent a max-stable model with parameter $\phi \in \Phi$, where $\Phi$ is a
compact parameter space. Our goal is to test whether
$(X_{\mathbf{t}})_{\mathbf{t} \in \Lambda_n}$ follows the model $M_0$. The
hypotheses of the goodness-of-fit test are given by  
\begin{itemize}
\item The null hypothesis $H_0$: there exists $\phi \in \Phi$ such that
  $(X_{\mathbf{t}})_{\mathbf{t} \in \Lambda_n^2}$ comes from the max-stable model $M_0(\phi)$.
\item The alternative hypothesis $H_1$: Not $H_0$. 
\end{itemize} 
Performing the goodness-of-fit test involves estimating the
parameter(s) $\phi$. The composite likelihood
method for estimating $\phi$ has been extensively studied in
\citet{davison:gholamrezaee:2012, davis:kluppelberg:steinkohl:2013a,
  davis:2013b, asadi:davison:engelke:2015, buhl:2019}. More recent
advances include the Whittle estimator
\citep{damek:mikosch:zhao:zienkiewicz:2023} on regular grids and the
M-estimator \citep{EinmahlJohnH.J2016Most} for irregularly spaced spatial
data. Both of these latter two methods will be employed in our numerical experiments,
depending on the data structure. 

Besides the classical Monte Carlo method, we develop a stationary bootstrap procedure to
obtain the asymptotic distribution of the test statistic, from which the
critical value for the goodness-of-fit test is calculated. The
stationary bootstrap algorithm proposed by 
\citet{politis:romano:1994} is tailored for one-dimensional time
series. A frequency-domain bootstrap method developed by
\citet{ng:yau:chen:2021} involves resampling Fourier coefficients and
an inverse transform, leading to relatively complex
operations. To overcome these limitations, we introduce a
two-dimensional stationary bootstrap procedure. By independently
resampling blocks along rows and columns, our method preserves the
spatial dependence structure of the original field. We will prove that the
bootstrapped test statistic shares the same asymptotic distribution as the
original test statistic.

The rest of the paper is organized as
follows. Section~\ref{sec:method} describes the test statistic and the
procedure of performing the goodness-of-fit test. The asymptotic
property of the test statistic is studied in
Section~\ref{sec:asymptoticdistribution}. Section~\ref{sec:bootstrap}
provides a bootstrap procedure for the goodness-of-fit test, and the
asymptotic properties of the bootstrapped test statistic are
discussed. The distributional results are 
verified through simulations in Section~\ref{sec:simulation}. Finally,
we take two applications as examples to
illustrate the method for model assessment in
Section~\ref{sec:realdata}. The Online
  Supplementary Material~\citep{niu:2025b} collects all the proofs.

\section{Methodology: the goodness-of-fit test} \label{sec:method}

\subsection{Fourier analysis of a regularly varying random field}\label{subsec:estimation}
We consider a strictly stationary and isotropic random field
$(X_{\mathbf{t}})_{\mathbf{t}\in \mathbb{Z}^2}$ taking values from $\mathbb{R}^d$ for some
$d\ge 1$. We say that $(X_{\mathbf{t}})_{\mathbf{t} \in  \mathbb{Z}^2}$ is {\em
  regularly varying with tail index $\xi>0$}, if there exists a
random field $(\Xi_{\mathbf{t}})_{\mathbf{t} \in \mathbb{Z}^2}$ and some $\xi >0$
such that for any finite subset $A \subset \mathbb{Z}^2$ and $t>0$,  
\begin{align} \label{eq:rv1}
\P \big( (X_{\mathbf{t}}/|X_{\mathbf{0}}|)_{\mathbf{t} \in A} \in \cdot \mid
  |X_{\mathbf{0}}|>x \big)
& \overset{w}{\to} \P\big( (\Xi_{\mathbf{t}})_{\mathbf{t} \in A} \in \cdot
  \big)\,,\\ 
  \label{eq:rv2}
\frac{\P(|X|> tx)}{\P(|X|>x)} \to t^{-\xi}\,, \quad x\to \infty\,.
\end{align}
Here, $\overset{w}{\to}$ denotes weak convergence of probability measures. 

Regular variation captures the heavy-tailed characteristic of the data
and serves as an important condition for quantifying extremal dependence; see
\citet{basrak:segers:2009, segers:zhao:meinguet:2017,
  wu:samorodnitsky:2020, basrak:planinic:2021}. According to
\citet{cho:davis:ghosh:2016}, the {\em spatial extremogram} for
$(X_{\mathbf{t}})_{\mathbf{t} \in \mathbb{Z}^2}$ is given by  
\begin{align*}
\gamma(\mathbf{h}) &= \gamma(h_1, h_2) = \lim_{x\to \infty} \P\big( |X_{\mathbf{h}}| >x
                 \mid  |X_{\mathbf{0}}|>x \big)\\
  &=\lim_{x\to \infty} \corr \big(
  \mathbf{1}(|X_{\mathbf{h}}| > x), \mathbf{1}(|X_{\mathbf{0}}| >x)
  \big) =\E[1 \wedge |\Xi_{\mathbf{h}}|^{\xi}]\,.
\end{align*}
Here
$|\cdot|$ is the norm in $\mathbb{R}^d$ and $\mathbf{1}(\cdot)$ is the indicator
function. It generalizes the extremogram for time series in
\citet{davis:mikosch:2009} to spatial settings and captures the
``autocorrelation'' of extreme events in a random field. The isotropic
property implies that $\gamma (\mathbf{h}) =\gamma (\|\mathbf{h} \|)$.  

If 
\begin{align} \label{eq:finiteextremo}
    \sum_{\mathbf{h} \in \mathbb{Z}^2} |\gamma(\mathbf{h})| < \infty\,,
\end{align}
the Fourier transform of $(\gamma(\mathbf{h}))$,
\begin{align}\label{eq:spectraldensity}
f(\bm{\omega}) = \sum_{\mathbf{h} \in \mathbb{Z}^2} \gamma(\mathbf{h}) \exp \{-i
  \mathbf{h}^{\intercal} \bm{\omega} \} = \sum_{\mathbf{h} \in \mathbb{Z}^2} \gamma(\mathbf{h})
  \cos(\mathbf{h}^{\intercal} \bm{\omega})\,, \quad \bm{\omega} \in [0,2 \pi]^2 =: \Pi^2  
\end{align}
exists. Due to the isotropic property, the imaginary part of
$f(\bm{\omega})$ vanishes. The function $f(\bm{\omega})$ is called the {\em
  extremal spectral density} for $(X_{\mathbf{t}})_{\mathbf{t} \in
  \mathbb{Z}^2}$. A natural estimator for $f(\bm{\omega})$ is the {\em extremal
  periodogram}, 
  \begin{align*}
  \widetilde{f}(\bm{\omega}) = \sum_{\| \mathbf{h} \| < n}
    \widetilde{\gamma}(\mathbf{h}) \cos (\mathbf{h}^{\intercal} \bm{\omega})\,,
  \end{align*}
where $
\widetilde{\gamma}(\mathbf{h}) = (m_n/n^2) \sum_{\mathbf{t}, \mathbf{t}
  + \mathbf{h} \in \Lambda_n^2} \widetilde{I}_{\mathbf{t}}
  \widetilde{I}_{\mathbf{t} + \mathbf{h}}$, 
with $\widetilde{I}_{\mathbf{t}} = I_{\mathbf{t}}-
p_n(\mathbf{0})$, $I_{\mathbf{t}} = \1(|X_{\mathbf{t}}| > a_{m_n}) $  and $p_n(\mathbf{h}) = \P \big( \min
(|X_{\mathbf{0}}| , |X_{\mathbf{h}}|) >a_{m_n} \big)$. Here, $(m_n)$ is
a sequence of positive integers satisfying $m_n \to \infty$ and $m_n /n \to 0$
as $n\to \infty$. The sequence $(a_{{m_n}})$ is chosen as the $(1- 1/m_n)$-th
quantile of $|X|$, i.e., $\P(|X| > a_{m_n}) = 1/m_n$. 

\subsection{Test statistic and the test procedure}

We consider the inverse Fourier transform of $(\gamma(\mathbf{h}))$, the
{\em extremal integrated spectral density} (with respect to a function
$g: \Pi^2 \to \mathbb{R}_+$),
\begin{align*}
J(\bm{\omega}; g) = J(\bm{\omega}) = \int_{[0, \omega_1]\times [0, \omega_2]} f(\mathbf{x})
  g(\mathbf{x}) \dif \mathbf{x}\,, \quad \bm{\omega} = (\omega_1, \omega_2) \in \Pi^2\,.
\end{align*}
The function $g$ is Lipschitz continuous and $L^2$-integrable over
$\Pi^2$. By replacing $f(\bm{\omega})$ with the right-hand side of
\eqref{eq:spectraldensity}, we have 
\begin{align*}
J(\bm{\omega}) = \sum_{\mathbf{h} \in \mathbb{Z}^2} \gamma(h) \psi_{\mathbf{h}}(\bm{\omega})\,, 
\end{align*}
where $\psi_{\mathbf{h}}(\bm{\omega}) = \int_{[0,\omega_1]\times [0,\omega_2]}
g(\mathbf{x}) \cos (\mathbf{h}^{\intercal} \mathbf{x}) \dif
\mathbf{x}$. By disretizing the integral in $\psi_{\mathbf{h}}$, we have
$\widetilde{\psi}_{\mathbf{h}}(\bm{\omega}) = (4 \pi^2/n^2) \sum_{i_1=1}^{j_1}
\sum_{i_2 =1}^{j_2} g(\bm{\lambda}_{\mathbf{i}}) \cos (\mathbf{h}^{\intercal}
\bm{\lambda}_{\mathbf{i}})$ with $\bm{\lambda}_{\mathbf{i}} = (\lambda_{i_1}, \lambda_{i_2}) =
(2 \pi i_1/n, 2 \pi i_2/n)$ for $1\le i_1, i_2 \le n$ and $j_i=\lfloor n \omega_i/(2 \pi) \rfloor$ is the integer
part of $n \omega_i/(2 \pi)$ for $i =1,2$. The {\em extremal integrated
  periodogram} $\widetilde{J}_n(\bm{\omega})$ is given by $\widetilde{J}_n(\bm{\omega}) = \sum_{\|
  \mathbf{h} \| < n} \widetilde{\gamma}(\mathbf{h})
\widetilde{\psi}_{\mathbf{h}}(\bm{\omega})$. We will prove that
$\widetilde{J}_n( \bm{\omega} )$ multiplied by a proper normalizing parameter
converges to a Gaussian random field in
Section~\ref{sec:asymptoticdistribution}. We propose
the {\em Grenander-Rosenblatt statistic} (GRS) 
\begin{align*}
T_n= \frac{n}{\sqrt{m_n}} \sup_{\bm{\omega} \in \Pi^2} \big|
  \widetilde{J}_n (\bm{\omega}) - \E\big[ \widetilde{J}_n (\bm{\omega}) \big] \big|= \frac{n}{\sqrt{m_n}} \sup_{ 1\le
  i_1,i_2\le n} \big|
  \widetilde{J}_n (\bm{\lambda}_{\mathbf{i}}) - \E\big[ \widetilde{J}_n (\bm{\lambda}_{\mathbf{i}}) \big] \big|\,.
\end{align*}
Other test statistics for testing the fit of a Gaussian sheet,
such as the Cram\'{e}r-von Mises statistic, can be defined in a
similar way.

We propose a testing procedure based on the statistic
$T_n$. Recall that the null hypothesis $H_0$ is that the
observation $(X_{\mathbf{t}})_{\mathbf{t} \in \Lambda_n^2}
=(X_{\mathbf{t}}^{(0)})_{\mathbf{t} \in \Lambda_n^2}$ comes from the
max-stable model $M_0(\phi)$ with a parameter $\phi$. Under the hypothesis
$H_0$, we estimate $\phi$ from the observation
$(X_{\mathbf{t}})_{\mathbf{t} \in \Lambda_n^2}$ and the estimated value of $\phi$
is denoted by $\widehat{\phi} = \widehat{\phi}_n$. By using the Monte Carlo method, we can
simulate $B$ samples $(X_{\mathbf{t}}^{(b)})_{\mathbf{t}\in \Lambda_n^2; b =1, \ldots B}$ from
the model $M_0 (\widehat{\phi})$. The quantity $a_{m_n}$ with a
pre-determined value $m_n$ is the $(1-1/m_n)$-th empirical quantile
of $\{ X^{(b)}_{\mathbf{t}}, b=1,2,\ldots, B\}$, and we calculate
$(I^{(b)}_{\mathbf{t}})_{\mathbf{t} \in \Lambda_n^2; b=0,1,\ldots, B} =
\big(\1(|X^{(b)}_{\mathbf{t}}| > a_{m_n})
\big)_{\mathbf{t} \in \Lambda_n^2; b=0,1,\ldots, B}$,
$\widetilde{\gamma}^{(B)}(\mathbf{h})$, and consequently,
$(\widetilde{J}_n^{(b)}(\bm{\lambda}_{\mathbf{i}}))$ for $b=0, 1,\ldots, B$ and 
$1\le i_1, i_2 \le n$. The quantity 
$\E[\widetilde{J}_n(\bm{\lambda}_{\mathbf{i}})]$ is obtained as the average
of $(\widetilde{J}_n^{(b)}(\bm{\lambda}_{\mathbf{i}}))_{b=1,\ldots,B}$ for all
$1\le i_1, i_2 \le n$. For $b=0,1,2, \ldots, B$, we replace
$\widetilde{J}_n(\bm{\omega})$ in the formula of $T_n$ with
$\widetilde{J}^{(b)}_n(\bm{\omega})$, and obtain the {\em simulation-based
statistic} $T_n(b)$. Let $\alpha \in (0,1)$ be a pre-determined level of
significance, and we choose the $(1- \alpha)$-th quantile of $\{
T_n(b): b=1,2,\ldots,B\}$ as the critical value $c_n(\alpha)$ for the
test. We reject the null hypothesis $H_{0}$ in
favor of $H_1$ at the $\alpha$-level of significance if $T_n(0) >
c_n(\alpha)$; otherwise, we cannot reject $H_0$. We refer to $c_n(\alpha)$ as the {\em
  simulation-based threshold} at the $\alpha$-level of significance. 

Another way to approximate the asymptotic distribution of $T_n$ is to
apply a bootstrap procedure developed in
Section~\ref{sec:bootstrap}. By choosing the same $a_{m_n}$ as above,
we will generate the bootstrap samples
$(X_{\mathbf{t}^{\star}}^{(b)})_{\mathbf{t} \in \Lambda_n^2; b=1,2,\ldots,B}$
from $(X_{\mathbf{t}})_{\mathbf{t} \in \Lambda_n^2} =
(X_{\mathbf{t}^{\star}}^{(0)})_{\mathbf{t} \in \Lambda_n^2}$ to compute the {\em
  bootstrapped extremal integrated periodogram} 
$\widehat{J}_n^{\star}(\bm{\omega})(b)$ for $b=1,2,\ldots,B$. We assume that
$\widehat{J}_n^{\star}(\bm{\omega})(0)$ is calculated from the original
observation $(X_{\mathbf{t}})_{\mathbf{t} \in \Lambda_n^2}$. Write the {\em
  boostrap-based statistic} $T_n^{\star}(b)
= (n/\sqrt{m_n}) \sup_{\bm{\omega} \in \Pi^2}\big|
\widehat{J}_n^{\star}(\bm{\omega})(b) - \E^{\star}[\widehat{J}_n^{\star}(\bm{\omega})] \big|$
for $b=0,1,\ldots, B$,
where $\E^{\star}[\widehat{J}_n^{\star}(\bm{\omega}) ]= B^{-1} \sum_{b=1}^B
\widehat{J}_n^{\star}(\bm{\omega})(b) $. Take
$c_n^{\star}(\alpha)$ as the $(1-\alpha)$-quantile of $\{T_n^{\star}(b):
b=1,\ldots,B\}$. The asymptotic distributions of $T_n$ and $T_n^{\star}$ are
the same. We reject $H_0$ in favor of $H_1$ at the $\alpha$-level of
significance if $T_n(0) > c_n^{\star}(\alpha)$; otherwise, we cannot
reject $H_0$. We refer to $c_n^{\star}(\alpha)$ as the {\em
  bootstrapped-based threshold} at the $\alpha$-level of significance.

The choice of the value $a_{m_n}$ plays a vital role in the procedure
of the goodness-of-fit test, which is estimated under the null
hypothesis. If the null hypothesis is false, the estimation of
$a_{m_n}$ is incorrect, and consequently both
$\widetilde{J}_n(\bm{\lambda}_{\mathbf{i}})$ and its centering
$\E[\widetilde{J}_n(\bm{\lambda}_{\mathbf{i}})]$ are incorrect. The errors
would be magnified due to the mismatch between $a_{m_n}$ and the
normalizing parameter $m_n$. The quantities $\widetilde{J}_n$ and
$\widehat{J}_n^{\star}$ are mainly determined by the extremal dependence
structure of the embedded model $M_0$. The values of $\widetilde{J}_n$
and $\widehat{J}_n^{\star}$ are not susceptible to the changes of
$\phi$, but they are sensitive to the change of the embedded model
$M_0$. Our proposed test statistic $T_n$ is proper to test the
fit of the model $M_0$. We will
examine the efficiency of the goodness-of-fit test through simulation
studies in Section~\ref{sec:simulation}.

\section{Asymptotic distribution of the test statistic} \label{sec:asymptoticdistribution}

\subsection{Mixing conditions}
We introduce mixing conditions for deriving the functional central
limit theorem for the extremal integrated periodogram
$\widetilde{J}_n(\bm{\omega})$. The {\em $\alpha$-mixing coefficient} between
two $\sigma$-fields $\mathcal{A}$ and $\mathcal{B}$ on a sample space $\Omega$ is given in
\citet{rosenblatt:1956} by $\alpha(\mathcal{A}, \mathcal{B}) = \sup_{A\in \mathcal{A}, B\in \mathcal{B}} \big|
\P\big( A \cap B) - \P(A)\P(B) \big|$. For a random field
$(X_{\mathbf{t}})_{\mathbf{t} \in \mathbb{Z}^2}$, \citet{rosenblatt:1985}
defined a specific $\alpha$-mixing coefficient by $\alpha_{j,k} (h) =
\sup_{S,T \subset \mathbb{Z}^2, \# S \le j, \# T\le k,\|S-T\|\ge h} \alpha\big(
\sigma(X_{\mathbf{s}}, \mathbf{s} \in S), \sigma(X_{\mathbf{t}}, \mathbf{t} \in
T) \big)$, where $\sigma(X_{\mathbf{s}}, \mathbf{s} \in Q)$ denotes the
$\sigma$-field generated by $\{X_{\mathbf{s}}, \mathbf{s} \in Q\}$ for $Q \subset \mathbb{Z}^2$.

\begin{assumption}\label{asm:M1}
Let $(X_{\mathbf{t}}){\mathbf{t} \in \mathbb{Z}^2}$ be a strictly stationary and isotropic random field that is regularly varying with index $\xi > 0$ and $\alpha$-mixing with coefficients $(\alpha_{j,k})$. Furthermore, there exist integer
sequences $(m_n)$ and $(r_n)$ such that $m_n, r_n \to \infty$, $m_n/n \to
0$, $r_n^4/m_n \to 0$, $nr_n/m_n^{3/2} \to 0$, and the following
conditions hold: 
\begin{enumerate}[(a) ]
\item For every $\delta > 0$,
\begin{align} \label{eq:anticlustering}
\lim_{h\to \infty } \limsup_{n\to \infty}\, m_n \sum_{\mathbf{h}: h <\|\mathbf{h}
  \| \le r_n} \P \big( |X_{\mathbf{0}}| > \delta  a_{m_n},
  |X_{\mathbf{h}}|> \delta  a_{m_n} \big) =0\,. 
\end{align}
\item There exist constants $K\,, \tau >0$, $\rho
  \in (0,1)$ and a non-increasing function $\alpha(h)$ such that
  $\sup_{j,k \ge 1} \alpha_{j,k}(h) \le \alpha(h) \le K \rho^{h^{\tau}}$ and $\lim_{n\to
    \infty} m_n \alpha(r_n) =0$. 
\end{enumerate}

\end{assumption}

Assumption~\ref{asm:M1} is pivotal for establishing asymptotic
properties and aligns with the condition {\bf (M1)} in
\citet{damek:mikosch:zhao:zienkiewicz:2023}. Specially,
Assumption~\ref{asm:M1}(b) implies that  
\begin{align} \label{eq:mixingsum00}
\lim_{n\to \infty} m_n \sum_{\|\mathbf{h} \| >r_n} \alpha(\| \mathbf{h} \|) =0\,.
\end{align}
\subsection{Asymptotic properties of the extremal integrated periodogram}
We start with the unbiasedness of $\widetilde{J}_n(\bm{\omega})$. 
\begin{theorem}\label{thm:consistency}
Suppose Assumption~\ref{asm:M1} holds. For all $\bm{\omega} \in \Pi^2$,  
\begin{align*}
 \E[\widetilde{J}_n(\bm{\omega})] \to J(\bm{\omega})\,, \quad \text{ as } n\to \infty\,.
\end{align*}
\end{theorem}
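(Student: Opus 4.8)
The plan is to evaluate $\E[\widetilde{J}_n(\bm{\omega})]$ exactly at the level of the sample extremogram $\widetilde{\gamma}(\mathbf{h})$, and then to pass to the limit in the resulting series by a dominated-convergence-type argument in which the tail is controlled by the anticlustering condition Assumption~\ref{asm:M1}(a) and by the mixing estimate \eqref{eq:mixingsum00}.

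First I would record the exact expectation. Since $I_{\mathbf{t}}=\1(|X_{\mathbf{t}}|>a_{m_n})$ has mean $p_n(\mathbf{0})=m_n^{-1}$, the centred variable $\widetilde{I}_{\mathbf{t}}=I_{\mathbf{t}}-p_n(\mathbf{0})$ has mean zero, so by strict stationarity $\E[\widetilde{I}_{\mathbf{t}}\widetilde{I}_{\mathbf{t}+\mathbf{h}}]=\cov(I_{\mathbf{0}},I_{\mathbf{h}})=p_n(\mathbf{h})-m_n^{-2}$. Writing $N_n(\mathbf{h})=(n-|h_1|)_+(n-|h_2|)_+$ for the number of $\mathbf{t}$ with $\mathbf{t},\mathbf{t}+\mathbf{h}\in\Lambda_n^2$, and $\gamma_n(\mathbf{h}):=m_np_n(\mathbf{h})-m_n^{-1}=m_n\cov(I_{\mathbf{0}},I_{\mathbf{h}})$, this gives $\E[\widetilde{\gamma}(\mathbf{h})]=n^{-2}N_n(\mathbf{h})\gamma_n(\mathbf{h})$ and hence
\begin{align*}
\E[\widetilde{J}_n(\bm{\omega})]=\sum_{\|\mathbf{h}\|<n}\frac{N_n(\mathbf{h})}{n^2}\,\gamma_n(\mathbf{h})\,\widetilde{\psi}_{\mathbf{h}}(\bm{\omega})\,,
\end{align*}
which I would compare term by term with $J(\bm{\omega})=\sum_{\mathbf{h}\in\mathbb{Z}^2}\gamma(\mathbf{h})\psi_{\mathbf{h}}(\bm{\omega})$. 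Two facts drive the comparison. (i) For each fixed $\mathbf{h}$: $N_n(\mathbf{h})/n^2\to1$; $\gamma_n(\mathbf{h})\to\gamma(\mathbf{h})$, because $m_np_n(\mathbf{h})=\P(|X_{\mathbf{h}}|>a_{m_n}\mid|X_{\mathbf{0}}|>a_{m_n})\to\gamma(\mathbf{h})$ by regular variation (recall $a_{m_n}\to\infty$) and $m_n^{-1}\to0$; and $\widetilde{\psi}_{\mathbf{h}}(\bm{\omega})\to\psi_{\mathbf{h}}(\bm{\omega})$, since $\widetilde{\psi}_{\mathbf{h}}(\bm{\omega})$ is a Riemann sum of the continuous function $\mathbf{x}\mapsto g(\mathbf{x})\cos(\mathbf{h}^{\intercal}\mathbf{x})$ over $[0,\omega_1]\times[0,\omega_2]$ with mesh $2\pi/n\to0$. (ii) Uniformly in $\mathbf{h}$, $n$ and $\bm{\omega}$, $|\widetilde{\psi}_{\mathbf{h}}(\bm{\omega})|\le(4\pi^2/n^2)j_1j_2\|g\|_\infty\le\omega_1\omega_2\|g\|_\infty\le4\pi^2\|g\|_\infty=:C_g$, and the same bound holds for $|\psi_{\mathbf{h}}(\bm{\omega})|$.

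The crucial step is a uniform tail estimate: I would prove $\epsilon(h_0):=\limsup_{n\to\infty}\sum_{\|\mathbf{h}\|>h_0}|\gamma_n(\mathbf{h})|\to0$ as $h_0\to\infty$. Split the sum at $r_n$. For $h_0<\|\mathbf{h}\|\le r_n$ bound $|\gamma_n(\mathbf{h})|\le m_np_n(\mathbf{h})+m_n^{-1}$: the first contribution is precisely what Assumption~\ref{asm:M1}(a) with $\delta=1$ forces to $0$ in the iterated limit $\lim_{h_0}\limsup_n$, and the second contributes at most $(2r_n+1)^2m_n^{-1}\to0$ since $r_n^4/m_n\to0$. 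For $\|\mathbf{h}\|>r_n$ use that for indicator events $A\in\sigma(X_{\mathbf{0}})$, $B\in\sigma(X_{\mathbf{h}})$ one has $|\P(A\cap B)-\P(A)\P(B)|\le\alpha_{1,1}(\|\mathbf{h}\|)\le\alpha(\|\mathbf{h}\|)$, hence $|\gamma_n(\mathbf{h})|\le m_n\alpha(\|\mathbf{h}\|)$, and $m_n\sum_{\|\mathbf{h}\|>r_n}\alpha(\|\mathbf{h}\|)\to0$ by \eqref{eq:mixingsum00}. Combined with $\gamma_n(\mathbf{h})\to\gamma(\mathbf{h})\ge0$ and Fatou's lemma, the same bounds give $\sum_{\|\mathbf{h}\|>h_0}\gamma(\mathbf{h})\le\epsilon(h_0)$, so $\sum_{\mathbf{h}}\gamma(\mathbf{h})<\infty$ and $\eta(h_0):=\sum_{\|\mathbf{h}\|>h_0}\gamma(\mathbf{h})\to0$. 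To conclude, fix $h_0$ and write
\begin{align*}
\E[\widetilde{J}_n(\bm{\omega})]-J(\bm{\omega})=\sum_{\|\mathbf{h}\|\le h_0}\Big(\tfrac{N_n(\mathbf{h})}{n^2}\gamma_n(\mathbf{h})\widetilde{\psi}_{\mathbf{h}}(\bm{\omega})-\gamma(\mathbf{h})\psi_{\mathbf{h}}(\bm{\omega})\Big)+\sum_{h_0<\|\mathbf{h}\|<n}\tfrac{N_n(\mathbf{h})}{n^2}\gamma_n(\mathbf{h})\widetilde{\psi}_{\mathbf{h}}(\bm{\omega})-\sum_{\|\mathbf{h}\|>h_0}\gamma(\mathbf{h})\psi_{\mathbf{h}}(\bm{\omega})\,.
\end{align*}
The first, finite, sum tends to $0$ as $n\to\infty$ by (i); the middle sum has modulus $\le C_g\sum_{h_0<\|\mathbf{h}\|<n}|\gamma_n(\mathbf{h})|$, whose $\limsup$ in $n$ is $\le C_g\epsilon(h_0)$; the last sum has modulus $\le C_g\eta(h_0)$. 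Therefore $\limsup_{n\to\infty}|\E[\widetilde{J}_n(\bm{\omega})]-J(\bm{\omega})|\le C_g(\epsilon(h_0)+\eta(h_0))$, and letting $h_0\to\infty$ completes the proof.

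The main obstacle — and the reason the informal slogan ``expected periodogram converges to the spectral density'' needs care here — is that $\widetilde{\psi}_{\mathbf{h}}(\bm{\omega})$ does \emph{not} decay as $\|\mathbf{h}\|\to\infty$ (one only has $|\widetilde{\psi}_{\mathbf{h}}(\bm{\omega})|\le C_g$), and the centring term by itself already satisfies $\sum_{\|\mathbf{h}\|<n}m_n^{-1}\asymp n^2/m_n\to\infty$; so convergence cannot be read off any crude absolute bound on the full sum. The fix is to absorb $\widetilde{\psi}_{\mathbf{h}}$ into the constant $C_g$ and to extract summability from $\gamma_n(\mathbf{h})$ itself, splitting at the intermediate scale $r_n$: below $r_n$ the anticlustering condition and $r_n^4/m_n\to0$ suffice, and above $r_n$ the stretched-exponential $\alpha$-mixing rate takes over. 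Arranging this split so that the $m_n^{-1}$ centring contributions are provably negligible is where most of the work lies.
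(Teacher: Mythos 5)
Your proposal is correct and follows essentially the same route as the paper's proof: the same three-way decomposition into a fixed finite head (handled by pointwise convergence of $m_np_n(\mathbf{h})$, $N_n(\mathbf{h})/n^2$ and the Riemann sums $\widetilde{\psi}_{\mathbf{h}}$), a middle range $h_0<\|\mathbf{h}\|\le r_n$ controlled by the anticlustering condition together with $r_n^2/m_n\to0$, and a far tail $\|\mathbf{h}\|>r_n$ controlled by the mixing bound \eqref{eq:mixingsum00}. You are somewhat more careful than the paper on minor points (the edge factor $N_n(\mathbf{h})/n^2$, the replacement of $\widetilde{\psi}_{\mathbf{h}}$ by $\psi_{\mathbf{h}}$ in the head, and deriving $\sum_{\mathbf{h}}\gamma(\mathbf{h})<\infty$ via Fatou rather than assuming it), but these are refinements of the same argument rather than a different one.
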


We impose a smoothness condition on the weight function $g$ involved
in $\psi_{\mathbf{h}}$ and $\widetilde{\psi}_{\mathbf{h}}$. 
\begin{assumption}\label{asm:g-smooth}
The Lipschitz function $g: \Pi^2 \to \mathbb{R}_+$ satisfies 
\begin{align}\label{eq:quadineq}
\big| g( \bm{\omega}) - g(\omega_1, \widetilde{\omega}_2) - g(\widetilde{\omega}_1, \omega_2) + g(\widetilde{\bm \omega}) \big| \le c |\omega_1 - \widetilde{\omega}_1| \, |\omega_2 - \widetilde{\omega}_2| 
\end{align}
for all $\bm{\omega}=(\omega_1,\omega_2)$, $\widetilde{\bm{\omega}} =(\widetilde{\omega}_1, \widetilde{\omega}_2) \in \Pi^2$ and some constant $c>0$.
\end{assumption}
In particular, Assumption~\ref{asm:g-smooth} holds if $g$ is a product of
two Lipschitz continuous functions, i.e., $g(\bm{\omega}) = g_1(\omega_1)
g_2 (\omega_2)$ with $g_1, g_2: \Pi \to \mathbb{R}_+$ Lipschitz continuous.

\begin{theorem}\label{thm:fclt}
Under Assumption~\ref{asm:M1} and Assumption~\ref{asm:g-smooth}, the following weak convergence results hold in the space $\mathbb{C}(\Pi^2)$ of continuous functions on $\Pi^2$,
\begin{align}
\label{eq:fclt}
\frac{n}{\sqrt{m_n}} \big( \widetilde{J}_n - \E[\widetilde{J}_n] \big) & \overset{d}{\to } G \,.
\end{align}
Here, the limiting field $G$ is given by $G = \sum_{\mathbf{h} \in \mathbb{Z}^2} \psi_{\mathbf{h}} Z_{\mathbf{h}}$, where $(Z_{\mathbf{h}})$ is a mean-zero Gaussian random field with the covariance function $\cov(Z_{\mathbf{h}_1}, Z_{\mathbf{h}_2})$ detailed in Appendix B.2 of the Supplementary Material~\citep{niu:2025b}.
\end{theorem}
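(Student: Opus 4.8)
The plan is to prove \eqref{eq:fclt} by the classical two-step program for weak convergence in $\mathbb{C}(\Pi^2)$: convergence of the finite-dimensional distributions, followed by tightness. Throughout I would replace $\widetilde{\psi}_{\mathbf{h}}$ by the continuous kernel $\psi_{\mathbf{h}}$, since $\widetilde{\psi}_{\mathbf{h}}$ is a Riemann sum for $\psi_{\mathbf{h}}$ and the resulting discrepancy is $o(\sqrt{m_n}/n)$ uniformly in $\bm{\omega}\in\Pi^2$ and $\|\mathbf{h}\|<n$, so that the limit field $G=\sum_{\mathbf{h}}\psi_{\mathbf{h}}Z_{\mathbf{h}}$ lives in $\mathbb{C}(\Pi^2)$. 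The first reduction is a linearization: expanding $\widetilde{I}_{\mathbf{t}}\widetilde{I}_{\mathbf{t}+\mathbf{h}}=I_{\mathbf{t}}I_{\mathbf{t}+\mathbf{h}}-p_n(\mathbf{0})(I_{\mathbf{t}}+I_{\mathbf{t}+\mathbf{h}})+p_n(\mathbf{0})^2$ and using $p_n(\mathbf{0})=1/m_n$, one shows that all terms other than $I_{\mathbf{t}}I_{\mathbf{t}+\mathbf{h}}$ contribute $O_P(1/\sqrt{m_n})$ to $\tfrac{n}{\sqrt{m_n}}\big(\widetilde{J}_n(\bm{\omega})-\E[\widetilde{J}_n(\bm{\omega})]\big)$ uniformly in $\bm{\omega}$; this uses $\sum_{\mathbf{h}}|\gamma(\mathbf{h})|<\infty$, the uniform bound $|\psi_{\mathbf{h}}(\bm{\omega})|\le\|g\|_1$, and $\var\big(\sum_{\mathbf{t}\in\Lambda_n^2}I_{\mathbf{t}}\big)=O(n^2/m_n)$, the last of which follows from stationarity together with the anticlustering condition \eqref{eq:anticlustering} and \eqref{eq:mixingsum00}. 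It therefore suffices to prove the FCLT for $\bar{J}_n(\bm{\omega}):=\tfrac{\sqrt{m_n}}{n}\sum_{\|\mathbf{h}\|<n}\psi_{\mathbf{h}}(\bm{\omega})\sum_{\mathbf{t},\mathbf{t}+\mathbf{h}\in\Lambda_n^2}\big(I_{\mathbf{t}}I_{\mathbf{t}+\mathbf{h}}-p_n(\mathbf{h})\big)$.

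Next I would truncate the lag. Fix $K\ge1$, let $\bar{J}_n^{(K)}$ be the same expression with the sum restricted to $\|\mathbf{h}\|\le K$, and show $\limsup_n\E\big[\sup_{\bm{\omega}}|\bar{J}_n(\bm{\omega})-\bar{J}_n^{(K)}(\bm{\omega})|^2\big]\to0$ as $K\to\infty$. Writing the second moment of the tail part as $m_n\sum_{\|\mathbf{h}_1\|,\|\mathbf{h}_2\|>K}\psi_{\mathbf{h}_1}(\bm{\omega})\psi_{\mathbf{h}_2}(\bm{\omega})\cdot\tfrac1{n^2}\sum_{\mathbf{t},\mathbf{s}}\cov(I_{\mathbf{t}}I_{\mathbf{t}+\mathbf{h}_1},I_{\mathbf{s}}I_{\mathbf{s}+\mathbf{h}_2})$ and bounding the covariances by joint-exceedance probabilities controlled through regular variation, \eqref{eq:anticlustering}, and the mixing bound of Assumption~\ref{asm:M1}(b), the tail is dominated by $\big(\sum_{\|\mathbf{h}\|>K}|\psi_{\mathbf{h}}(\bm{\omega})|\,\sqrt{\gamma(\mathbf{h})}\big)^2$ plus a mixing remainder. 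This is where Assumption~\ref{asm:g-smooth} enters: the mixed second-difference bound \eqref{eq:quadineq} is the discrete analogue of a bounded $\partial^2 g/\partial\omega_1\partial\omega_2$, and summation by parts in each coordinate yields $|\psi_{\mathbf{h}}(\bm{\omega})|\le C/\{(1+|h_1|)(1+|h_2|)\}$ uniformly in $\bm{\omega}$, so the tail sum is finite and vanishes as $K\to\infty$ thanks to $\sum_{\mathbf{h}}\gamma(\mathbf{h})<\infty$.

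For fixed $K$, the core of the argument is a CLT for the lattice triangular array: I would show that $\big(\tfrac{\sqrt{m_n}}{n}\sum_{\mathbf{t},\mathbf{t}+\mathbf{h}\in\Lambda_n^2}(I_{\mathbf{t}}I_{\mathbf{t}+\mathbf{h}}-p_n(\mathbf{h}))\big)_{\|\mathbf{h}\|\le K}$ is asymptotically multivariate normal, so that by Cramér--Wold and the continuous mapping theorem the finite-dimensional distributions of $\bar{J}_n^{(K)}$ converge to those of $G^{(K)}:=\sum_{\|\mathbf{h}\|\le K}\psi_{\mathbf{h}}Z_{\mathbf{h}}$. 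The proof is a two-dimensional big-block/small-block decomposition of $\Lambda_n^2$ into square big blocks of side $\ell_n$ with $r_n\ll\ell_n\ll n$, separated by corridors of width $r_n$: Assumption~\ref{asm:M1}(b) together with \eqref{eq:mixingsum00} lets one couple the big-block contributions with independent copies at cost $m_n\alpha(r_n)\to0$; the corridor contribution is negligible by \eqref{eq:anticlustering} and the rate conditions $r_n^4/m_n\to0$ and $nr_n/m_n^{3/2}\to0$; and a Lyapunov condition for the sum of the independent big-block terms follows from a fourth-moment bound obtained again via mixing and \eqref{eq:anticlustering}. The limiting covariance is of the form $\cov(Z_{\mathbf{h}_1},Z_{\mathbf{h}_2})=\sum_{\mathbf{k}\in\mathbb{Z}^2}\lim_n m_n\,\P\big(|X_{\mathbf{0}}|,|X_{\mathbf{h}_1}|,|X_{\mathbf{k}}|,|X_{\mathbf{k}+\mathbf{h}_2}|>a_{m_n}\big)$, which by \eqref{eq:rv1}--\eqref{eq:rv2} is expressible through the tail field $(\Xi_{\mathbf{t}})$; this is the formula recorded in Appendix~B.2. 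Combining this with the truncation step through a standard approximation lemma (Billingsley, Theorem~3.2), using that $G^{(K)}\to G$ almost surely in $\mathbb{C}(\Pi^2)$ by the summability of $\psi_{\mathbf{h}}$ and of $\var(Z_{\mathbf{h}})$ from the previous paragraph, gives convergence of all finite-dimensional distributions of $\tfrac{n}{\sqrt{m_n}}(\widetilde{J}_n-\E[\widetilde{J}_n])$ to those of $G$.

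Finally, for tightness in $\mathbb{C}(\Pi^2)$ I would verify a Kolmogorov--Chentsov moment bound $\E\big|\bar{J}_n(\bm{\omega})-\bar{J}_n(\widetilde{\bm{\omega}})\big|^{4}\le C\,\|\bm{\omega}-\widetilde{\bm{\omega}}\|^{2+\epsilon}$ for some $\epsilon\in(0,2)$, uniformly in $n$: the increment is $\tfrac{\sqrt{m_n}}{n}\sum_{\mathbf{h}}\big(\psi_{\mathbf{h}}(\bm{\omega})-\psi_{\mathbf{h}}(\widetilde{\bm{\omega}})\big)\sum_{\mathbf{t}}(I_{\mathbf{t}}I_{\mathbf{t}+\mathbf{h}}-p_n(\mathbf{h}))$, with $|\psi_{\mathbf{h}}(\bm{\omega})-\psi_{\mathbf{h}}(\widetilde{\bm{\omega}})|\le C\|\bm{\omega}-\widetilde{\bm{\omega}}\|$ (carrying extra $\mathbf{h}$-decay from Assumption~\ref{asm:g-smooth}), combined with a uniform-in-$\mathbf{h}$ fourth-moment bound for $\tfrac{\sqrt{m_n}}{n}\sum_{\mathbf{t}}(I_{\mathbf{t}}I_{\mathbf{t}+\mathbf{h}}-p_n(\mathbf{h}))$ coming from the same blocking estimate used in the CLT step. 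Together with a bound on the oscillation caused by replacing $\widetilde\psi_{\mathbf{h}}$ by $\psi_{\mathbf{h}}$, this yields tightness, and with the finite-dimensional convergence it gives \eqref{eq:fclt}. The main obstacle is the lattice CLT step: making the two-dimensional big-block/small-block argument rigorous under the precise mixing and rate conditions, in particular controlling the corridor remainder and establishing the fourth-moment/Lyapunov bound, is delicate because the summands $I_{\mathbf{t}}I_{\mathbf{t}+\mathbf{h}}$ are rare events of probability of order $1/m_n$ with strong short-range dependence through extremal clustering; the second, subsidiary difficulty is extracting the uniform decay $|\psi_{\mathbf{h}}(\bm{\omega})|\le C/\{(1+|h_1|)(1+|h_2|)\}$ from Assumption~\ref{asm:g-smooth}, which is what makes both the truncation and the tightness bounds summable in $\mathbf{h}$.
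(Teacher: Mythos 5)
Your skeleton (finite-dimensional convergence plus tightness, truncation in the lag $\mathbf{h}$, and a separate argument that replacing $\widetilde{\psi}_{\mathbf{h}}$ by $\psi_{\mathbf{h}}$ is negligible) matches the paper's three-part proof, and your reduction to $J_n(\bm{\omega})=\sum_{\|\mathbf{h}\|<n}\widetilde{\gamma}(\mathbf{h})\psi_{\mathbf{h}}(\bm{\omega})$ is exactly Part 3 of the paper. Two of your choices are legitimate but more laborious detours: you re-derive the joint CLT for $\big(\tfrac{n}{\sqrt{m_n}}(\widetilde{\gamma}(\mathbf{h})-\E[\widetilde{\gamma}(\mathbf{h})])\big)_{\|\mathbf{h}\|\le K}$ by a big-block/small-block construction, whereas the paper simply invokes Theorem 1 of \citet{cho:davis:ghosh:2016} (restated as Proposition~1 in the supplement); and you aim for tightness in $\bm{\omega}$ via a Kolmogorov--Chentsov fourth-moment bound, whereas the paper uses Billingsley's $L^2$-based maximal inequality (Theorem 10.2) together with the mixed second-difference bound of Assumption~\ref{asm:g-smooth}, precisely so that only second moments of $\widetilde{\gamma}(\mathbf{h})-\E[\widetilde{\gamma}(\mathbf{h})]$ are ever needed. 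Your fourth-moment route requires a uniform-in-$n$ bound on $\E\big|\tfrac{n}{\sqrt{m_n}}(\widetilde{\gamma}(\mathbf{h})-\E[\widetilde{\gamma}(\mathbf{h})])\big|^4$ for sums of rare-event indicators, which is a substantial additional computation you do not supply.

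The genuine gap is in your truncation step. You claim the tail second moment is dominated by $\big(\sum_{\|\mathbf{h}\|>K}|\psi_{\mathbf{h}}(\bm{\omega})|\sqrt{\gamma(\mathbf{h})}\big)^2$, which presupposes a per-lag variance bound of the form $\var\big(\tfrac{n}{\sqrt{m_n}}\widetilde{\gamma}(\mathbf{h})\big)\le C\gamma(\mathbf{h})$ uniformly in $n$. That does not follow from Assumption~\ref{asm:M1}: the variance equals $m_n\sum_{\mathbf{s}}\cov(\widetilde{I}_{\mathbf{0}}\widetilde{I}_{\mathbf{h}},\widetilde{I}_{\mathbf{s}}\widetilde{I}_{\mathbf{s}+\mathbf{h}})$, and the four-point exceedance probabilities in the range $0<\|\mathbf{s}\|\le r_n$ can only be bounded either by $p_n(\mathbf{h})$ (giving an $r_n^2\gamma(\mathbf{h})$ blow-up) or by $p_n(\mathbf{s})$ (giving boundedness but no decay in $\mathbf{h}$); the anticlustering condition \eqref{eq:anticlustering} controls only the aggregate $m_n\sum_{h<\|\mathbf{s}\|\le r_n}p_n(\mathbf{s})$, not a pointwise rate. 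Moreover the decay $|\psi_{\mathbf{h}}(\bm{\omega})|\le C/\{(1+|h_1|)(1+|h_2|)\}$ you extract from Assumption~\ref{asm:g-smooth} is not summable over $\mathbb{Z}^2$ on its own and degrades to single-power decay on the axes $h_1=0$ or $h_2=0$, so the product with $\sqrt{\gamma(\mathbf{h})}$ carries the whole burden. The paper avoids all of this: it needs no decay of $\psi_{\mathbf{h}}$ in $\mathbf{h}$ at all. Instead it groups the lags into dyadic blocks $2^{q_i}\le h_i<2^{q_i+1}$, bounds the variance of the \emph{block-aggregated} sum by $2^{2q_1+2q_2}K_{hn}$ with $K_{hn}=m_n\sum_{\|\mathbf{h}\|>r_n}\alpha(\|\mathbf{h}\|)+m_n\sum_{h<\|\mathbf{h}\|\le r_n}p_n(\mathbf{h})+cm_n^{-1}\to0$, and then chains over $\bm{\omega}$ and over the grid indices with the maximal inequality, the factor $\prod_i 2^{4q_i(\kappa^{-1}-1/2)}$ being summable for $\kappa>2$. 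The insight that the smallness of the $\mathbf{h}$-tail comes from the anticlustering condition applied to block aggregates, rather than from any individual decay of $\psi_{\mathbf{h}}$ or of $\var(Z_{\mathbf{h}})$, is the missing idea; without it your truncation argument stalls at exactly the point you flag as the "subsidiary difficulty."
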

Theorem~\ref{thm:fclt} reveals that the extremal integrated
periodogram $\widetilde{J}_n$ converges weakly to a Gaussian random
field $G$. An application of the continuous mapping theorem yields the limiting distributions
\begin{align}\label{eq:limitgrs}
T_n \overset{d}{\to } \sup_{\bm{\omega} \in \Pi^2} \big| G(\bm{\omega})\big|\,.
\end{align}

\section{The bootstrap procedure}\label{sec:bootstrap}

\subsection{The stationary bootstrap algorithm for random fields}\label{subsec:bootalgo}
We will develop a bootstrap algorithm for a random field
$(X_{\mathbf{t}})_{\mathbf{t}\in \mathbb{Z}^2}$ that is observed on the lattice $\Lambda_n^2$.  
Suppose that the dependence between $X_{\mathbf{t}}$ and
$X_{\mathbf{s}}$ with $\mathbf{t}=(t_1,t_2)$ and
$\mathbf{s}=(s_1,s_2)$ depends solely on the horizontal difference
$t_1 - s_1$ and the vertical difference $t_2 - s_2$. The stationary
bootstrap algorithm for strictly stationary time series is developed
in \citet{politis:romano:1994} as a 
variant of the block bootstrap method, which produces a conditionally
strictly stationary time series. To adapt this to the two-dimensional integer
lattice setting, we apply it twice to $\Lambda_n^2$ along the horizontal
and vertical directions. 

\begin{enumerate}[(a) ]
\item \textbf{Horizontal bootstrap:} For $i=1,\ldots,n$, define row
  vectors $Y_i = ((i,1),\ldots,(i,n))$. Generate i.i.d. sequences
  $(H_{1i})$ uniformly distributed on $\{1,\ldots,n\}$ and $(L_{1i})$
  geometrically distributed with parameter $\theta=\theta_n\to 0$. Let $N_1 =
  \inf\{k: \sum_{i=1}^k L_{1i} \geq n\}$. Concatenate blocks
  $\{Y_{H_{1j}},\ldots,Y_{H_{1j}+L_{1j}-1}\}$ for $j=1,\ldots,N_1$ (with
  circular extension), and take the first $n$ elements as
  $(Y_{t^{\star}})_{t=1}^n$. 

\item \textbf{Vertical bootstrap:} For $i=1,\ldots,n$, define column
  vectors $Y_i^{\star} = ((1^{\star},i),\ldots,(n^{\star},i))$. Repeat the procedure
  in (a) with new independent sequences $(H_{2i})$ and $(L_{2i})$ to
  obtain the bootstrapped sample $(Y_{t^{\star}}^{\star})_{t=1}^n$, which is
  exactly $\Lambda_n^{2\star}$. 

\item \textbf{Bootstrapped sample:} The bootstrapped field for
$(X_{\mathbf{t}})_{\mathbf{t} \in \Lambda_n^2}$ is $(X_{\mathbf{t}})_{\mathbf{t} \in \Lambda_n^{2\star}}$, or equivalently
$(X_{\mathbf{t}^{\star}})_{\mathbf{t} \in \Lambda_n^2}$.
\end{enumerate}

Besides a strictly stationary isotropic random field
$(X_{\mathbf{t}})$, we can apply the above algorithm to a Gaussian
sheet $(W_{\mathbf{t}})_{\mathbf{t} \in \Lambda_n^2}$, whose covariance
function $\rho (\mathbf{h})$ is a production of two autocovariance
functions, i.e., $\rho(\mathbf{h})= \gamma_1(h_1)
\gamma_2(h)$ for autocovariance functions $\gamma_1, \gamma_2: \mathbb{Z} \to \mathbb{R}$ of some
time series.

\subsection{The bootstrapped extremogram and its asymptotic
  distribution}

Recall the extremal indicator
function $I_{\mathbf{t}} = \mathbf{1}\big(|X_{\mathbf{t}}| >a_{m_n}
\big)$ and define $I_{\mathbf{t}}(\mathbf{h}) = I_{\mathbf{t}}
I_{\mathbf{t} + \mathbf{h}}$ for $\mathbf{t}, \mathbf{h} \in
\mathbb{Z}^2$. Here, we adopt the convention that $t_i$ in $\mathbf{t} =(t_1,
t_2)$ is taken modulo $n$ if $|t_i|>n$. When $\mathbf{h} =\mathbf{0}$,
$I_{\mathbf{t}}(\mathbf{0}) = I_{\mathbf{t}}$. Since $(X_{\mathbf{t}})$ is
strictly stationary and isotropic, the
random fields $(I_{\mathbf{t}})_{\mathbf{t} \in \Lambda_n^2}$ and
$(I_{\mathbf{t}}(\mathbf{h}))_{\mathbf{t} \in \Lambda_n^2}$ are strictly
stationary and isotropic for fixed $n$ and fixed $\mathbf{h} \in \mathbb{Z}^2$. Applying the
stationary bootstrap algorithm in Section~\ref{subsec:bootalgo} to
these two fields yields the bootstrapped index set $\Lambda_n^{2\star}$, and
consequently, we obtain the bootstrapped random field $\big(
I_{\mathbf{t}^{\star}}(\mathbf{h}) \big)_{\mathbf{t} \in \Lambda_n^2}$ with $\mathbf{t}^{\star} \in \Lambda_n^{2\star}$. 

We define the extremogram based on $(I_{\mathbf{t}^{\star}}(\mathbf{h}))$, named the {\em bootstrapped extremogram}, by  
\begin{align*}
\widehat{\gamma}^{\star}(\mathbf{h}) = \frac{m_n}{n^2} \sum_{\mathbf{t} \in \Lambda_n^2}I_{\mathbf{t}^{\star}}(\mathbf{h}) \,.
\end{align*}
Let $\P^{\star}(\cdot) = \P(\cdot \mid (X_{\mathbf{t}}))$ denote the probability
measure generated by the bootstrap procedure. The corresponding expectation,
variance and covariance are denoted by 
$\E^{\star}$, $\var^{\star}$, and $\cov^{\star}$, respectively. Due to the
conditional stationarity,
$I_{\mathbf{t}^{\star}}(\mathbf{h}) \overset{d}{=}
I_{\mathbf{1}^{\star}}(\mathbf{h})$ for all $\mathbf{t}$; additionally,
$\mathbf{1}^{\star} = (H_{11}, H_{21})$ is uniformly distributed on
$\Lambda_n^2$. Hence, 
\begin{align*}
\E^{\star} \big[ \widehat{\gamma}^{\star}(\mathbf{h} ) \big] 
= \frac{m_n}{n^2} \sum_{\mathbf{t} \in \Lambda_n^2} \E^{\star}\big[ I_{\mathbf{t}^{\star}}(\mathbf{h}) \big] 
= \frac{m_n}{n^2} \sum_{\mathbf{t} \in \Lambda_n^2} \frac{1}{n^2} \sum_{\mathbf{s} \in \Lambda_n^2} I_{\mathbf{s}}(\mathbf{h}) 
= m_n C_n(\mathbf{h})\,, 
\end{align*}
where $C_n(\mathbf{h}) =  \frac{1}{n^2} \sum_{\mathbf{t} \in \Lambda_n^2}
I_{\mathbf{t}}(\mathbf{h})$. Due to the isotropic property, we have
$C_n(\mathbf{h}) =C_n(h_1, -h_2) = C_n(-h_1, h_2) = C_n(-\mathbf{h})$. Without loss of generality, assume
$\mathbf{h}$ has non-negative components. The law of total probability
yields that 
\begin{align*}
\E[C_n(\mathbf{h})] = \frac{\prod_{i=1}^2 (n-h_i)}{n^2} p_n(\mathbf{h}) +  \frac{(n-h_1)h_2}{n^2} p_n(h_1, n-h_2) + \frac{h_1 h_2}{n^2} p_n(n-h_1, n-h_2)\,,
\end{align*}
and thus, $\E \big[ \E^{\star} \big[ \widehat{\gamma}^{\star}(\mathbf{h} ) \big]
\big]=m_n\E[C_n(\mathbf{h})] =m_np_n(\mathbf{h}) + o(1) \to
\gamma(\mathbf{h})$ as $n\to \infty$. 

\begin{theorem}\label{thm:bootstrapclt}
Suppose that the conditions of Theorem~\ref{thm:fclt} hold. If the
parameter $\theta$ satisfies $\theta =
\theta_n = r_n/m_n$,  
\begin{align} \label{eq:thetacondition}
n^2\theta^3 \to 0 \quad \text{and} \quad \frac{m_n^{3/2}}{n \theta^4} \sum_{\|\mathbf{h} \|>r_n} \alpha(\| \mathbf{h} \|) \to 0 \quad \text{as } n\to \infty\,,
\end{align}
then for any finite set $A\subset \mathbb{Z}^2$,
\begin{align} \label{eq:bootclt}
d \Big( \frac{n}{\sqrt{m_n}} \big(\widehat{\gamma}^{\star}(\mathbf{i}) - m_n C_n(\mathbf{i}) \big)_{\mathbf{i} \in A}, (Z_{\mathbf{i}})_{\mathbf{i} \in A} \Big) \overset{\P }{\to } 0 \quad \text{as } n\to \infty\,, 
\end{align}
where $d$ is any metric describing weak convergence in an Euclidean space, and $(Z_{\mathbf{i}})_{\mathbf{i} \in  A} \sim N(0, \Sigma_{A})$ with
$\Sigma_{A}$ as defined in Theorem 1 of \citet{cho:davis:ghosh:2016}.  
\end{theorem}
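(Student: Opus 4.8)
The plan is to establish the conditional central limit theorem for the bootstrapped extremogram by decomposing the bootstrapped sum over rows and columns into block contributions, verifying a conditional Lyapunov/Lindeberg condition, and matching the conditional covariance to the population covariance $\Sigma_A$ in \citet{cho:davis:ghosh:2016}. First I would fix a finite set $A \subset \mathbb{Z}^2$ and work with the centered, normalized vector $V_n^{\star} = (n/\sqrt{m_n})(\widehat{\gamma}^{\star}(\mathbf{i}) - m_n C_n(\mathbf{i}))_{\mathbf{i} \in A}$. Because the two-dimensional stationary bootstrap is the composition of a horizontal row-bootstrap and an independent vertical column-bootstrap, I would condition on the horizontal resampling and write $V_n^{\star}$ as a sum over the (random, geometric-length) vertical blocks of i.i.d.\ (given the data and the horizontal step) contributions, then repeat the argument one level down for the horizontal blocks. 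The number of blocks in each direction is of order $n\theta = n r_n/m_n$, and each block, after centering, has conditional variance of order $\theta^{-1}$ times the per-site variance, so the normalization $n/\sqrt{m_n}$ is the correct one; this is the two-dimensional analogue of the bookkeeping in \citet{politis:romano:1994} combined with the extremogram scaling of \citet{davis:mikosch:2009}.

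The key steps, in order, are: (i) show $\E^{\star}[V_n^{\star}] = 0$ by the conditional-stationarity identity $\E^{\star}[I_{\mathbf{t}^{\star}}(\mathbf{h})] = C_n(\mathbf{h})$ already derived above, so the centering $m_n C_n(\mathbf{i})$ is exact under $\P^{\star}$; (ii) compute $\var^{\star}(V_n^{\star})$ and show it converges in probability to $\Sigma_A$ — this requires expanding the double sum over $\mathbf{t},\mathbf{t}'$ into within-block and cross-block terms, using the geometric block-length law to turn the block structure into a weighting by $(1-\theta)^{\|\cdot\|}$-type kernels, and then invoking a law of large numbers for the data-dependent quantities $C_n(\mathbf{h})$ and their products, whose expectations converge to the extremogram by the computation of $\E[C_n(\mathbf{h})]$ above together with the anticlustering condition \eqref{eq:anticlustering}; the conditions $n^2\theta^3 \to 0$ and the mixing-tail condition in \eqref{eq:thetacondition} are exactly what kill the cross-block covariances and the boundary/wrap-around terms, the latter controlled via \eqref{eq:mixingsum00} rescaled by $m_n^{3/2}/(n\theta^4)$; (iii) verify a conditional Lindeberg condition for the triangular array of block sums, which follows because each $I_{\mathbf{t}^{\star}}(\mathbf{h})$ is bounded by $1$ and a typical block contributes $O_{\P}(\theta^{-1})$ sites while the normalized block variance is $O_{\P}(1)$ over $O(n\theta)$ blocks, so individual block contributions are $O_{\P}((n\theta)^{-1/2}) = o_{\P}(1)$ uniformly — boundedness makes Lyapunov's condition with a second-plus-$\epsilon$ moment automatic; (iv) apply a CLT for sums of (conditionally) independent random vectors with a random number of summands (Anscombe-type, using $N_1/(n\theta) \to 1$ in probability by the law of large numbers for the geometric $L_{1i}$), first along columns and then along rows, to conclude $d(V_n^{\star}, N(0,\Sigma_A)) \overset{\P}{\to} 0$.

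The main obstacle I expect is step (ii): controlling the conditional covariance and showing it concentrates at $\Sigma_A$ rather than at a random limit. The subtlety is that $V_n^{\star}$ is built from the data-dependent statistics $C_n(\mathbf{h})$ and pairwise sums $n^{-2}\sum_{\mathbf{t}} I_{\mathbf{t}}(\mathbf{i}) I_{\mathbf{t}+\mathbf{h}}(\mathbf{j})$, so one must prove these converge in probability (not just in expectation) to the deterministic extremal quantities; this is where the $\alpha$-mixing bound $\alpha(h) \le K\rho^{h^\tau}$ from Assumption~\ref{asm:M1}(b), the moment bound $m_n^2 \var(C_n(\mathbf{h})) \to 0$, and the precise rate conditions $r_n^4/m_n \to 0$ and $n r_n/m_n^{3/2}\to 0$ all get used — essentially one reruns the variance estimates behind Theorem~\ref{thm:fclt} (Appendix B.2 of \citet{niu:2025b}) but now inside the bootstrap world, checking that the extra randomness from the resampling does not inflate the variance beyond the claimed order under \eqref{eq:thetacondition}. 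A secondary technical nuisance is the modulo-$n$ wrap-around convention, which creates $O(\theta)$-fraction boundary blocks whose contribution must be shown negligible; these are handled by the same tail-sum bounds on $\alpha$ and the condition $n^2\theta^3 \to 0$.
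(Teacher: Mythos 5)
Your overall architecture matches the paper's: reduce to the finite set $A$ (the paper does this via Cram\'er--Wold on linear combinations), decompose the bootstrapped sum into row-by-column blocks with geometric lengths, dispose of the wrap-around/overshoot blocks, replace the random block counts $N_1,N_2$ by deterministic $k_{jn}\sim n\theta$ via an Anscombe-type argument, and finish with a CLT for the conditionally i.i.d.\ block sums. Your step (ii) (showing $\var^{\star}$ concentrates at $\Sigma_A$) is a legitimate and necessary ingredient that the paper in fact defers to the covariance computation in the proof of Theorem 5.

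The genuine gap is in your step (iii). You assert that because each indicator is bounded by $1$, ``boundedness makes Lyapunov's condition with a second-plus-$\epsilon$ moment automatic.'' It is not automatic: a single block sum is $\sum_{t_1=H_{1i}}^{H_{1i}+L_{1i}-1}\sum_{t_2=H_{2j}}^{H_{2j}+L_{2j}-1}\widetilde{I}_{\mathbf{t}}(\mathbf{h}_1,\mathbf{h}_2)$, which contains a \emph{random}, geometrically distributed number $L_{1i}L_{2j}$ of terms, so its conditional third absolute moment is controlled only after integrating $l_1^3 l_2^3$ against the geometric law, which produces a factor of order $\theta^{-6}$. The paper's proof splits the resulting six-fold sum over index triples into the regimes $A_1,A_2,A_3$ (according to whether the points are within distance $r_n$ of each other), bounds each via Lemma~\ref{lem:mixing}-type inequalities, and arrives at a Lyapunov sum of order $m_n^{1/2}r_n^4/n + \big(m_n^{3/2}/(n\theta^4)\big)\sum_{\|\mathbf{h}\|>r_n}\alpha(\|\mathbf{h}\|)$. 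In other words, the second condition in \eqref{eq:thetacondition} exists \emph{precisely} to make the Lyapunov condition hold; if your ``automatic'' claim were correct, that hypothesis would be superfluous. Relatedly, you misattribute where the two rate conditions act: $n^2\theta^3\to 0$ is used to show that replacing the data-dependent centering $C_n(\mathbf{h})$ by $p_n(\mathbf{h})$ across all $k_{1n}^2k_{2n}^2$ block pairs is negligible (via $m_n\E[(C_n(\mathbf{h})-p_n(\mathbf{h}))^2]=O(r_n^2/n^2)$ and $\theta=r_n/m_n$), not to kill cross-block covariances; and the $m_n^{3/2}/(n\theta^4)$ condition enters the third-moment bound, not the boundary terms (those are handled by the overshoot lemma using only $O((n\theta)^{-1})+O(r_n^2(n\theta)^{-2})\to 0$). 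To repair the proposal you would need to carry out the explicit conditional third-moment computation for a generic block, integrated against the geometric block-length distribution.
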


Theorem~\ref{thm:bootstrapclt} presents the pre-asymptotic central limit theorem for the bootstrapped extremogram $\widehat{\gamma}^{\star}$. The parameter $\theta = \theta_n$ determines the efficiency of the stationary
bootstrap algorithm. By choosing $m_n =
n^{\eta}$ and $r_n = \lfloor C\log n \rfloor$ for some $\eta\in (2/3,
1)$ and $C >0$, it is easy to verify that Assumption~\ref{asm:M1}
holds and $\theta_n$ satisfies \eqref{eq:thetacondition}.

\subsection{The bootstrapped extremal integrated periodogram and its
  asymptotic property}
We provide the definition of the {\em bootstrapped extremal integrated
periodogram}, $\widehat{J}_n^{\star}(\bm{\omega})$ for $\bm{\omega} \in
\Pi^2$, along with its asymptotic properties. In the classical time series analysis, it is common to study the {\em
  smoothed periodogram} that is a consistent estimator for the spectral density; see
e.g. \citet{brockwell:davis:1991}
for example. One way to smooth the periodogram is by truncation.  
Following this idea, we propose the {\em truncated extremal
  periodogram} for $(X_{\mathbf{t}})$, 
\begin{align*}
\widehat{f}(\bm{\omega}) = \sum_{\|\mathbf{h} \| \le r_n}
  \widehat{\gamma}(\mathbf{h}) \cos(\mathbf{h}^{\intercal} \bm{\omega})\,, \quad \bm{\omega}
  \in \Pi^2\,,  
\end{align*}
where $\widehat{\gamma} (\mathbf{h}) = (m_n/n^2) \sum_{\mathbf{t} \in \Lambda_n^2} I_{\mathbf{t}}(\mathbf{h})$. The {\em truncated extremal
  integrated periodogram} is given by 
\begin{align*}
\widehat{J}_n (\bm{\omega}) &=  \sum_{\| \mathbf{h} \| \le r_n}  \widehat{\gamma}(\mathbf{h}) \widetilde{\psi}_{\mathbf{h}}(\bm{\omega})\,.
\end{align*}
By replacing $\widehat{\gamma}(\mathbf{h})$ in $\widehat{J}_n (\bm{\omega})$ with its bootstrapped counterpart
$\widehat{\gamma}^{\star}(\mathbf{h})$, we obtain the {\em bootstrapped extremal integrated periodogram} 
\begin{align*}
\widehat{J}_n^{\star} (\bm{\omega}) = \sum_{\|\mathbf{h} \|\le r_n}
  \widehat{\gamma}^{\star}(\mathbf{h}) \widetilde{\psi}_{\mathbf{h}}(\bm{\omega})\,,
  \quad \bm{\omega} \in \Pi^2\,. 
\end{align*}

The following results establish the uniform convergence in probability of 
the bootstrapped extremal integrated periodogram.  
\begin{theorem}\label{thm:bootintconst}
Under the conditions of Theorem~\ref{thm:bootstrapclt}, the following result holds:  
\begin{align*}
\sup_{\bm{\omega} \in \Pi^2} \left| \E^{\star} \left[\widehat{J}_n^{\star}(\bm{\omega}) \right] - J(\bm{\omega}) \right|  \overset{\P}{\to } 0\,. 
\end{align*}
\end{theorem}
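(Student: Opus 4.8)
The first observation is that the quantity inside the supremum is non-random given the data. Since $\widetilde{\psi}_{\mathbf{h}}$ does not involve the bootstrap randomness and $\E^{\star}[\widehat{\gamma}^{\star}(\mathbf{h})] = m_n C_n(\mathbf{h}) = \widehat{\gamma}(\mathbf{h})$, we get $\E^{\star}[\widehat{J}_n^{\star}(\bm{\omega})] = \sum_{\|\mathbf{h}\| \le r_n} \widehat{\gamma}(\mathbf{h}) \widetilde{\psi}_{\mathbf{h}}(\bm{\omega}) = \widehat{J}_n(\bm{\omega})$, the truncated extremal integrated periodogram. Hence it suffices to prove the uniform consistency $\sup_{\bm{\omega} \in \Pi^2} |\widehat{J}_n(\bm{\omega}) - J(\bm{\omega})| \overset{\P}{\to} 0$, and I would do so through the decomposition
\begin{align*}
\widehat{J}_n(\bm{\omega}) - J(\bm{\omega}) = \sum_{\|\mathbf{h}\| \le r_n} \big( \widehat{\gamma}(\mathbf{h}) - \gamma(\mathbf{h}) \big) \widetilde{\psi}_{\mathbf{h}}(\bm{\omega}) + \sum_{\|\mathbf{h}\| \le r_n} \gamma(\mathbf{h}) \big( \widetilde{\psi}_{\mathbf{h}}(\bm{\omega}) - \psi_{\mathbf{h}}(\bm{\omega}) \big) - \sum_{\|\mathbf{h}\| > r_n} \gamma(\mathbf{h}) \psi_{\mathbf{h}}(\bm{\omega})\,,
\end{align*}
denoting the three terms $A_n(\bm{\omega})$, $B_n(\bm{\omega})$, $R_n(\bm{\omega})$. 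Throughout I would use the crude uniform bounds $\sup_{\mathbf{h}, \bm{\omega}} |\psi_{\mathbf{h}}(\bm{\omega})| \le \int_{\Pi^2} |g| < \infty$ and $\sup_{n, \mathbf{h}, \bm{\omega}} |\widetilde{\psi}_{\mathbf{h}}(\bm{\omega})| \le 4\pi^2 \|g\|_\infty$, which hold because $g$ is bounded and $\Pi^2$ is bounded, and which conveniently absorb the $\sup_{\bm{\omega}}$.

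The two deterministic pieces are easy. For the tail term, $\sup_{\bm{\omega}} |R_n(\bm{\omega})| \le \big( \int_{\Pi^2} |g| \big) \sum_{\|\mathbf{h}\| > r_n} |\gamma(\mathbf{h})| \to 0$ by the summability \eqref{eq:finiteextremo} and $r_n \to \infty$. For the discretisation term, a two-dimensional Riemann-sum error estimate for the integrand $g(\mathbf{x}) \cos(\mathbf{h}^{\intercal} \mathbf{x})$ over the $n \times n$ grid — using the Lipschitz continuity of $g$ (Assumption~\ref{asm:g-smooth}), the gradient bound $|\nabla \cos(\mathbf{h}^{\intercal} \mathbf{x})| \le \|\mathbf{h}\|$, and the fact that $[0,\omega_1] \times [0,\omega_2]$ differs from its grid approximation by a boundary strip of width $O(1/n)$ (since $j_i = \lfloor n \omega_i / (2\pi) \rfloor$) — yields $\sup_{\bm{\omega}} |\widetilde{\psi}_{\mathbf{h}}(\bm{\omega}) - \psi_{\mathbf{h}}(\bm{\omega})| \le C (1 + \|\mathbf{h}\|)/n$. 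Bounding $\|\mathbf{h}\| \le r_n$ then gives $\sup_{\bm{\omega}} |B_n(\bm{\omega})| \le (C/n)(1 + r_n) \sum_{\mathbf{h} \in \mathbb{Z}^2} |\gamma(\mathbf{h})| = O(r_n/n) \to 0$, since $r_n^4/m_n \to 0$ and $m_n/n \to 0$ force $r_n/n \to 0$.

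The term $A_n$ carries the probabilistic content, and this is the step I expect to be the real work. The uniform bound on $\widetilde{\psi}_{\mathbf{h}}$ gives $\sup_{\bm{\omega}} |A_n(\bm{\omega})| \le 4\pi^2 \|g\|_\infty \sum_{\|\mathbf{h}\| \le r_n} |\widehat{\gamma}(\mathbf{h}) - \gamma(\mathbf{h})|$, so it is enough to show $\E \big[ \sum_{\|\mathbf{h}\| \le r_n} |\widehat{\gamma}(\mathbf{h}) - \gamma(\mathbf{h})| \big] \to 0$ and then invoke Markov's inequality. Split $|\widehat{\gamma}(\mathbf{h}) - \gamma(\mathbf{h})| \le |\widehat{\gamma}(\mathbf{h}) - \E \widehat{\gamma}(\mathbf{h})| + |\E \widehat{\gamma}(\mathbf{h}) - \gamma(\mathbf{h})|$. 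For the bias part I would use the explicit formula $\E \widehat{\gamma}(\mathbf{h}) = m_n \E[C_n(\mathbf{h})] = m_n p_n(\mathbf{h})\big(1 + O(\|\mathbf{h}\|/n)\big) + (\text{wrap-around terms})$; each wrap-around probability has one coordinate of magnitude $\ge n - r_n \ge r_n$, so by $\P(|X_{\mathbf{0}}| > a_{m_n}, |X_{\mathbf{k}}| > a_{m_n}) \le \alpha(\|\mathbf{k}\|) + \P(|X| > a_{m_n})^2$ and Assumption~\ref{asm:M1}(b) the corresponding $m_n p_n(\cdot)$ is $\le m_n \alpha(r_n) + 1/m_n \to 0$, while its prefactor is $O(r_n/n)$ or $O(r_n^2/n^2)$; summing over the $O(r_n^2)$ lattice sites and using $r_n^4/n^2 \to 0$ kills these, and the $O(\|\mathbf{h}\|/n)$ correction sums to $O(r_n/n)\sum_{\|\mathbf{h}\|\le r_n} m_n p_n(\mathbf{h}) = O(r_n/n)\to 0$. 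The leading contribution $\sum_{\|\mathbf{h}\| \le r_n} |m_n p_n(\mathbf{h}) - \gamma(\mathbf{h})|$ tends to $0$ by the standard truncation argument: for fixed $h_0$ the sum over $\|\mathbf{h}\| \le h_0$ vanishes by pointwise convergence $m_n p_n(\mathbf{h}) \to \gamma(\mathbf{h})$, and the sum over $h_0 < \|\mathbf{h}\| \le r_n$ is at most $\sum_{\|\mathbf{h}\| > h_0} |\gamma(\mathbf{h})|$ plus $\limsup_n m_n \sum_{h_0 < \|\mathbf{h}\| \le r_n} p_n(\mathbf{h})$, both made arbitrarily small as $h_0 \to \infty$ by \eqref{eq:finiteextremo} and the anti-clustering condition \eqref{eq:anticlustering} with $\delta = 1$. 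For the stochastic part, $\E |\widehat{\gamma}(\mathbf{h}) - \E \widehat{\gamma}(\mathbf{h})| \le \var(\widehat{\gamma}(\mathbf{h}))^{1/2}$, and I would import the estimate $\var(\widehat{\gamma}(\mathbf{h})) = O(m_n/n^2)$ uniformly over $\|\mathbf{h}\| \le r_n$ — this is the extremogram central limit rate and is produced inside the proof of Theorem~\ref{thm:bootstrapclt} (cf. Theorem~1 of \citet{cho:davis:ghosh:2016}) by splitting the covariance sum $\sum_{\mathbf{k}} |\cov(I_{\mathbf{0}}(\mathbf{h}), I_{\mathbf{k}}(\mathbf{h}))|$ into a short-range part bounded via \eqref{eq:anticlustering} and a long-range part bounded via $\alpha$-mixing. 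Summing, $\sum_{\|\mathbf{h}\| \le r_n} \var(\widehat{\gamma}(\mathbf{h}))^{1/2} = O(r_n^2 \sqrt{m_n}/n) = o(m_n/n) = o(1)$, using $r_n^2 = o(\sqrt{m_n})$ (from $r_n^4/m_n \to 0$) and $m_n/n \to 0$. This closes the argument. The main obstacle is precisely that uniform-in-$\mathbf{h}$ variance bound: the clustering part of the covariance sum must be absorbed by \eqref{eq:anticlustering} rather than by the crude estimate $p_n(\mathbf{h}) \le 1/m_n$, and it is this, together with the handling of the $O(r_n^2)$ lattice sites, that pins down the rate conditions imposed on $(m_n, r_n)$ in Assumption~\ref{asm:M1} and \eqref{eq:thetacondition}.
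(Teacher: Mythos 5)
Your proposal is correct and follows essentially the same route as the paper: the identity $\E^{\star}[\widehat{J}_n^{\star}]=\sum_{\|\mathbf{h}\|\le r_n} m_nC_n(\mathbf{h})\widetilde{\psi}_{\mathbf{h}}$ followed by the same three-way split into a stochastic fluctuation term, a Riemann-discretization term $\widetilde{\psi}_{\mathbf{h}}-\psi_{\mathbf{h}}=O(\|\mathbf{h}\|/n)$, and a pre-asymptotic bias/tail term handled by truncation, anticlustering and summability of $\gamma$. The only cosmetic difference is that the paper controls the stochastic term by centering at $m_np_n(\mathbf{h})$ and invoking its Lemma on $\E[(C_n(\mathbf{h})-p_n(\mathbf{h}))^2]$, whereas you center at $\gamma(\mathbf{h})$ and derive the (slightly sharper) uniform variance bound directly; both are adequate under Assumption~\ref{asm:M1}.
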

The following functional central limit theorem for $\widehat{J}_n^{\star}$
is necessary in deriving the asymptotic distribution of the
bootstrapped test statistic $T_n^{\star}$. 
\begin{theorem}\label{thm:cltbootintperiodo}
Assume that the conditions of Theorem~\ref{thm:bootstrapclt}
hold. Then we have  
\begin{align*}
d \left(  \frac{n}{\sqrt{m_n}} \left( \widehat{J}_n^{\star} -
  \E^{\star}[\widehat{J}_n^{\star}] \right), G \right) \overset{\P}{\to } 0\,,  
\end{align*}
where $G$ is the same as in \eqref{eq:fclt}.
\end{theorem}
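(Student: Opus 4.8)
The plan is to prove conditional weak convergence of $(n/\sqrt{m_n})\bigl(\widehat{J}_n^{\star} - \E^{\star}[\widehat{J}_n^{\star}]\bigr)$ to $G$ in $\mathbb{C}(\Pi^2)$, in probability, by the usual two-step route: (i) convergence in probability of the conditional finite-dimensional distributions to those of $G$, and (ii) conditional asymptotic tightness in probability; these combine (via the standard subsequence argument: along any subsequence one extracts a further subsequence on which conditional fidi convergence and conditional tightness hold almost surely, hence conditional weak convergence to the law of $G$ holds a.s.) to give the metric statement. Write $\zeta_n^{\star}(\mathbf{h}) = (n/\sqrt{m_n})\bigl(\widehat{\gamma}^{\star}(\mathbf{h}) - m_n C_n(\mathbf{h})\bigr)$. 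Since $\E^{\star}[\widehat{\gamma}^{\star}(\mathbf{h})] = m_n C_n(\mathbf{h})$,
\[
\frac{n}{\sqrt{m_n}}\bigl(\widehat{J}_n^{\star}(\bm{\omega}) - \E^{\star}[\widehat{J}_n^{\star}(\bm{\omega})]\bigr) = \sum_{\|\mathbf{h}\|\le r_n} \widetilde{\psi}_{\mathbf{h}}(\bm{\omega})\,\zeta_n^{\star}(\mathbf{h}),
\]
a $\widetilde{\psi}$-weighted linear functional of the recentred bootstrapped extremogram, whose target $G(\bm{\omega}) = \sum_{\mathbf{h}\in\mathbb{Z}^2}\psi_{\mathbf{h}}(\bm{\omega})Z_{\mathbf{h}}$ is the same linear form in the Gaussian vector $(Z_{\mathbf{h}})$ delivered by Theorem~\ref{thm:bootstrapclt}.

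For the finite-dimensional step, fix $\bm{\omega}_1,\dots,\bm{\omega}_k\in\Pi^2$ and, via the Cram\'er--Wold device, a linear combination of the coordinates. Truncate the $\mathbf{h}$-sum at a fixed level $M$: on $\{\mathbf{h}:\|\mathbf{h}\|\le M\}$ apply Theorem~\ref{thm:bootstrapclt} with $A=\{\mathbf{h}:\|\mathbf{h}\|\le M\}$ to get $(\zeta_n^{\star}(\mathbf{h}))_{\|\mathbf{h}\|\le M}\to(Z_{\mathbf{h}})_{\|\mathbf{h}\|\le M}$ conditionally in probability, and replace $\widetilde{\psi}_{\mathbf{h}}$ by $\psi_{\mathbf{h}}$ at the cost of the Riemann-sum error, which is $O((1+\|\mathbf{h}\|)/n)=O(M/n)$ uniformly for a Lipschitz $g$. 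Letting $M\to\infty$, the truncated limit $\sum_{\|\mathbf{h}\|\le M}\psi_{\mathbf{h}}(\bm{\omega}_j)Z_{\mathbf{h}}$ tends to $G(\bm{\omega}_j)$ by the $L^2$-convergence of the series defining $G$ in Theorem~\ref{thm:fclt}. It remains to show the tail $\sum_{M<\|\mathbf{h}\|\le r_n}\widetilde{\psi}_{\mathbf{h}}(\bm{\omega})\zeta_n^{\star}(\mathbf{h})$ is negligible uniformly in $n$ as $M\to\infty$: compute its conditional second moment $\sum_{M<\|\mathbf{h}_1\|,\|\mathbf{h}_2\|\le r_n}\widetilde{\psi}_{\mathbf{h}_1}(\bm{\omega})\widetilde{\psi}_{\mathbf{h}_2}(\bm{\omega})\cov^{\star}(\zeta_n^{\star}(\mathbf{h}_1),\zeta_n^{\star}(\mathbf{h}_2))$, insert the conditional-covariance bounds on $\widehat{\gamma}^{\star}$ obtained in the proof of Theorem~\ref{thm:bootstrapclt}, and use the decay $|\psi_{\mathbf{h}}(\bm{\omega})|=O\bigl((1+|h_1|)^{-1}(1+|h_2|)^{-1}\bigr)$ (two integrations by parts, one per coordinate) that the quadrant smoothness of Assumption~\ref{asm:g-smooth} confers on $\psi_{\mathbf{h}}$. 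A standard "$M\to\infty$ after $n\to\infty$" argument with a uniform bound then gives conditional convergence in probability of the fidi laws to those of $G$.

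For conditional tightness in $\mathbb{C}(\Pi^2)$ I would establish a rectangle-increment moment bound of Bickel--Wichura type (supplemented by the analogous one-coordinate increment estimates): for a rectangle $R\subset\Pi^2$ with corners $\bm{\omega},\widetilde{\bm{\omega}}$ and the associated alternating increment $\Delta_R$,
\[
\E^{\star}\!\Bigl[\bigl|\tfrac{n}{\sqrt{m_n}}\,\Delta_R\bigl(\widehat{J}_n^{\star}-\E^{\star}[\widehat{J}_n^{\star}]\bigr)\bigr|^{4}\Bigr] \le C\,\bigl(|\omega_1-\widetilde{\omega}_1|\,|\omega_2-\widetilde{\omega}_2|\bigr)^{2},
\]
with $C$ independent of $n$ and on an event of probability tending to one. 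This needs two ingredients: (a) the discretized rectangle-increment estimate $|\Delta_R\widetilde{\psi}_{\mathbf{h}}|\le c\,|\omega_1-\widetilde{\omega}_1|\,|\omega_2-\widetilde{\omega}_2|$, which is exactly what Assumption~\ref{asm:g-smooth} is designed to yield; and (b) a uniform-in-$n$ fourth conditional moment bound for linear forms $\sum_{\|\mathbf{h}\|\le r_n}c_{\mathbf{h}}\zeta_n^{\star}(\mathbf{h})$ in the recentred bootstrapped extremogram. Granting these, the multiparameter tightness criterion gives conditional asymptotic equicontinuity in probability; combined with the fidi step this proves Theorem~\ref{thm:cltbootintperiodo}, and the limiting law of $T_n^{\star}$ then follows by the continuous mapping theorem exactly as in \eqref{eq:limitgrs}.

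The main obstacle is ingredient (b): bounding the fourth conditional moments (equivalently, fourth conditional cumulants) of sums of block-resampled products of extremal indicators under the stationary-bootstrap law $\P^{\star}$, whose blocks have geometric lengths with parameter $\theta_n=r_n/m_n$. One must expand over the random block starts and lengths, separate within-block and across-block contributions, and then sum the resulting bounds against the weights $c_{\mathbf{h}}$ over all lags $\|\mathbf{h}\|\le r_n$, arranging the bookkeeping so that \eqref{eq:thetacondition} together with $r_n^4/m_n\to0$ and $nr_n/m_n^{3/2}\to0$ from Assumption~\ref{asm:M1} force the error terms to vanish. This is the bootstrap counterpart of the fourth-moment/tightness computation behind Theorem~\ref{thm:fclt}, but the extra randomization from resampling makes the combinatorics substantially heavier, and that is where the bulk of the work lies.
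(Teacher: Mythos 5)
Your two-step architecture (conditional fidis plus conditional tightness in $\mathbb{C}(\Pi^2)$) matches the paper's, and your fidi step --- Theorem~\ref{thm:bootstrapclt} on a finite lag set, replacement of $\widetilde{\psi}_{\mathbf{h}}$ by $\psi_{\mathbf{h}}$ at a Riemann-sum cost, then $M\to\infty$ --- is essentially what the paper does by deferring to the proof of Theorem~\ref{thm:fclt}. There are, however, two substantive gaps. First, both your tail-negligibility estimate and your tightness step require moment bounds on $\widehat{\gamma}^{\star}(\mathbf{h})-\E^{\star}[\widehat{\gamma}^{\star}(\mathbf{h})]$ that are \emph{uniform over all lags} $h<\|\mathbf{h}\|\le r_n$ and vanish as $h\to\infty$ after $n\to\infty$. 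You propose to ``insert the conditional-covariance bounds on $\widehat{\gamma}^{\star}$ obtained in the proof of Theorem~\ref{thm:bootstrapclt}'', but that proof only treats a \emph{fixed finite} set of lags; it does not deliver the bound $\big|\E\big[(n^2/m_n)\cov^{\star}(\widehat{\gamma}^{\star}(\mathbf{h}_1),\widehat{\gamma}^{\star}(\mathbf{h}_2))\big]\big|\le K_{hn}$ with $\lim_{h}\limsup_{n}K_{hn}=0$, i.e.\ inequality \eqref{eq:boot4mom}, which is the entire new content of the paper's proof of this theorem. Establishing it requires expanding $\E^{\star}[\widehat{I}_{\mathbf{t}^{\star}}(\mathbf{h}_1)\widehat{I}_{\mathbf{s}^{\star}}(\mathbf{h}_2)]$ over the geometric block-length law (the probability $(1-\theta)^{s_1}(1-\theta)^{s_2}$ that two resampled sites stay in the same horizontal and vertical blocks), splitting into the terms $S_1,S_{21},\dots,S_{24}$ and their sub-terms, and controlling each via Lemma~\ref{lem:negcenters} and Assumption~\ref{asm:M1}; none of this bookkeeping is present in your proposal.

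Second, your tightness route through a fourth conditional moment of Bickel--Wichura type is both unfinished (you explicitly grant ingredient (b)) and unnecessary. Lemma~\ref{lem:diff} bounds the second-order difference $d_{\mathbf{h}}$ \emph{linearly} in each of $|\omega_i-\omega_i^{\prime}|$ and $|j_i-j_i^{\prime}|$, so the \emph{second} conditional moment of the chained increment already carries the factor $\prod_{i=1}^2|\omega_i-\omega_i^{\prime}|^2|j_i-j_i^{\prime}|^2$, which suffices for Billingsley's maximal inequality applied successively in $\bm{\omega}$ and $\mathbf{j}$ over dyadic lag blocks; this is exactly the mechanism of Lemma~\ref{lem:tightness1}, and it transfers verbatim to the bootstrap setting once the second-moment bound \eqref{eq:bootpart00} is in hand. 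The fourth-moment computation for block-resampled sums under $\P^{\star}$, which you yourself identify as ``where the bulk of the work lies'', is therefore a detour, and since it is left entirely open the proposal as written does not constitute a proof. If you instead establish \eqref{eq:boot4mom} you can discard the fourth-moment program altogether and close the argument by citing the chaining apparatus of Theorem~\ref{thm:fclt}.
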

The bootstrapped GRS is given by 
\begin{align} \label{eq:bootgrstat}
T_n^{\star} &= \frac{n}{\sqrt{m_n}} \sup_{\bm{\omega} \in \Pi^2} \big| \widehat{J}_n^{\star} (\bm{\omega}) - \E[ \widehat{J}_n^{\star}(\bm{\omega})] \big| \,. 
\end{align} 
An application of continuous mapping theorem yields that 
\begin{align*}
T_n^{\star} \overset{d}{\to } \sup_{\bm{\omega} \in
  \Pi^2} \big| G(\bm{\omega})\big|\,.
\end{align*}
The right-hand side of the above limit is the same as the right-hand side of \eqref{eq:limitgrs}.   

The bootstrap-based testing procedure involves generating bootstrap
samples and computing $T_n^{\star}$. Denote $c_n^{\star}(\alpha)$ as the corresponding
critical value for the test at significance level $\alpha$. This approach
avoids the complex derivation of the theoretical null distribution
through bootstrap, while inheriting the accurate characterization of
extremal dependence in the simulation-based test.

\section{Simulation study}\label{sec:simulation}
\subsection{Examples of max-stable random fields}
We focus on two max-stable
models with unit Fr\'{e}chet marginals: the {\em max-moving averages}
field and the {\em Brown-Resnick} field. Both models are widely used
for characterizing spatial extreme events, and have been formally
proven to satisfy Assumptions~\ref{asm:M1}
in \cite{damek:mikosch:zhao:zienkiewicz:2023}, making them ideal
benchmarks for validating our proposed goodness-of-fit tests. 

The {\em max-moving averages} (MMA) field $(X_{\mathbf{t}})_{t\in \mathbb{Z}^2}$ is given by 
\begin{align}\label{eq:mma}
X_{\mathbf{t}} = \max_{\mathbf{s} \in \mathbb{Z}^2} w(\mathbf{s}) Z_{\mathbf{t} - \mathbf{s}}\,, \quad \mathbf{t} \in \mathbb{Z}^2\,,
\end{align}
with non-negative weight function $w(\mathbf{s}) = \phi^{|s_1 | + |s_2|}
\mathbf{1}(|s_1|+|s_2| \le 5)$ for $\phi>0$. Its extremogram is
$\gamma(\mathbf{h}) =\sum_{\mathbf{s} \in \mathbb{Z}^2} (\omega(\mathbf{s}) \wedge
  \omega(\mathbf{s} + \mathbf{h}) )/\big( \sum_{\mathbf{s} \in \mathbb{Z}^2}
  \omega(\mathbf{s})\big)$. Another random field is the truncated {\em
  Brown-Resnick} (BR) field,  
\begin{align} \label{eq:brrf}
X_{\mathbf{s}} = \sup_{1\le j \le 1000} \Gamma_j^{-1} \exp\{ W_{\mathbf{s}}^{(j)} - \delta(\mathbf{s})\}\,, \quad \mathbf{s} \in \Lambda_n^2\,, 
\end{align}
where $\Gamma_j = \sum_{i=1}^j E_i$, $j\ge 1$, $(E_i)$ is a sequence of
i.i.d. standard exponential random variables, which are independent of
the sequence of i.i.d. fractional Brownian sheets $(W_{\mathbf{s}}^{(j)})_{\mathbf{s} \in \Lambda_n^2}$ with the Hurst index
$H\in (0,1)$, and $\delta(\mathbf{s}) = \var(W_{\mathbf{s}}^{(j)})/2$. Its
extremogram is $\gamma(\mathbf{h}) = 2(1-\Psi(\sqrt{\delta(\mathbf{h})}))$,
where $\Psi$ denotes the standard normal distribution function. Please
refer to Figure~\ref{fig:samplepath} for sample paths of
\eqref{eq:mma} and \eqref{eq:brrf}.    

\begin{figure}[htbp]
\centering
\begin{tabular}{cc}
\includegraphics[width=0.3\textwidth]{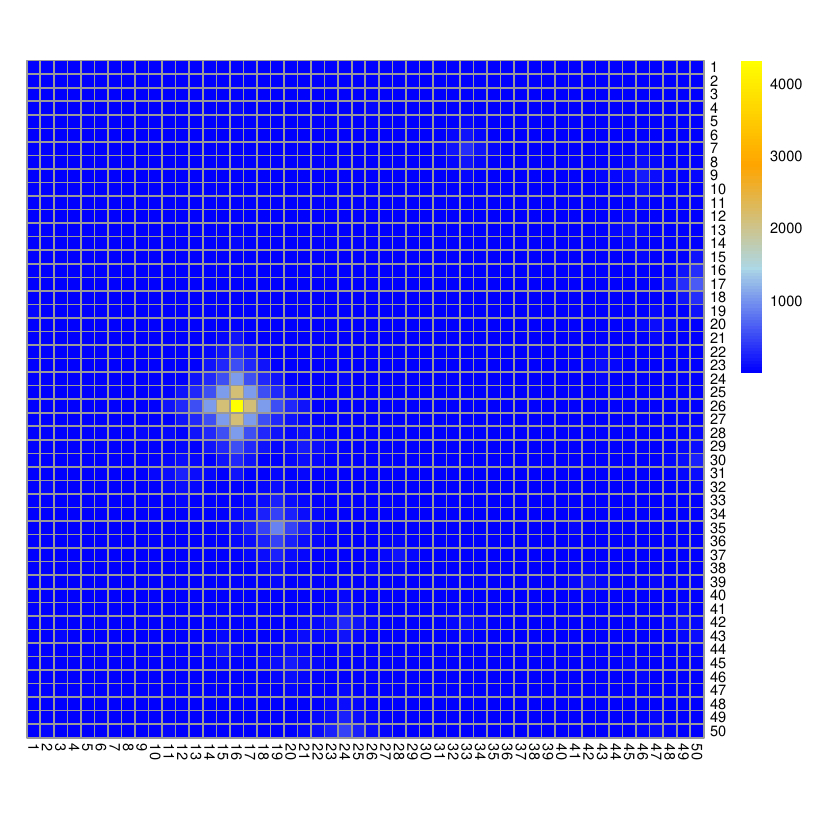}
\includegraphics[width=0.3\textwidth]{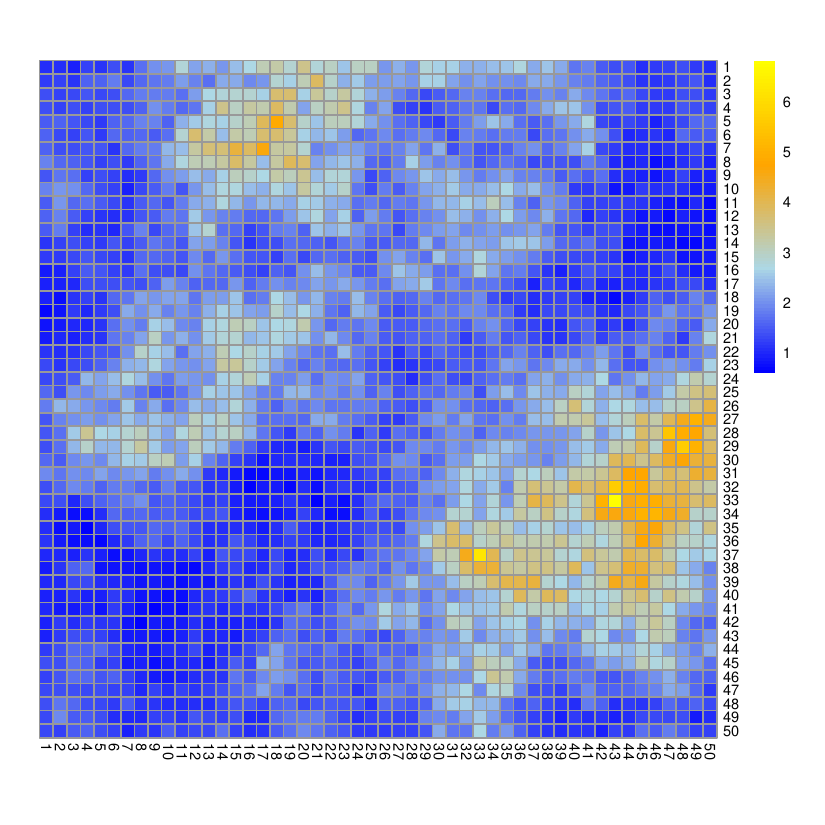} 
\end{tabular}
\caption{{\bf Left}: A sample path of the MMA field \eqref{eq:mma} with $\phi=0.5$. {\bf Right}: A sample path of the truncated BR field \eqref{eq:brrf} with $H=0.5$.}
\label{fig:samplepath}
\end{figure}

\subsection{Densities of GRS statistics}
We compare the null distributions of the simulation-based
statistic $T_n$ and bootstrap-based statistic $T_n^{\star}$ for different
random fields $(X_{\mathbf{t}})_{\mathbf{t} \in \mathbb{Z}^2}$ and distinct
thresholds $a_{m_n}$. By Theorem~\ref{thm:fclt} and
Theorem~\ref{thm:cltbootintperiodo}, their asymptotic distributions
are identical. We simulate these distributions by generating
$2,000$ replicates via the Monte Carlo method and the bootstrap
method. Set the grid size $n=50$, weight function $g \equiv 1$, and the geometric parameter $\theta=1/50$ 
for the stationary bootstrap algorithm. Recall that $a_{m_n}$ is the
threshold satisfying $p_n(\mathbf{0})=\mathbb{P}(|X_{\mathbf{0}}| >a_{m_n}) =
m_n^{-1}$ under $H_0$. Since our test statistics focus on extremal
dependence with a unified threshold, we standardize all fields to have
unit Fréchet marginals using the method in
\cite{OestingMarco2022ACTt}. 

Figure~\ref{fig:MMAGRSdensity} and Figure~\ref{fig:BRPGRSdensity} plot the
densities of $T_n$ and $T_n^{\star}$ for the MMA
and the truncated BR fields, respectively. For the MMA field
\eqref{eq:mma}, the densities are insensitive to $a_{m_n}$ but highly
sensitive to $\phi$, with $\phi = 0.5$ differing in shape from $\phi = 1.0$ and
$\phi = 1.5$, which is determined by the inherent properties of the MMA
field. For the truncated BR field \eqref{eq:brrf}, the density shapes
are not sensitive to $H$ and $a_{m_n}$. In both cases, the densities
of the two statistics align closely, confirming the validity of the
stationary bootstrap and the use of bootstrapped quantiles
$c_{n}^{\star}(\alpha)$ as critical values in the test. 

\begin{figure}[hbtp]
    \centering
    \begin{tabular}{ccc}
\includegraphics[width=0.3\textwidth]{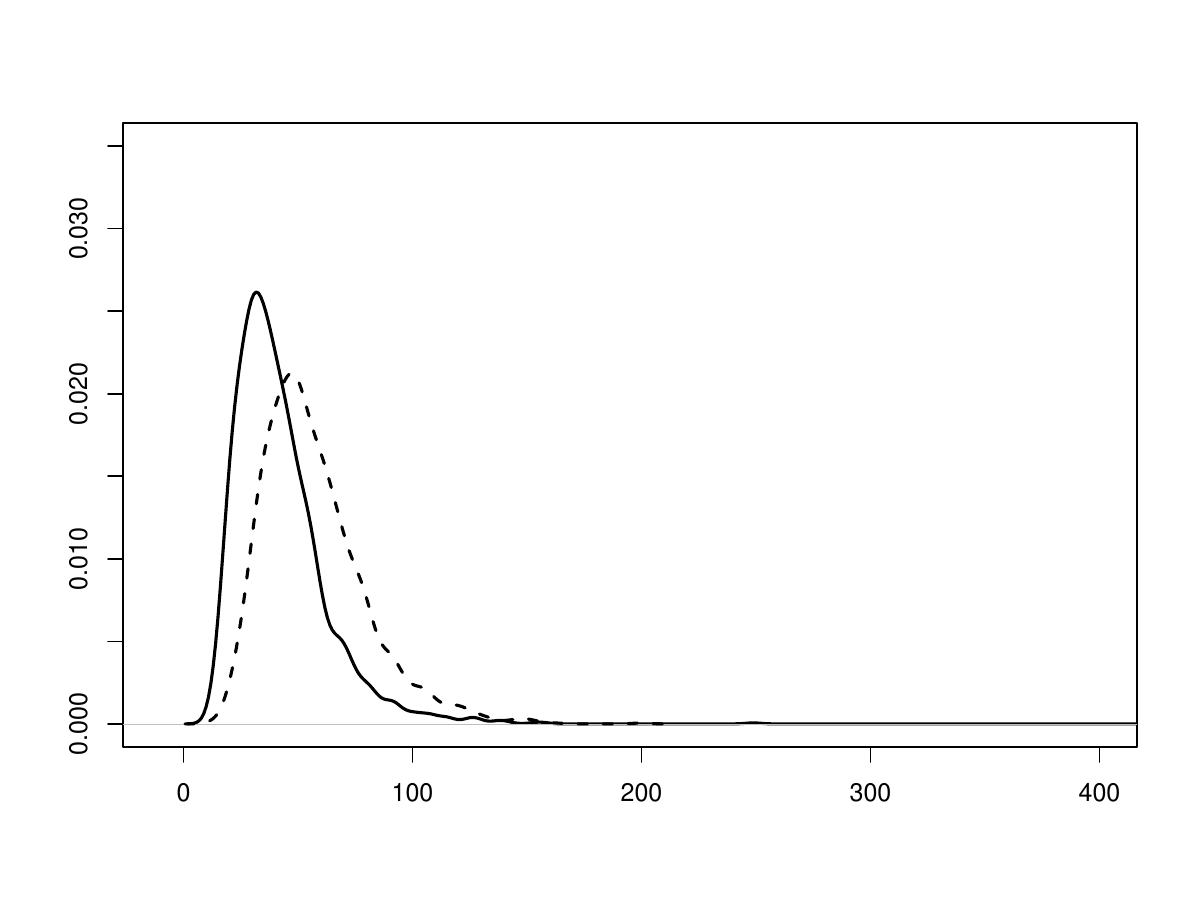}
\includegraphics[width=0.3\textwidth]{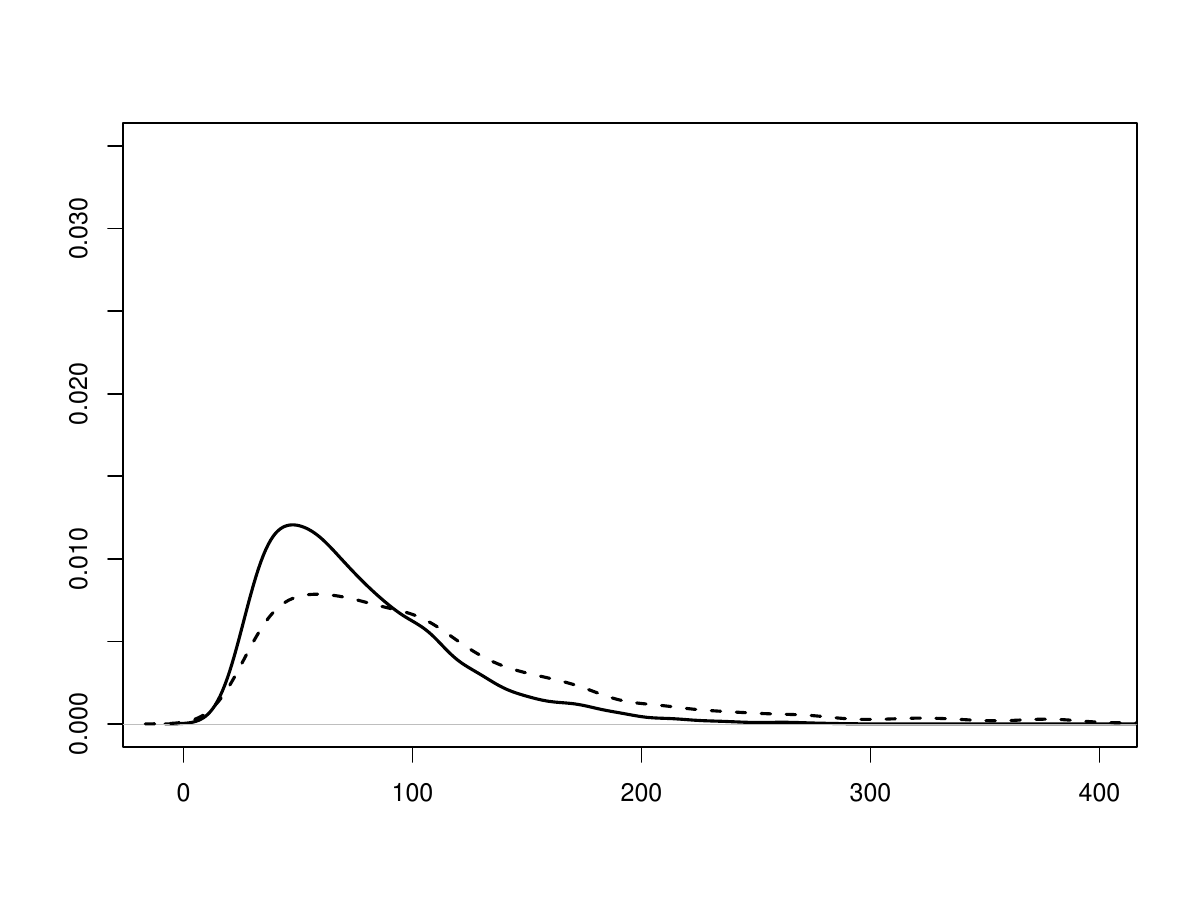} 
\includegraphics[width=0.3\textwidth]{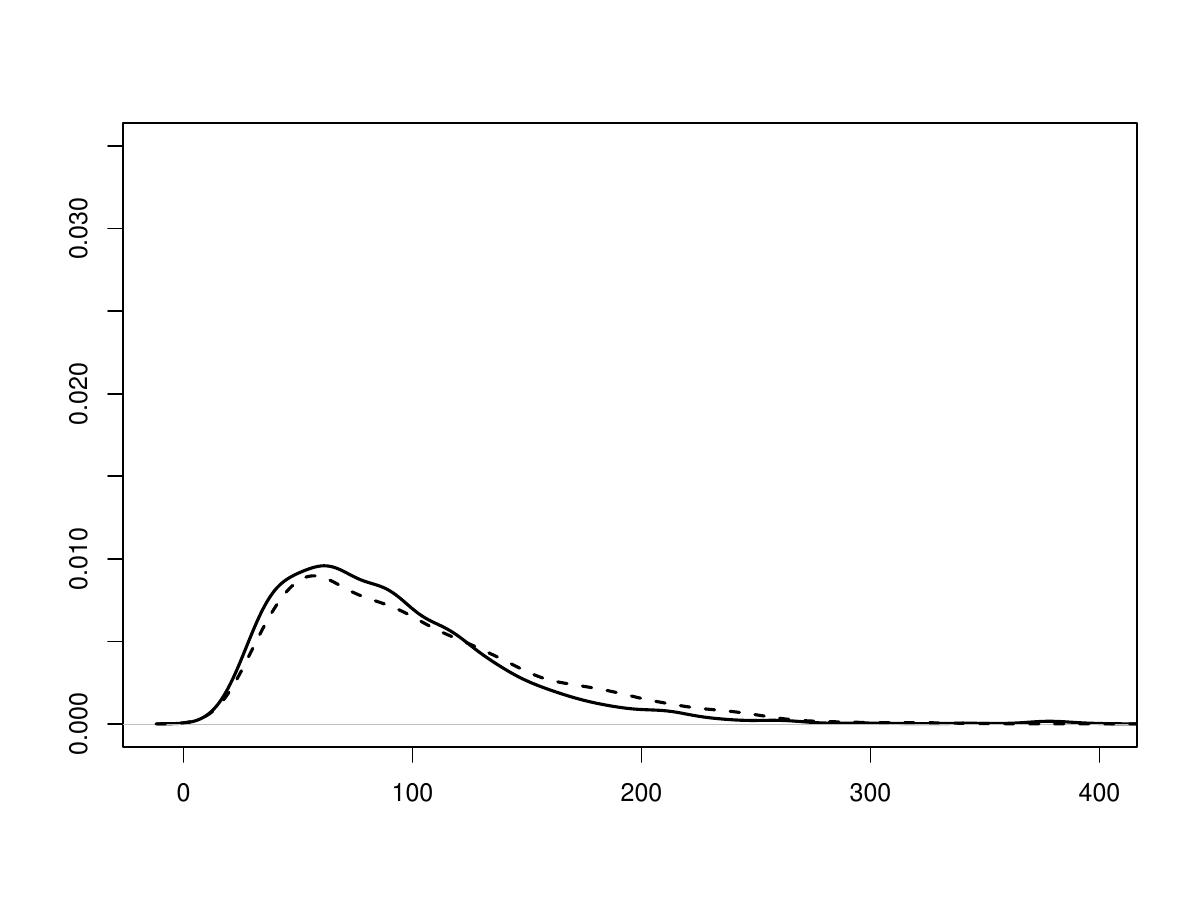} \\
\includegraphics[width=0.3\textwidth]{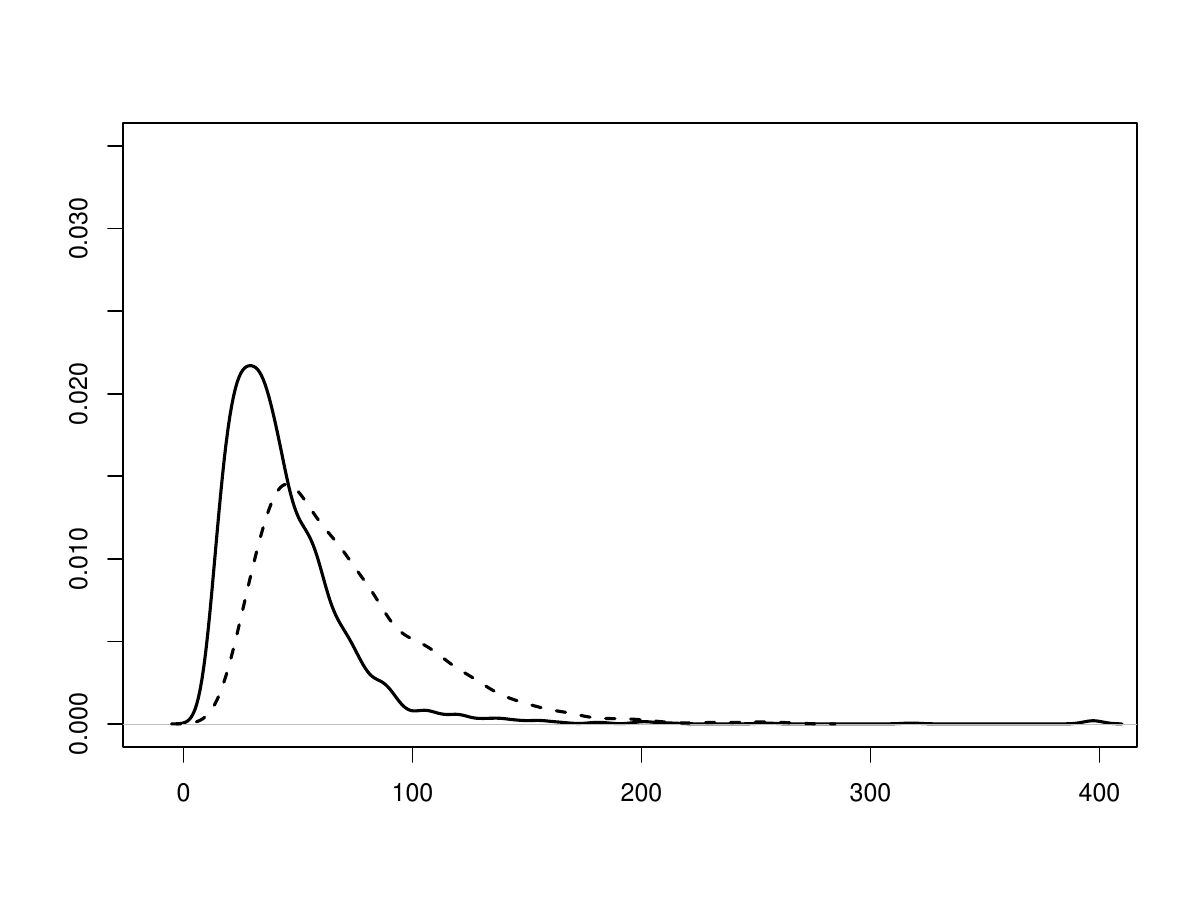}
\includegraphics[width=0.3\textwidth]{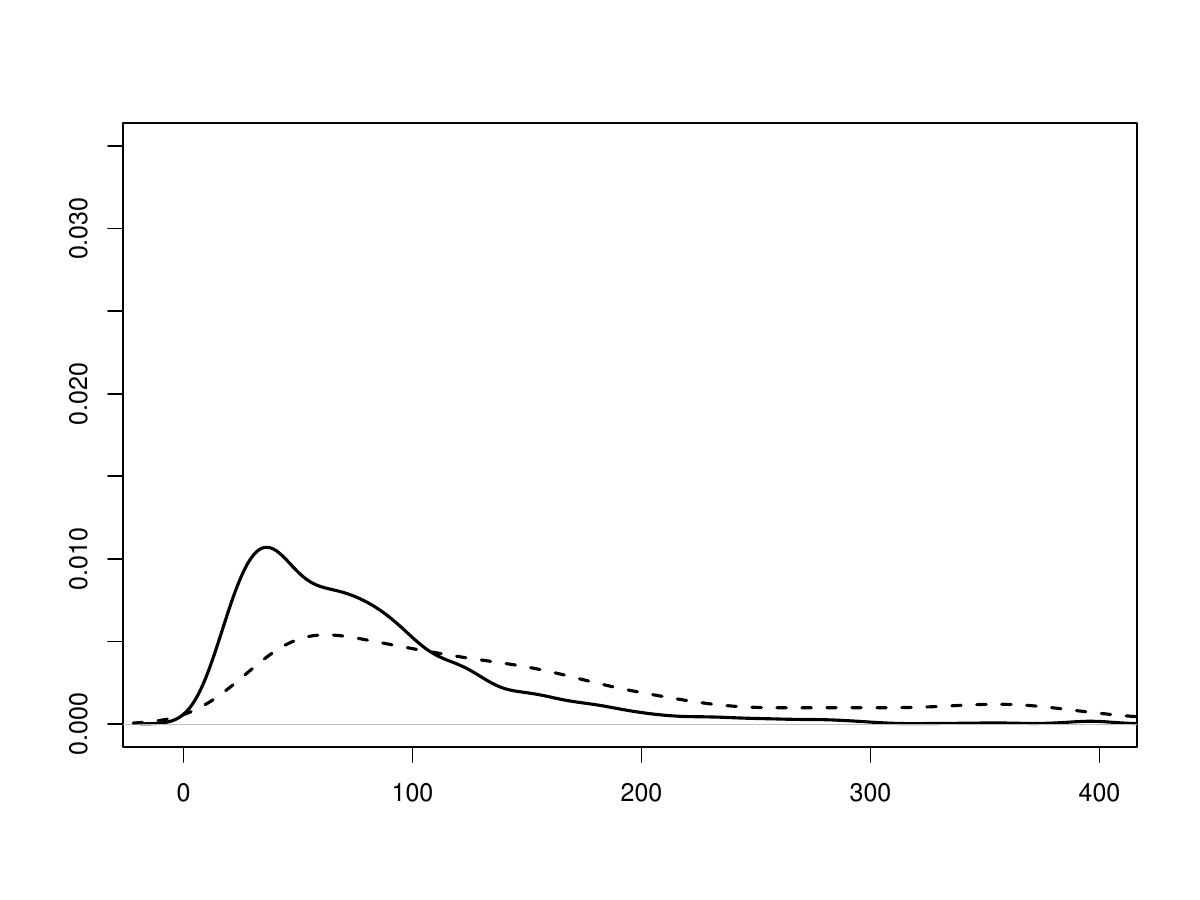} 
\includegraphics[width=0.3\textwidth]{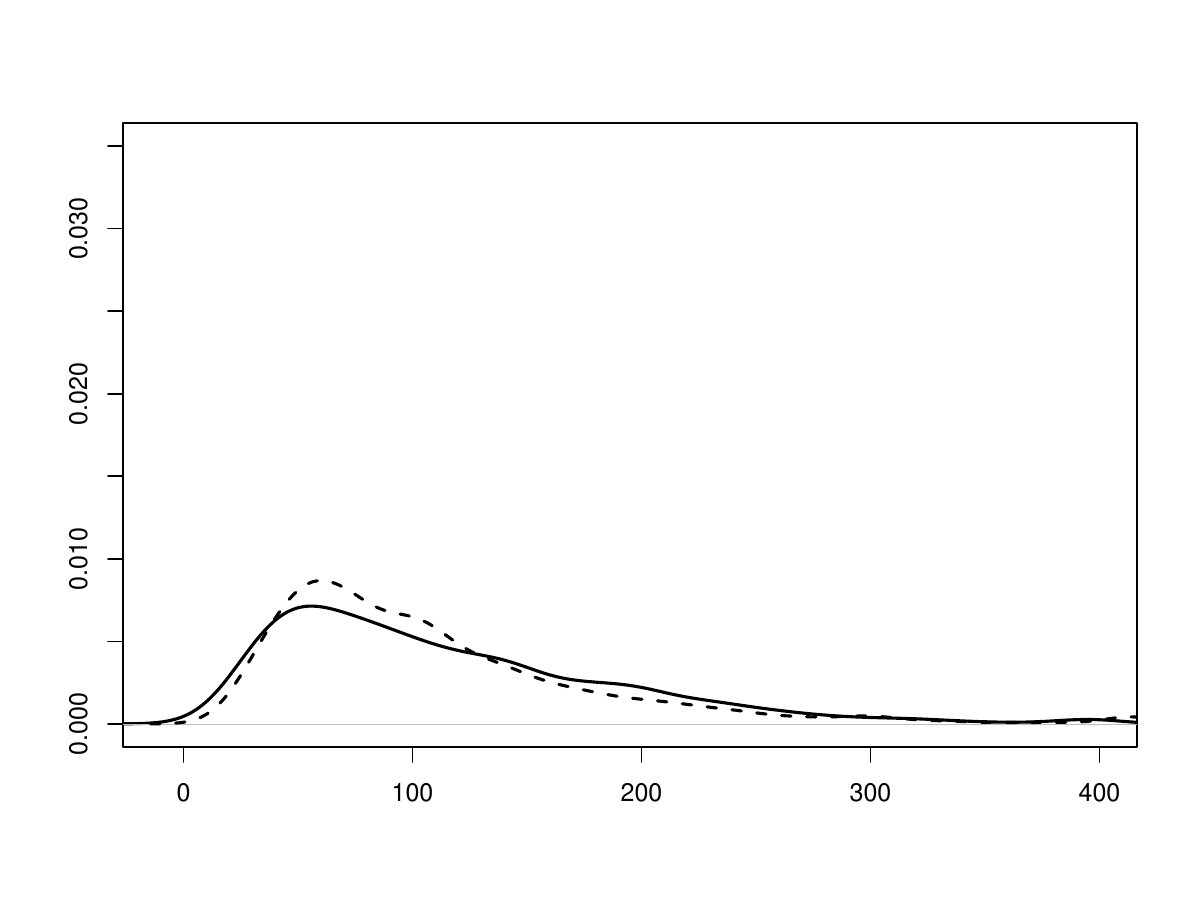} 
    \end{tabular}
    \caption{Densities of the statistics $T_n$ (solid line) and $T_n^{\star}$ (dashed line). The samples are drawn from the MMA fields with $\phi=0.5,1.0,1.5$ in the first, second and third column, and $p_n(\mathbf{0}) =0.10,0.05$ in the first and second row, respectively. }
    \label{fig:MMAGRSdensity}
\end{figure}

\begin{figure}[htbp]
    \centering
\begin{tabular}{ccc}
\includegraphics[width=0.3\textwidth]{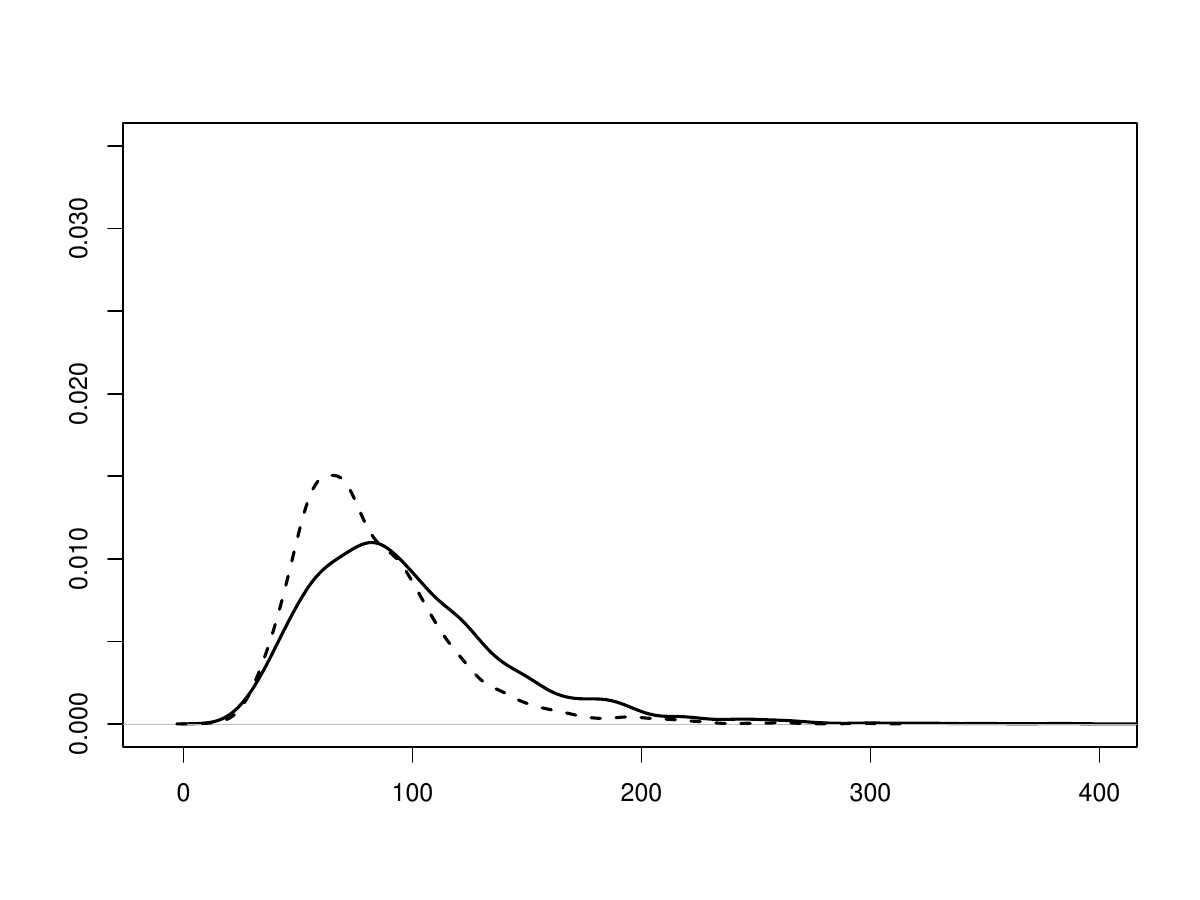}
\includegraphics[width=0.3\textwidth]{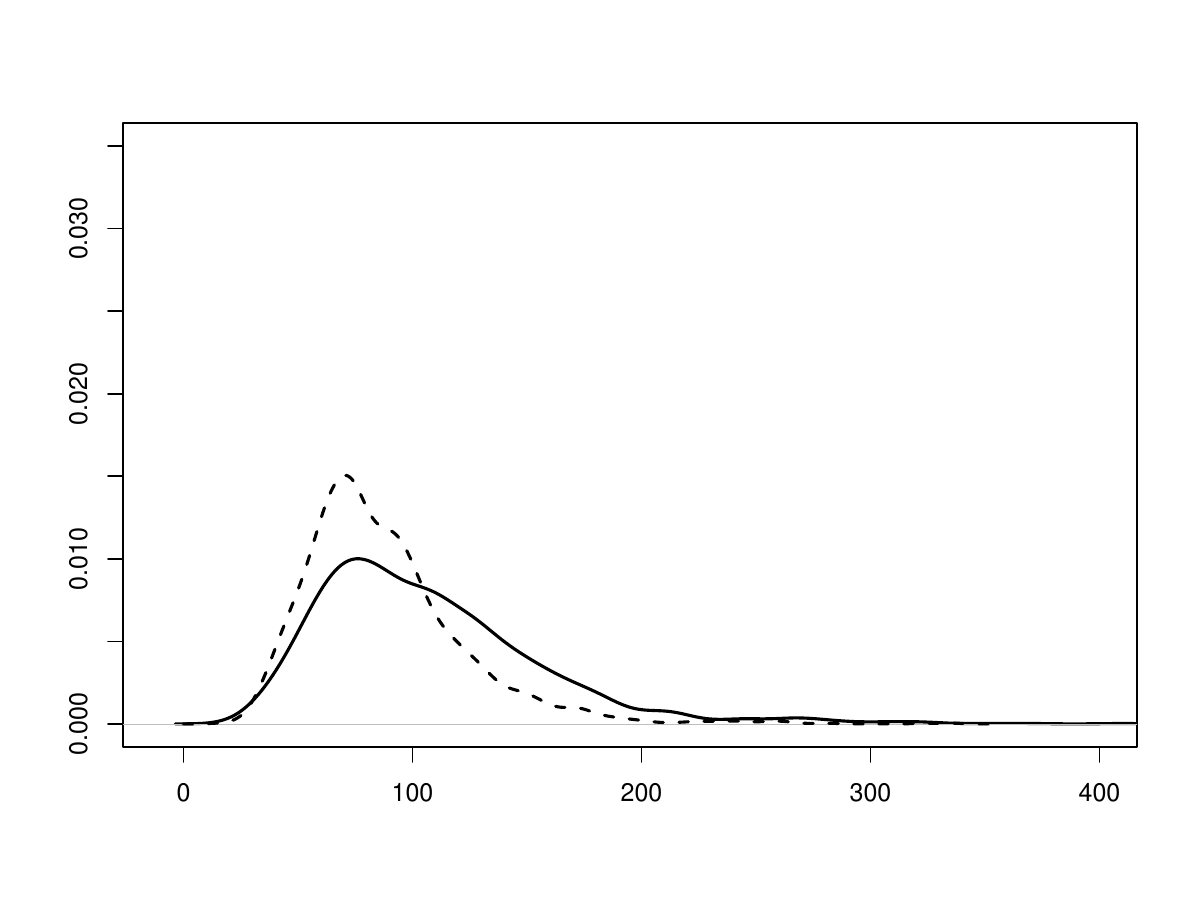} 
\includegraphics[width=0.3\textwidth]{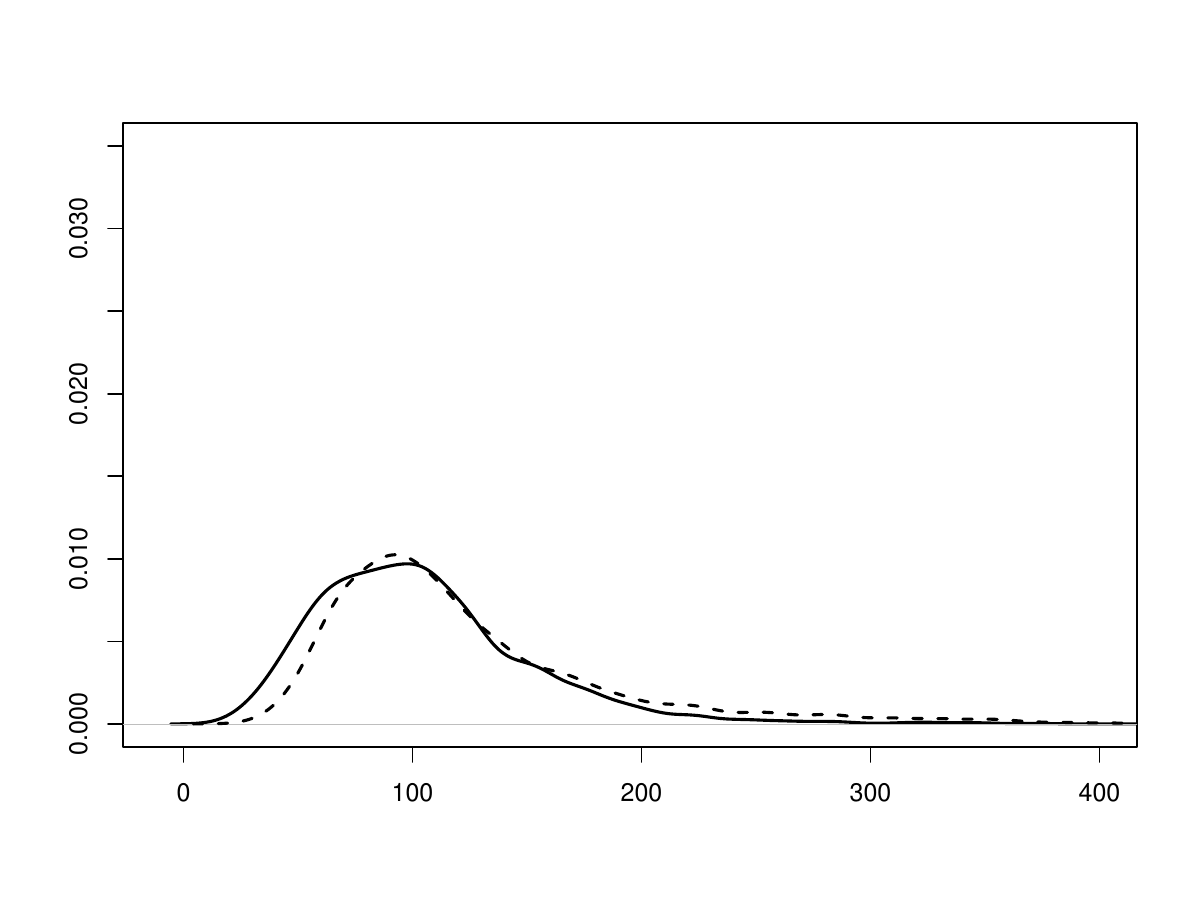} \\
\includegraphics[width=0.3\textwidth]{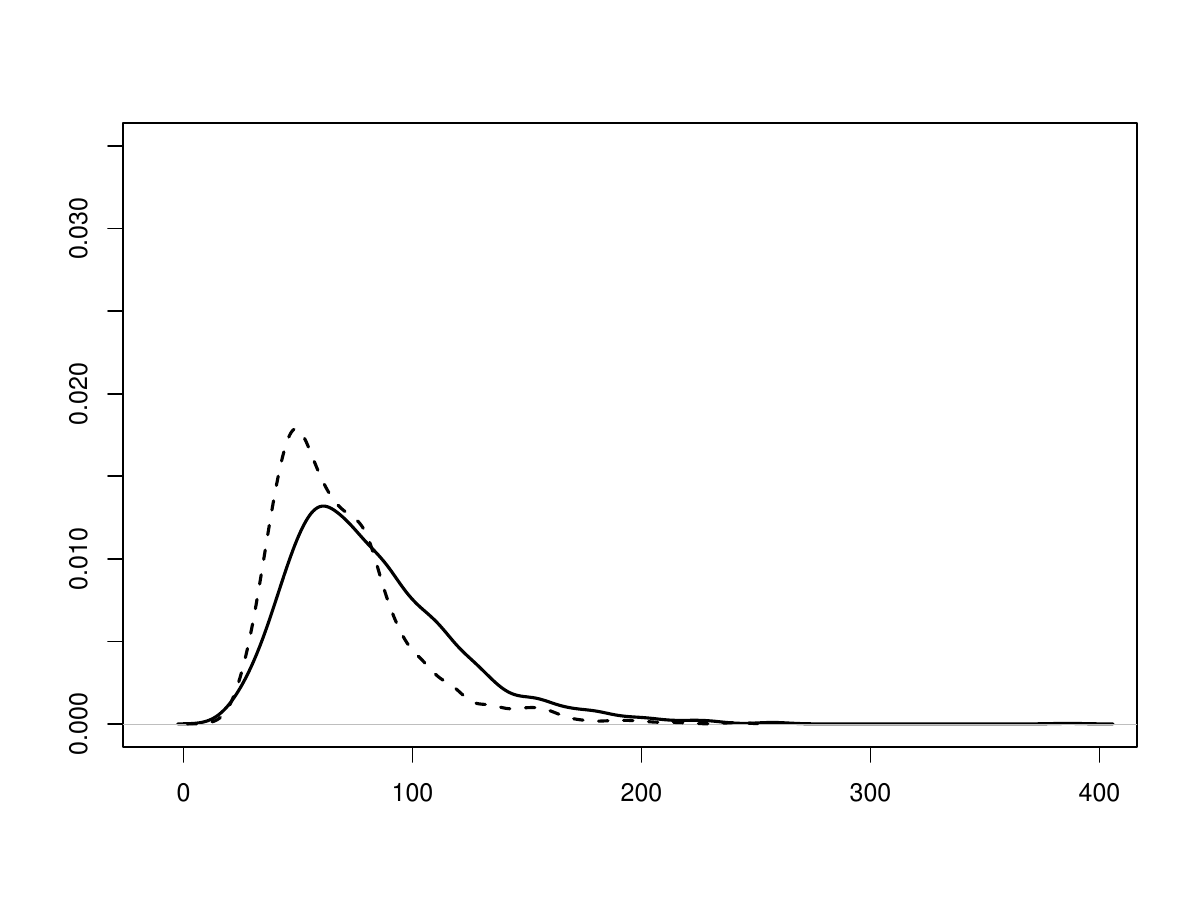}
\includegraphics[width=0.3\textwidth]{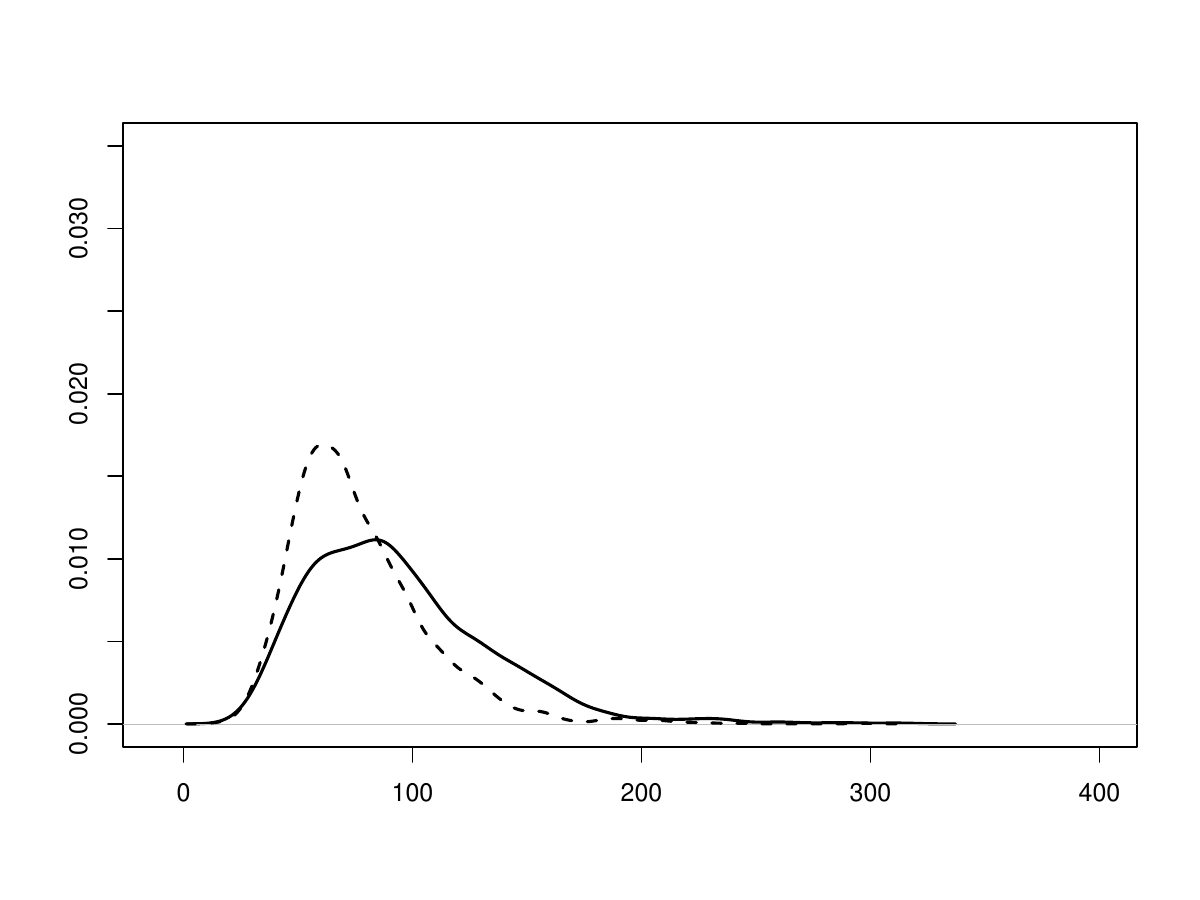} 
\includegraphics[width=0.3\textwidth]{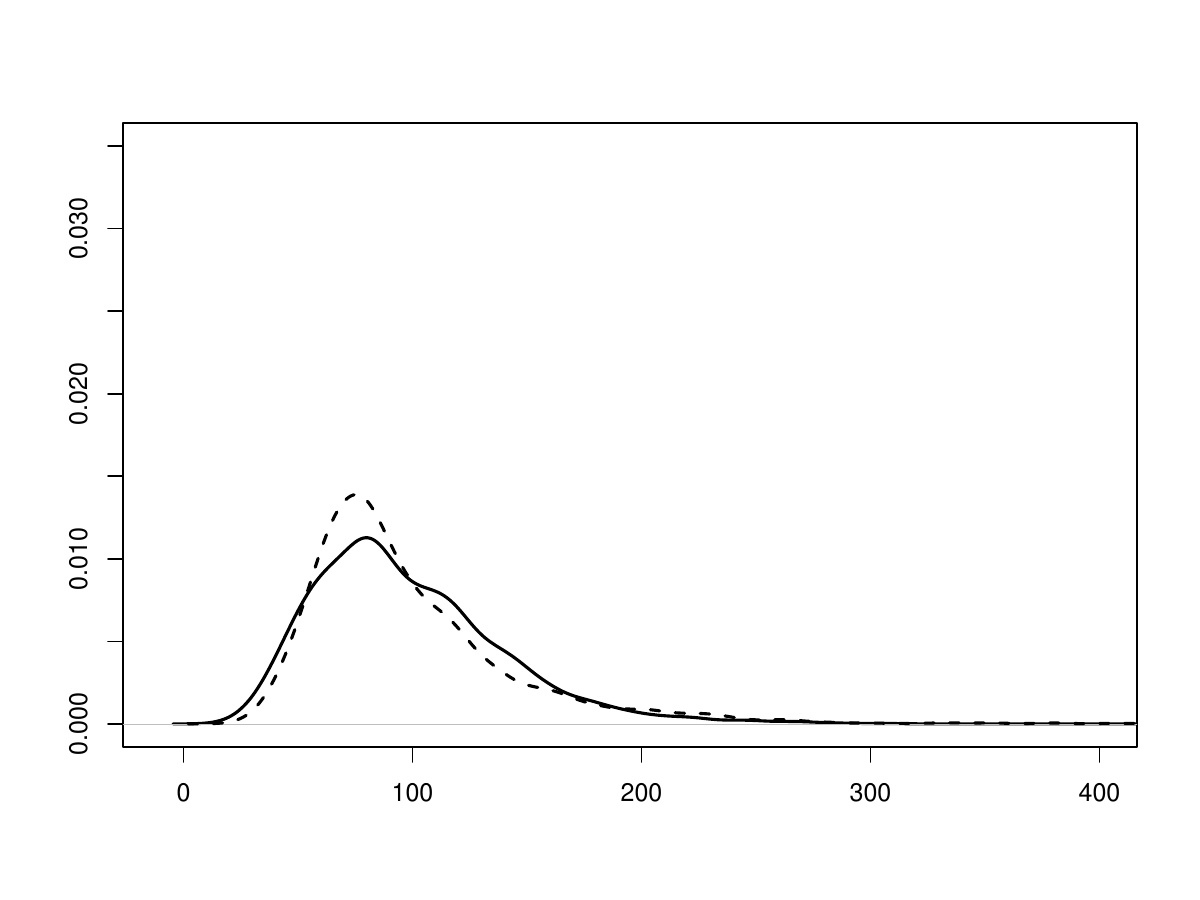}
\end{tabular}
    \caption{Densities of the statistics $T_n$ (solid line) and $T_n^{\star}$ (dashed line). The samples are drawn from the BR field with $H=0.4, 0.5, 0.6$ in the first, second, and third columns, and $p_n (\mathbf{0}) =0.20,0.15$ in the first and second rows, respectively.}
    \label{fig:BRPGRSdensity}
\end{figure}

\subsection{Test performences}\label{subsec:testresults}
We present two examples to illustrate the Grenander-Rosenblatt test
(GRT) procedure: one uses a 
simulated MMA field, the other uses a simulated BR field. Each is tested
under two null hypotheses: (a) $H_0^{\text{MMA}}$: the observation
$(X_\mathbf{t})_{\mathbf{t}\in  \Lambda_n^2}$ is taken from an MMA field $M_0^{\text{MMA}}(\phi)$;
(b) $H_0^{\text{BR}}$: the observation $(X_\mathbf{t})_{\mathbf{t}\in
  \Lambda_n^2}$ is taken from a BR field $M_0^{\text{BR}}(H)$. We choose $p_n (\mathbf{0})=
0.05$ under $H_0^{\text{MMA}}$ and $p_n (\mathbf{0}) = 0.15$ under
$H_0^{\text{BR}}$.    

In Figure~\ref{fig:simu1}, we consider a simulated MMA field
\eqref{eq:mma} with $\phi=0.5$, whose surface is shown in
Figure~\ref{fig:samplepath}. Under $H_0^{\text{MMA}}$, the Whittle
estimation yields $\widehat{\phi}=0.43$. The observed GRS statistic
$T_n$ does not exceed either the simulation-based or
bootstrap-based critical values (detailed in the left panel of
Figure~\ref{fig:simu1}), thus we cannot reject
$H_0^{\text{MMA}}$. Under $H_0^{\text{BR}}$, the Whittle estimation
gives $\widehat{H}=0.40$. The test results differ by different
critical values: $H_0^{\text{BR}}$ is not rejected via the
simulation-based critical value but rejected via the bootstrap-based
one; see the right panel of Figure~\ref{fig:simu1}. 

In Figure~\ref{fig:simu2}, we analyze a simulated BR field
\eqref{eq:brrf} with $H=0.5$, whose surface is shown in
Figure~\ref{fig:samplepath}. Under $H_0^{\text{MMA}}$, the Whittle
estimation is $\widehat{\phi} =0.23$. The observed $T_n$
exceeds both the simulation-based and bootstrap-based critical values,
thus we reject $H_0^{\text{MMA}}$. Under $H_0^{\text{BR}}$, the
Whittle estimation gives $\widehat{H}=0.48$. Here, $T_n$
does not exceed either critical value, hence $H_0^{\text{BR}}$ is not
rejected.  

The simulation experiments demonstrate that the GRT performs
well in both cases, effectively capturing the characteristics of
heavy-tailed random fields. Notably, they highlight the superiority of
the test constructed via the stationary bootstrap algorithm. Although the
simulation-based and bootstrap-based critical values are
asymptotically equivalent, their empirical differences can be
substantial. To ensure the efficiency and reliability of the GRT in
practical applications, we thus recommend adopting the minimum of
these two candidate critical values. 

\begin{figure}[htbp]
    \centering
    \begin{tabular}{cc}
        \includegraphics[width=0.4\textwidth]{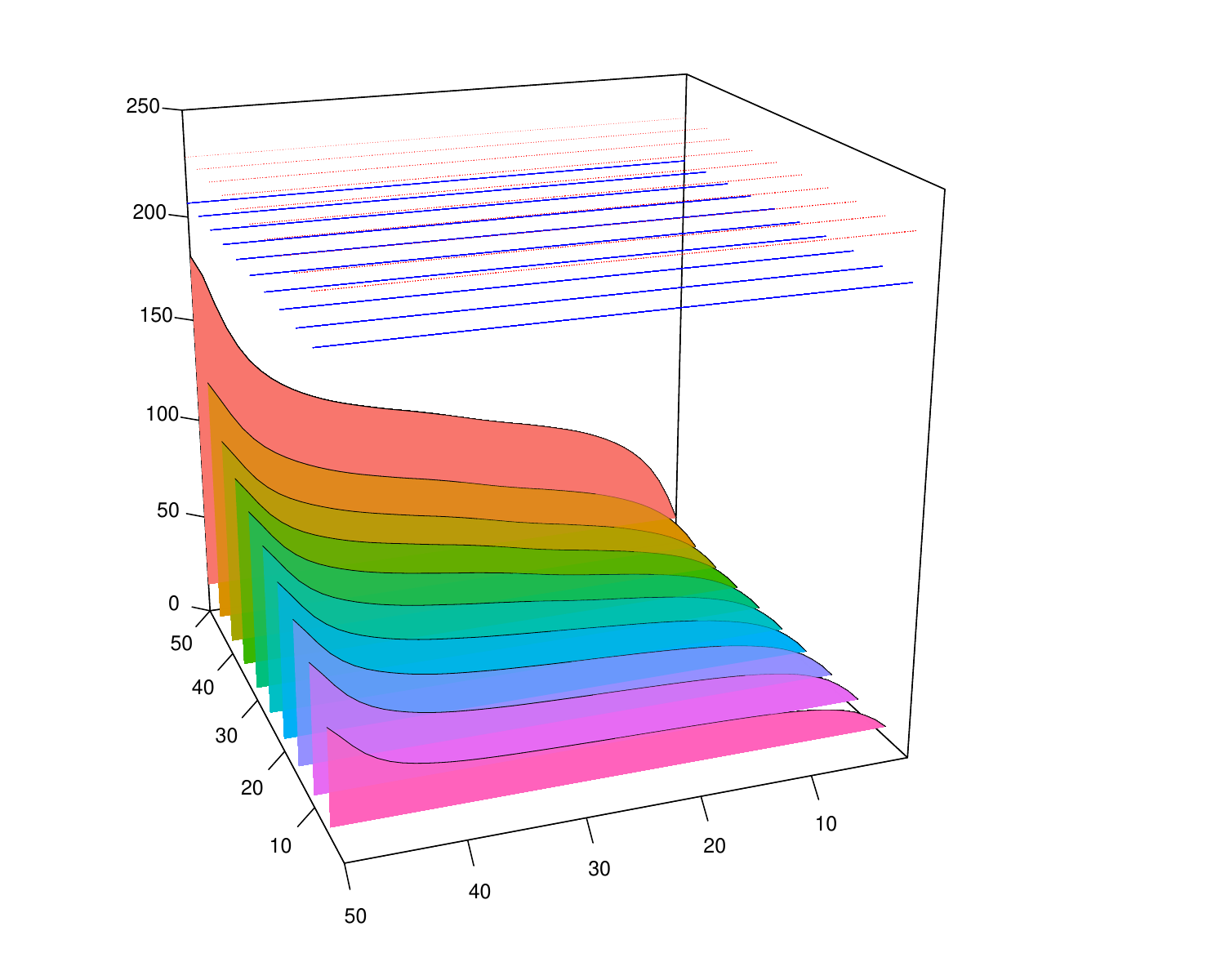}
        \includegraphics[width=0.4\textwidth]{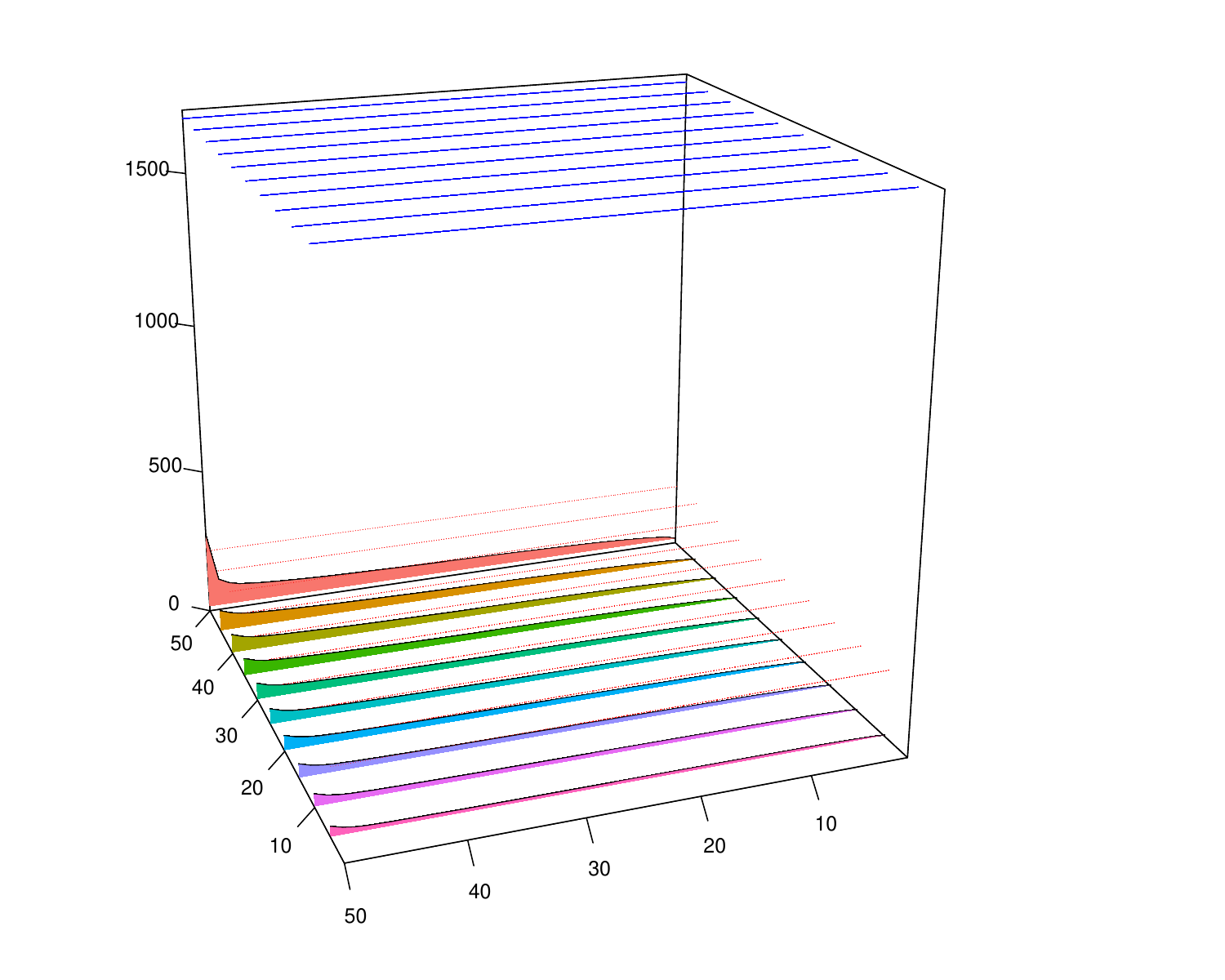}
    \end{tabular}
    \caption{GRT results for the simulated MMA field with $\phi=0.5$.
    \textbf{Left}: The surface of the normalized extremal integrated
    periodogram $\frac{n}{\sqrt{m_n}} \big| \widetilde{J}_n(\bm{\omega}) -
    \E[\widetilde{J}_{n}(\bm{\omega})] \big|$ under
    $H_0^{\text{MMA}}$. Simulation-based critical value (blue line):
    $c_{50}(0.05) = 206.57$; bootstrap-based critical value (red
    line): $c_{50}^{\star}(0.05)  = 228.19$. \textbf{Right}: The same
    surface under $H_0^{\text{BR}}$. Simulation-based critical value
    (blue line): $c_{50} (0.05) = 1673.58$; bootstrap-based critical
    value (red line): $c_{50}^{\star}(0.05)  = 218.10$.}  
    \label{fig:simu1}
\end{figure}

\begin{figure}[htbp]
    \centering
    \begin{tabular}{ccc}
        \includegraphics[width=0.4\textwidth]{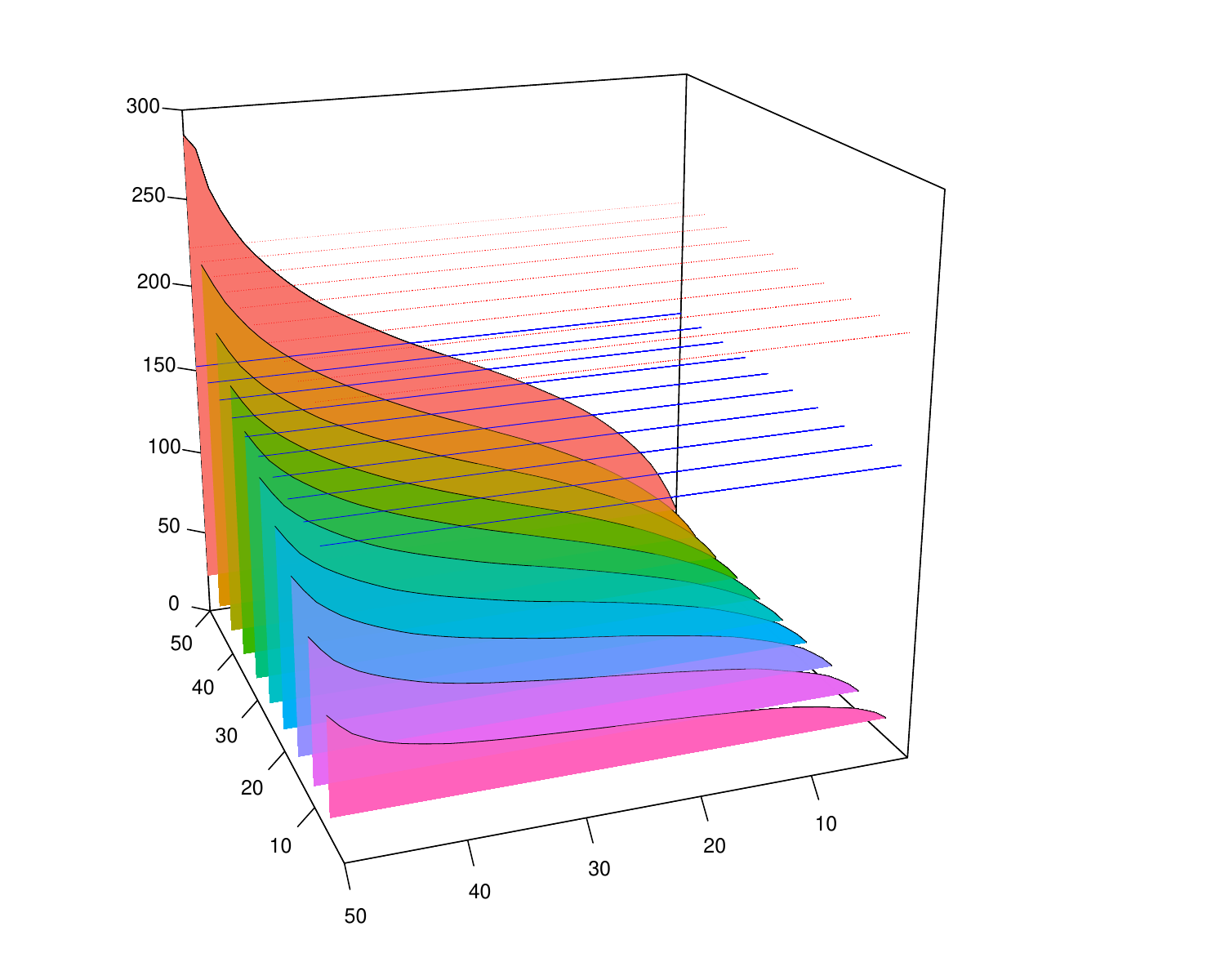}
        \includegraphics[width=0.4\textwidth]{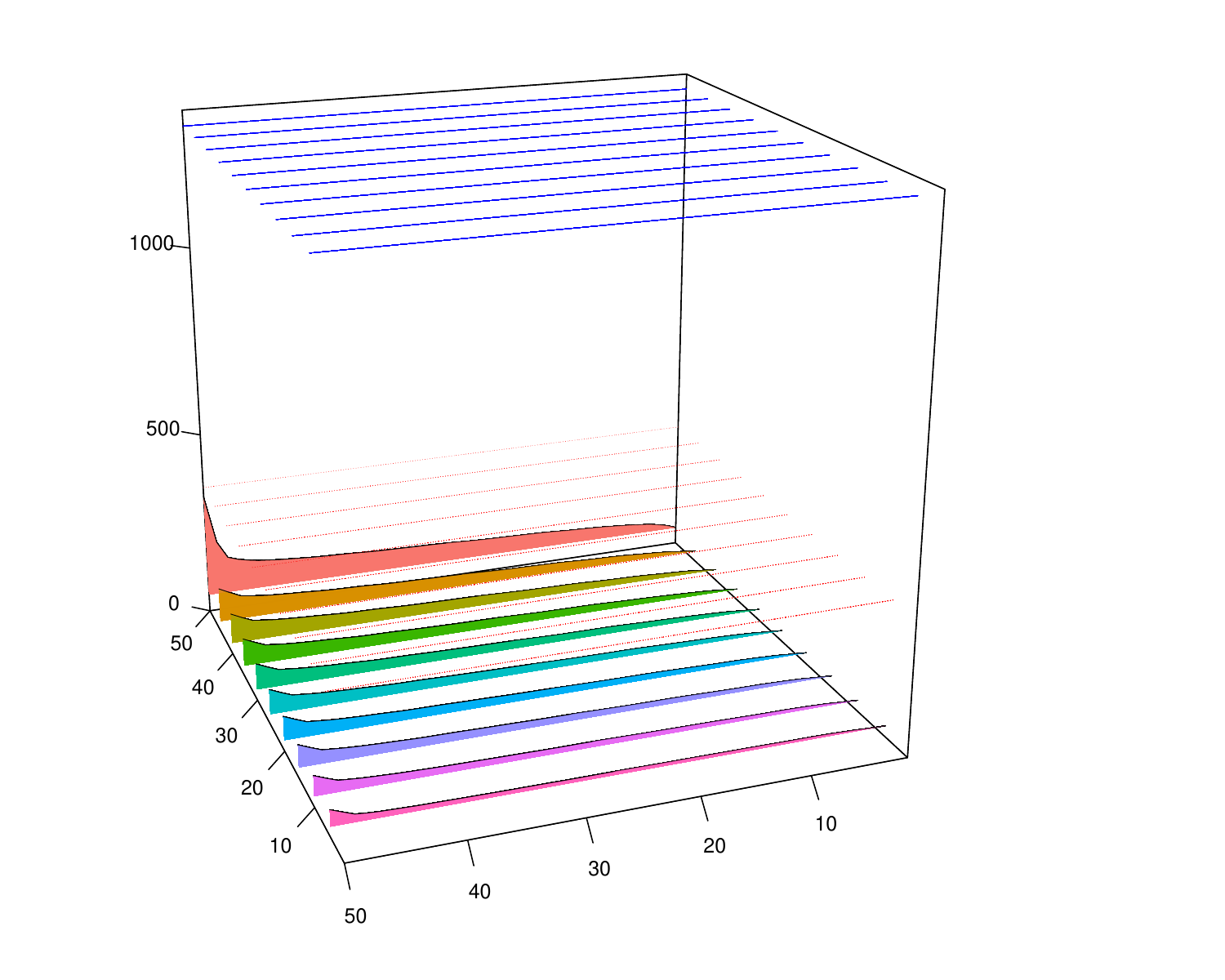}
    \end{tabular}
    \caption{GRT results for the simulated BR field with $H=0.5$.
    \textbf{Left}: The surface of the normalized extremal integrated periodogram $\frac{n}{\sqrt{m_n}} \big| \widetilde{J}_n(\bm{\omega}) - \E[\widetilde{J}_{n}(\bm{\omega})] \big|$ under $H_0^{\text{MMA}}$. Simulation-based critical value (blue line): $c_{50}(0.05)  = 152.54$; bootstrap-based critical value (red line): $c_{50}^{\star} (0.05) = 222.29$. \textbf{Right}: The same surface under $H_0^{\text{BR}}$. Simulation-based critical value (blue line): $c_{50}(0.05)  = 1310.33$; bootstrap-based critical value (red line): $c_{50}^{\star}(0.05)  = 353.14$. }
    \label{fig:simu2}
\end{figure}

\section{Real data analysis}\label{sec:realdata}
\subsection{PM2.5 data in Shanghai, China}\label{subsec:pm25}
We consider the Big Data Seamless $1$-km Ground-level PM$2.5$ Dataset for China available on \url{https://zenodo.org/records/6398971}, which was studied in \citet{WeiJing2020I1rP} and \citet{WeiJing2021R1hP}. We focus on the maximum of daily PM$2.5$ data of $6,402$ sampling points in Shanghai from December $18$ to December $31$, $2021$. As shown in Figure~\ref{fig:pm25}, we then select the area consisting of $40 \times 40$ points (marked red) for the GRT procedure, which covers the central urban district of Shanghai.

Following the methodology in \cite{OestingMarco2022ACTt}, we first fit the marginal distribution of the sample with the Generalized Extreme Value (GEV) distribution to estimate the location, scale, and shape parameters, where the location parameter is modeled as a linear function of longitude and latitude. The influence of altitude among points can be ignored due to the flat terrain of Shanghai. Samples drawn from the theoretical model must be transformed into a general max-stable process using the three estimated GEV parameters. To determine $a_{m_n}$ and the simulation-based critical value for the GRT, we simulate the estimated null model $2,000$ times. Using the same $a_{m_n}$, we perform the bootstrap procedure with $\theta=1/40$ for $2,000$ times to obtain the bootstrap-based critical value for the GRT.  

Figure~\ref{fig:PM} displays the normalized extremal integrated periodogram surfaces for PM$2.5$ data under $H_0^{\text{MMA}}$ and $H_0^{\text{BR}}$. For $H_0^{\text{MMA}}$, both simulation-based and bootstrap-based critical values lead to rejection, with the latter providing robust evidence against the null model. For $H_0^{\text{BR}}$, the bootstrap-based critical value rejects it, while the simulation-based one does not. Thus, the GRT rules out the MMA field and reveals a nuanced misfit for the BR model, highlighting its relative strength in capturing the observed extremal dependence.

\begin{figure}[htbp]
\centering
    \begin{tabular}{cc}
        \includegraphics[width=0.3\textwidth]{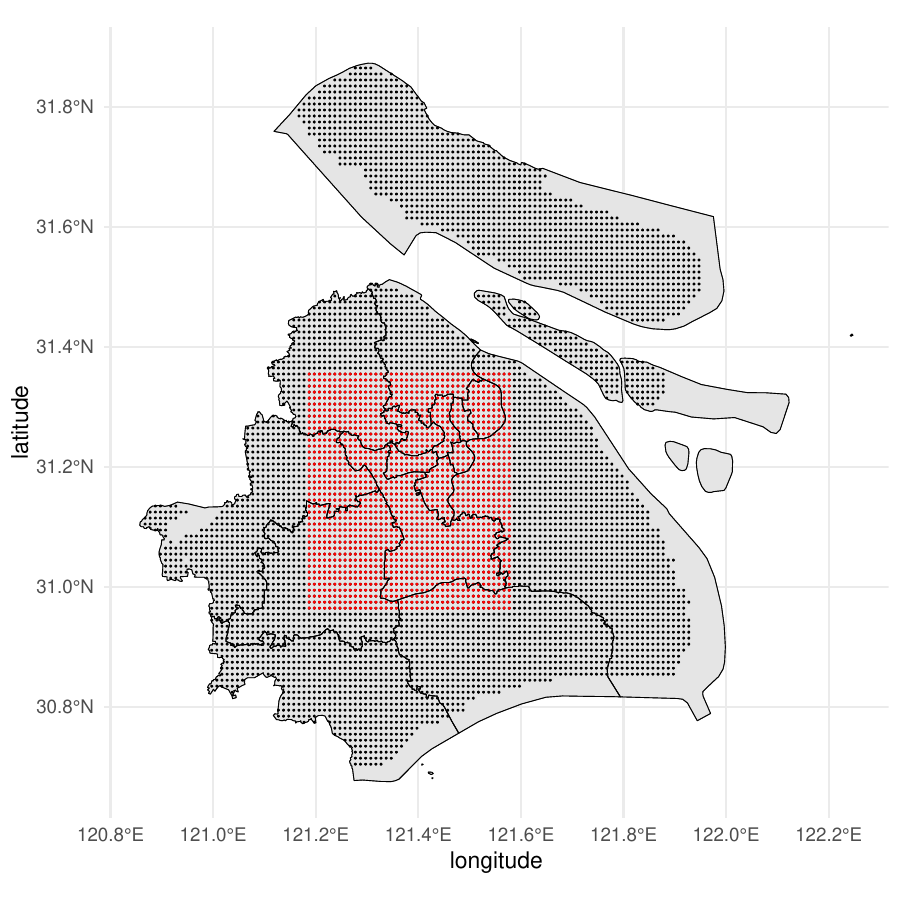}
        \includegraphics[width=0.35\textwidth]{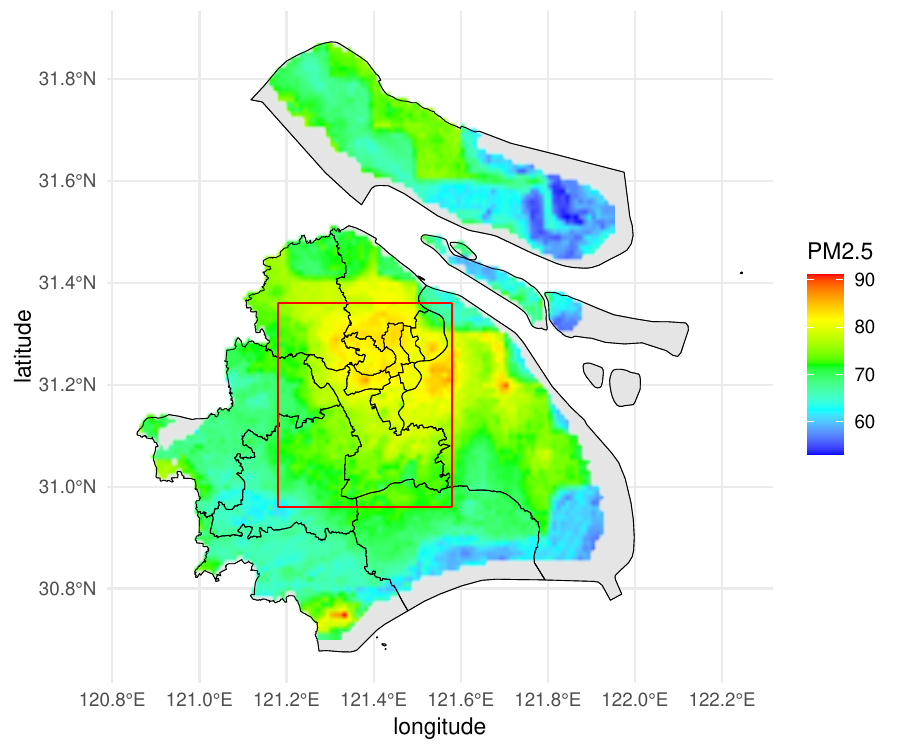}
    \end{tabular}
\caption{\textbf{Left}: 6402 sampling points at a 1-km resolution
  (black dots) and the selected $40\times 40$ grid points to be tested
  (red dots) in Shanghai. \textbf{Right}: The maximum of 
  PM$2.5$ data from December $18$ to December $31$, $2021$ in Shanghai. The red rectangle highlights the area to be tested.}
\label{fig:pm25}
\end{figure}

\begin{figure}[htbp]
    \centering
    \begin{tabular}{cc}
      \hspace{-9mm}{\includegraphics[width=0.4\textwidth]{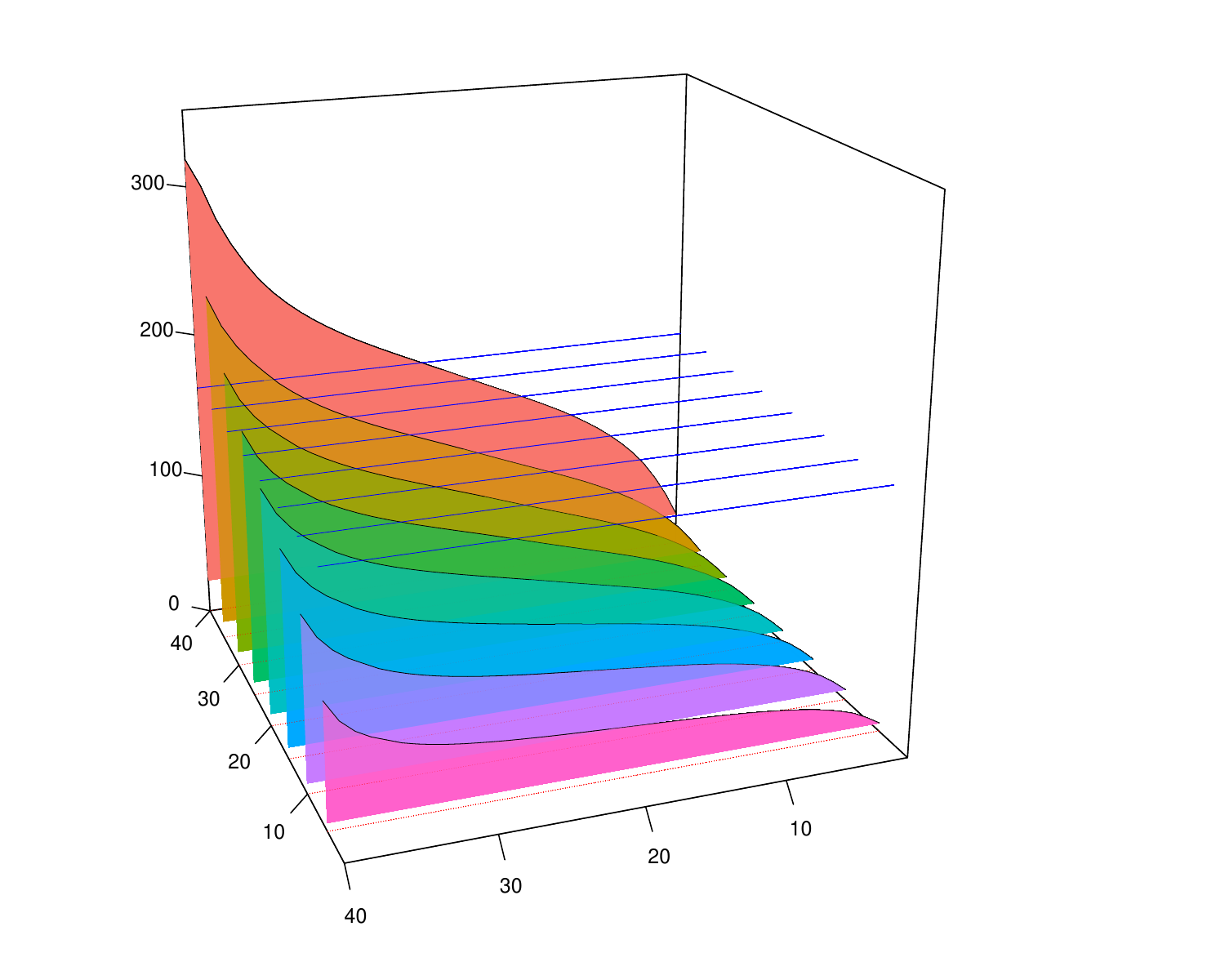}}
      \hspace{-9mm}{\includegraphics[width=0.4\textwidth]{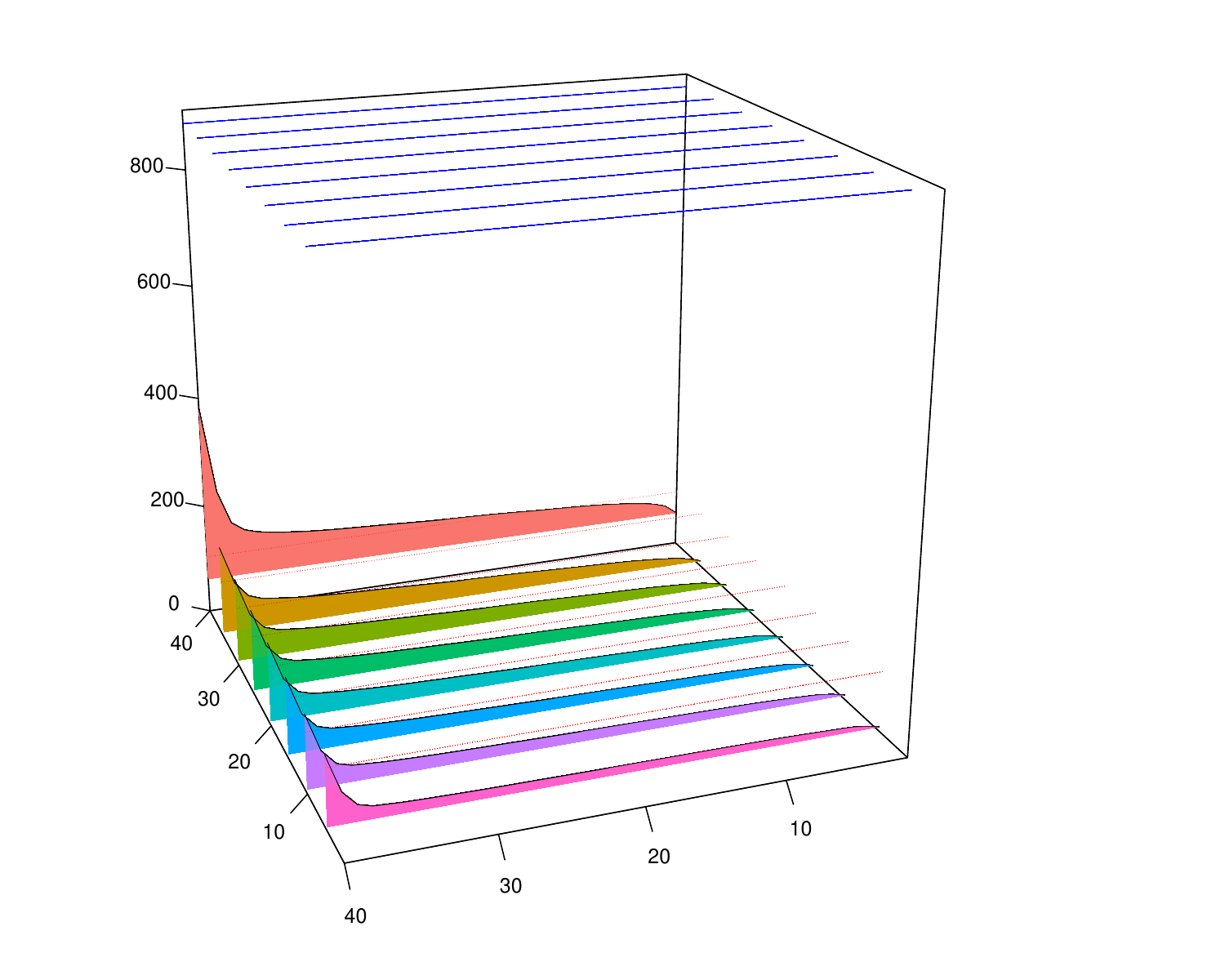}}
    \end{tabular}
    \caption{Surfaces of normalized extremal integrated periodogram
      for PM$2.5$ data under two null hypotheses. \textbf{Left}:
      Surface under $H_0^{\text{MMA}}$ with the Whittle estimation
      $\widehat{\phi}=0.22$. Simulation-based critical value (blue line):
      $c_{40}(0.05) = 162.87$; bootstrap-based critical value (red
      line): $c_{40}^{\star}(0.05) = 0$. \textbf{Right}: Surface under
      $H_0^{\text{BR}}$ with the Whittle estimation
      $\widehat{H}=0.52$. Simulation-based critical value (blue line):
      $c_{40}(0.05) = 887.64$; bootstrap-based critical value (red
      line): $c_{40}^{\star} (0.05) = 104.31$. }
    \label{fig:PM}
\end{figure}

\subsection{Temperatures in the Netherlands}\label{subsec:temp}
We further assess our goodness-of-fit test using daily temperature data in the Netherlands, as presented in \cite{OestingMarco2022ACTt}. In contrast to the regular and dense data in Section~\ref{subsec:pm25}, this analysis relies on observations from only $18$ stations; see the bold black dots in Figure~\ref{fig:tempneth}. We consider $84$ summer days (June $5$ -- August $27$) from 1990 to 2019 and take the $14$-day block maxima of daily temperature at these $18$ stations. Focusing on the last block, we interpolate data at $4712$ grid points (black dots in Figure~\ref{fig:tempneth}) via inverse distance weighted interpolation, and then select $35\times 35$ grid points for testing, as shown in Figure~\ref{fig:tempneth}.

Under $H_0^{\text{MMA}}$, we directly use the interpolated data for Whittle
estimation, thereby sampling from the theoretical model. Under $H_0^{\text{BR}}$, we follow the methodology in \citet{OestingMarco2022ACTt} to fit the BR model and to sample from the fitted model at $4,712$ grid points by treating these $177$ block maxima ($30$ years with $6$ blocks/year, the first three blocks containing missing data) as independent samples with spatial dependence. Here, parameters are estimated using the M-estimator introduced in \cite{EinmahlJohnH.J2016Most}. Given the pairs between stations as in \cite{OestingMarco2022ACTt}, we estimate $\beta$ and $s$ in the variogram of the Brown-Resnick field, $\gamma(\mathbf{h}) = \|\mathbf{h}/s\|^{\beta}$. The number of samples and the selection of $\theta$ in the stationary bootstrap algorithm are the same as in the previous example. 

In Figure~\ref{fig:Temp}, we plot the surfaces of the extremal integrated periodogram under $H_0^{\text{MMA}}$ and $H_0^{\text{BR}}$. It is shown that $H_0^{\text{MMA}}$ is not rejected using both simulation-based and bootstrap-based critical values. In contrast, $H_0^{\text{BR}}$ is rejected via a bootstrap-based critical value but not a simulation-based one. Unsurprisingly, the distribution of the sample to be tested shows a scattered pattern with some points as extreme values, which aligns sufficiently with the MMA structure.

\begin{figure}[htbp]
  \centering
   \begin{tabular}{cc}
        \includegraphics[width=0.3\textwidth]{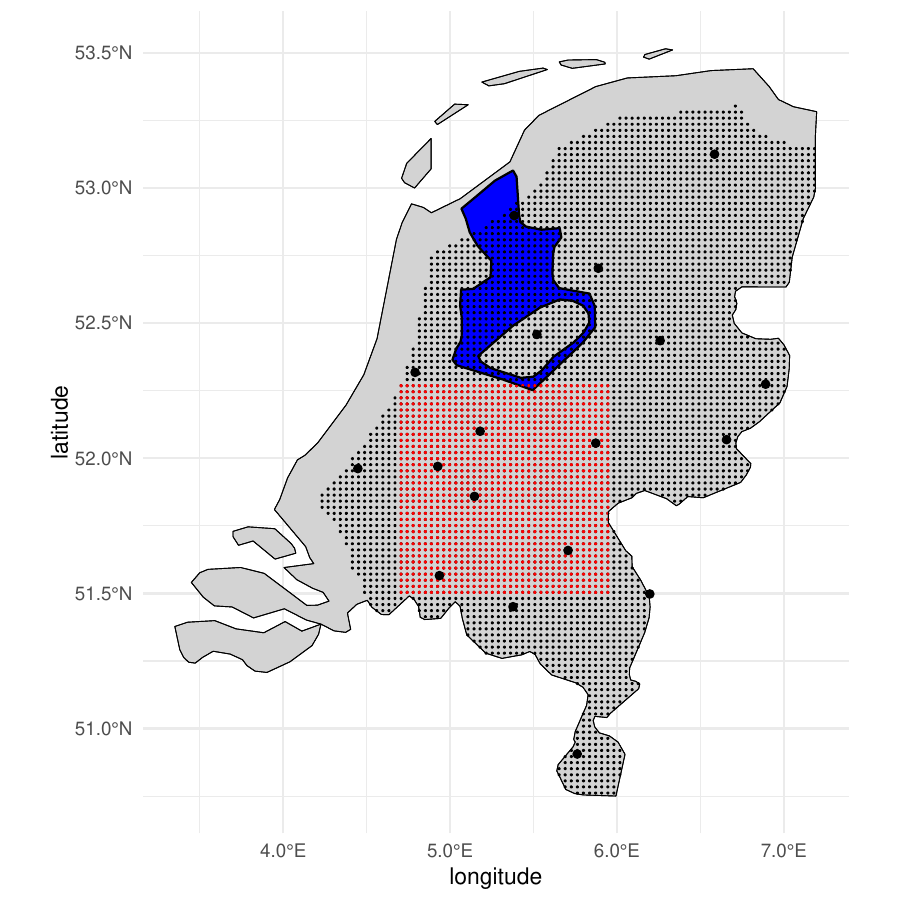}
        \includegraphics[width=0.35\textwidth]{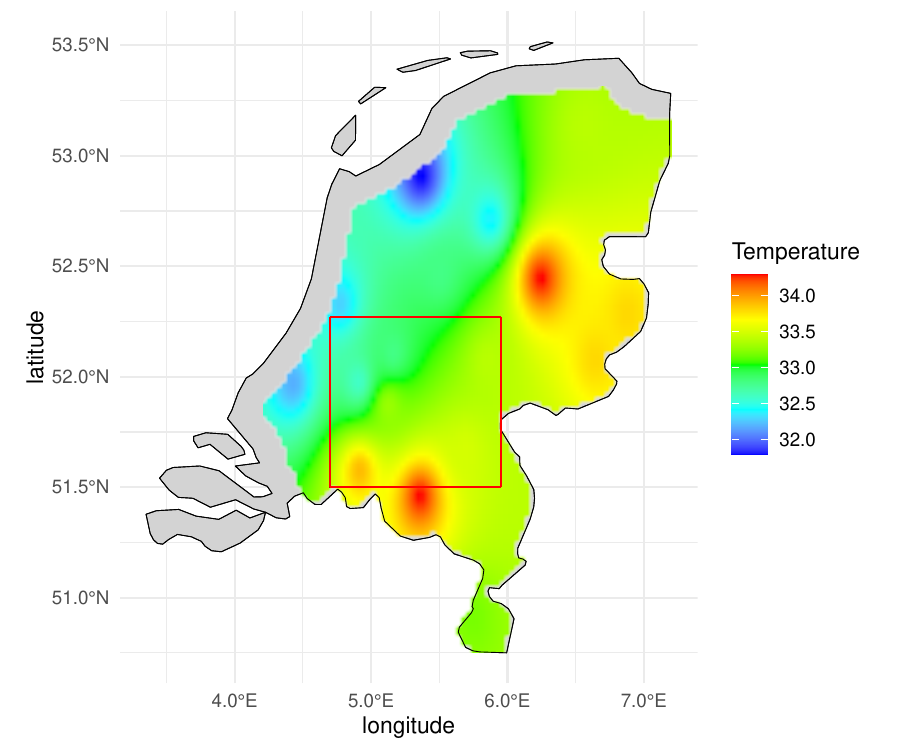}
    \end{tabular}
  \caption{\textbf{Left}: $18$ stations (bold black dots), $4,712$ inland grid points (black dots), and the selected $35\times 35$ grid points to be tested (red dots) in the Netherlands. \textbf{Right}: The interpolated data at these $4,712$ inland grid points using the maximum temperature of the $18$ stations from August $14$ to August $27$, $2019$. The red rectangle highlights the area to be tested.}
  \label{fig:tempneth}
\end{figure}

\begin{figure}[htbp]
    \centering
    \begin{tabular}{cc}
      \hspace{-9mm}{\includegraphics[width=0.4\textwidth]{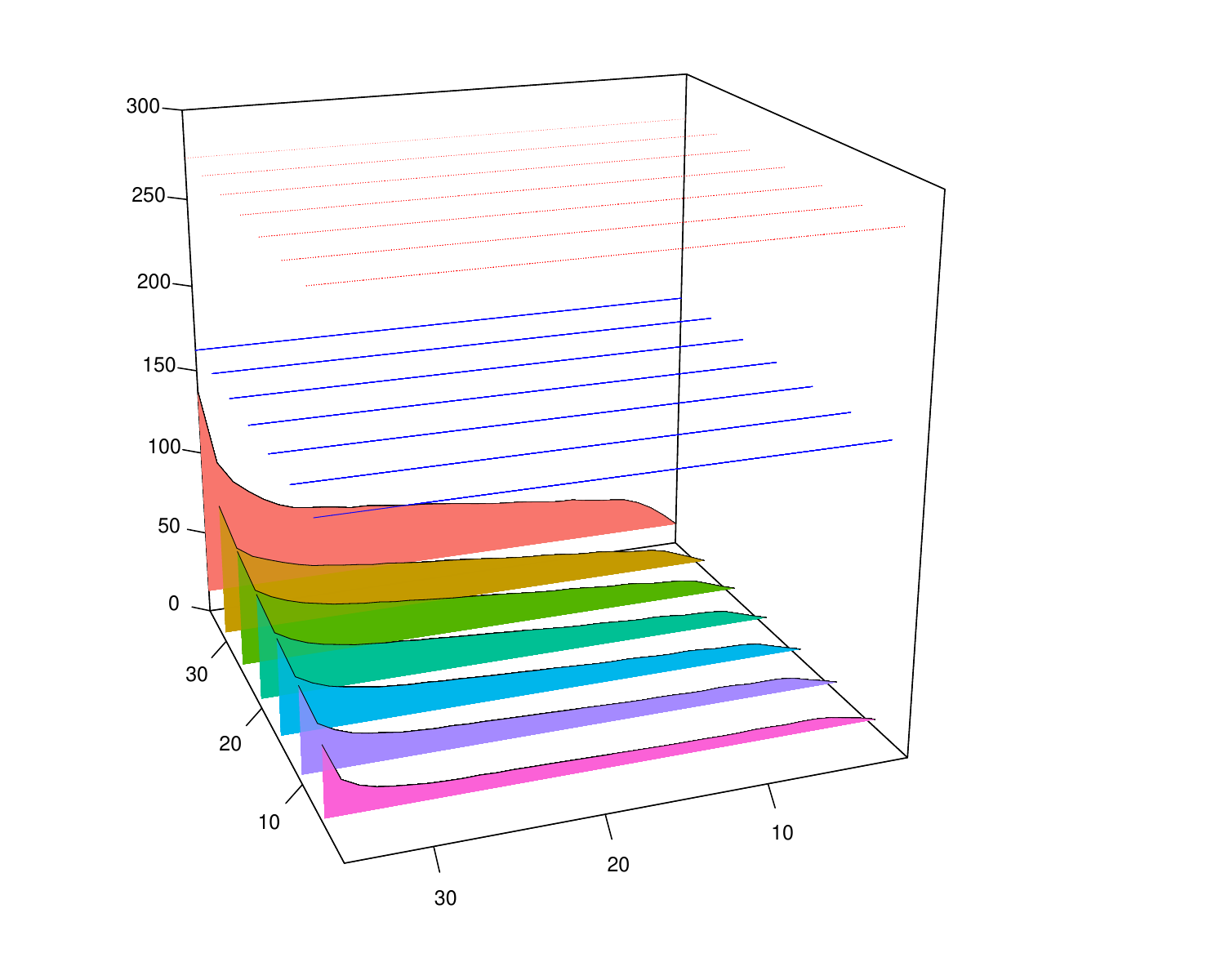}}
      \hspace{-9mm}{\includegraphics[width=0.4\textwidth]{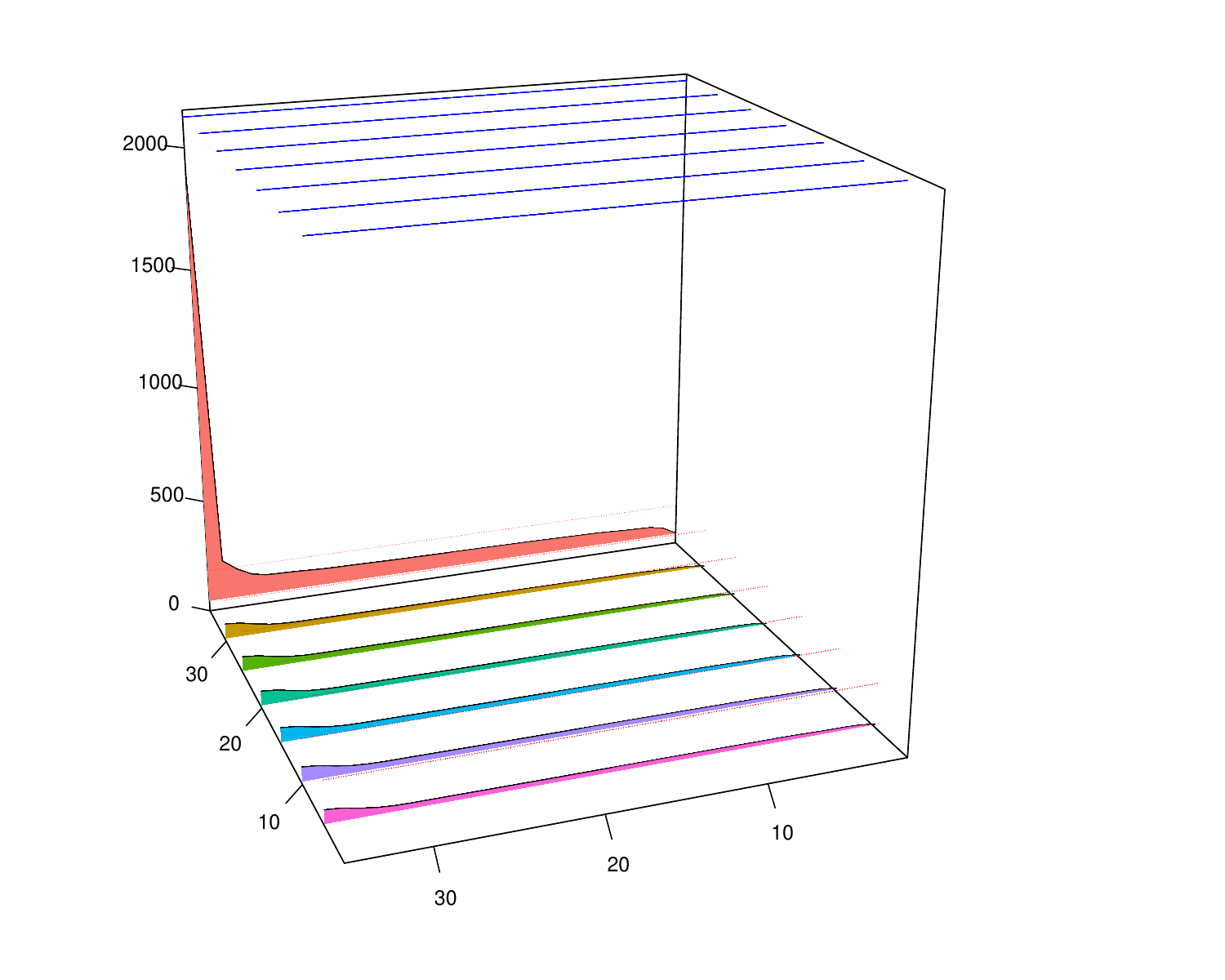}}
    \end{tabular}
    \caption{Surfaces of normalized extremal integrated periodogram
      for temperature data under two null hypotheses. \textbf{Left}:
      Surface under $H_0^{\text{MMA}}$ with the Whittle estimation
      $\widehat{\phi}=0.39$. Simulation-based critical value (blue line):
      $c_{35} (0.05) = 162.30$; bootstrap-based critical value (red
      line): $c_{35}^{\star}(0.05) = 273.31$. \textbf{Right}: Surface under
      $H_0^{\text{BR}}$ with parameter estimation $\widehat{\beta} =
      1.27$ and $\widehat{s} = 10.36$ in the variogram $\gamma(\mathbf{h})
      = \|\mathbf{h}/s\|^{\beta}$. Simulation-based critical value (blue
      line): $c_{35}(0.05) = 2122.75$; bootstrap-based critical value
      (red line): $c_{35}^{\star} (0.05) = 184.78$. }
    \label{fig:Temp}
\end{figure}

\bibliographystyle{abbrvnat}
\bibliography{ref}



\newpage
\begin{center}
{\large\bf SUPPLEMENTARY MATERIAL}
\end{center}

This supplementary material contains the proofs of the theorems in our main paper \citep{niu:2025a}. Appendix A collects some fundamental results, Appendix B presents the proofs of Theorems 1-2 in Section 3, and Appendix C provides the proofs of Theorems 3-5 in Section 4. Throughout this material, we denote by $c > 0$ a universal constant that may vary across lines.

\begin{appendices}

\section{Preliminaries}
This section collects some fundamental results that will be used in subsequent proofs. 

\subsection{Mixing bounds for the extremal indicator processes}
The following lemma provides essential bounds for the cross-expectations of the centered extremal indicators $\widetilde{I}_{\mathbf{i}}$.

\begin{lemma}\label{lem:mixing}
Suppose that Assumption~1 is satisfied and let
$\mathbf{i}, \mathbf{j}, \mathbf{k}, \mathbf{l} \in \mathbb{Z}^2$. Then the following bounds hold: 
\begin{enumerate}[(a) ]
\item For $\mathbf{i}, \mathbf{j} \in \mathbb{Z}^2$, 
\begin{align*}
  \big| \E[\widetilde{I}_{\mathbf{i}} \widetilde{I}_{\mathbf{j}}]
  \big| \le \alpha(\|\mathbf{i} - \mathbf{j}\|)\,.
\end{align*} 
\item For $\mathbf{i} \notin \{\mathbf{j} , \mathbf{k},
\mathbf{l}\}$, 
\begin{align*} 
\big| \E[\widetilde{I}_{\mathbf{i}} \widetilde{I}_{\mathbf{j}}
  \widetilde{I}_{\mathbf{k}} \widetilde{I}_{\mathbf{l}}]  \big| \le c
  \alpha(d_{\mathbf{i};\mathbf{j} \mathbf{k} \mathbf{l}})\,,
\end{align*}
where $d_{\mathbf{i}; \mathbf{j} \mathbf{k} \mathbf{l}} $ is the
distance between $\{\mathbf{i} \}$ and $\{\mathbf{j}, \mathbf{k},
\mathbf{l}\}$ with respect to $\| \cdot \|$. 
\item For disjoint $\{
\mathbf{i}, \mathbf{j} \}$, $\{ \mathbf{k}, \mathbf{l}\}$,
\begin{align*} 
\big| \E[\widetilde{I}_{\mathbf{i}} \widetilde{I}_{\mathbf{j}}
  \widetilde{I}_{\mathbf{k}} \widetilde{I}_{\mathbf{l}}] -
  \E[\widetilde{I}_{\mathbf{i}} \widetilde{I}_{\mathbf{j}}]
  \E[\widetilde{I}_{\mathbf{k}} \widetilde{I}_{\mathbf{l}}] \big| \le c
  \alpha(d_{\mathbf{i} \mathbf{j}; \mathbf{k} \mathbf{l}})\,,
\end{align*}
where $d_{\mathbf{i} \mathbf{j}; \mathbf{k} \mathbf{l}}$ is the
distance between $\{ \mathbf{i}, \mathbf{j}\}$ and $\{\mathbf{k},
\mathbf{l}\}$.
\end{enumerate} 
\end{lemma}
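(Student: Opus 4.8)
The plan is to derive all three bounds from the elementary covariance inequality for $\alpha$-mixing random variables, exploiting two facts about the centered indicators. First, since $\E[I_{\mathbf{t}}] = \P(|X_{\mathbf{0}}| > a_{m_n}) = p_n(\mathbf{0})$, the variable $\widetilde{I}_{\mathbf{t}} = I_{\mathbf{t}} - p_n(\mathbf{0})$ is centered; second, $|\widetilde{I}_{\mathbf{t}}| \le 1$ pointwise. Consequently, for any site set $Q \subset \mathbb{Z}^2$ the product $\prod_{\mathbf{s} \in Q}\widetilde{I}_{\mathbf{s}}$ is $\sigma(X_{\mathbf{s}}, \mathbf{s} \in Q)$-measurable and bounded by $1$ in absolute value. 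I would also record (or invoke) the classical covariance inequality: if $\eta$ is $\mathcal{A}$-measurable, $\zeta$ is $\mathcal{B}$-measurable, and $\|\eta\|_\infty, \|\zeta\|_\infty \le 1$, then $|\cov(\eta,\zeta)| \le 4\,\alpha(\mathcal{A},\mathcal{B})$. Throughout, Assumption~1(b) enters through $\alpha_{j,k}(h) \le \alpha(h)$ for all $j,k$.

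Part (a) is immediate: using $\E[\widetilde{I}_{\mathbf{i}}] = \E[\widetilde{I}_{\mathbf{j}}] = 0$ one gets $\E[\widetilde{I}_{\mathbf{i}}\widetilde{I}_{\mathbf{j}}] = \P(\{|X_{\mathbf{i}}| > a_{m_n}\} \cap \{|X_{\mathbf{j}}| > a_{m_n}\}) - \P(|X_{\mathbf{i}}| > a_{m_n})\,\P(|X_{\mathbf{j}}| > a_{m_n})$, and since these events lie in the singleton $\sigma$-fields $\sigma(X_{\mathbf{i}})$ and $\sigma(X_{\mathbf{j}})$, the definition of $\alpha(\cdot,\cdot)$ bounds the right-hand side by $\alpha_{1,1}(\|\mathbf{i}-\mathbf{j}\|) \le \alpha(\|\mathbf{i}-\mathbf{j}\|)$; this also covers the degenerate case $\mathbf{i} = \mathbf{j}$, where the pair $(\{\mathbf{i}\},\{\mathbf{i}\})$ is admissible at separation $0$. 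No multiplicative constant is incurred.

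For parts (b) and (c) I would split the product of the four indicators into two blocks and apply the covariance inequality to them. In (b), set $\eta = \widetilde{I}_{\mathbf{i}}$ (mean zero, $\sigma(X_{\mathbf{i}})$-measurable) and $\zeta = \widetilde{I}_{\mathbf{j}}\widetilde{I}_{\mathbf{k}}\widetilde{I}_{\mathbf{l}}$ ($\sigma(X_{\mathbf{s}}, \mathbf{s}\in\{\mathbf{j},\mathbf{k},\mathbf{l}\})$-measurable, bounded by $1$). Because $\E[\eta] = 0$ we have $\E[\widetilde{I}_{\mathbf{i}}\widetilde{I}_{\mathbf{j}}\widetilde{I}_{\mathbf{k}}\widetilde{I}_{\mathbf{l}}] = \cov(\eta,\zeta)$, and since the underlying site sets have sizes $1$ and at most $3$ and separation $d_{\mathbf{i};\mathbf{j}\mathbf{k}\mathbf{l}}$, the covariance inequality yields $|\cov(\eta,\zeta)| \le 4\,\alpha_{1,3}(d_{\mathbf{i};\mathbf{j}\mathbf{k}\mathbf{l}}) \le 4\,\alpha(d_{\mathbf{i};\mathbf{j}\mathbf{k}\mathbf{l}})$, so $c = 4$ suffices. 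In (c), set $\eta = \widetilde{I}_{\mathbf{i}}\widetilde{I}_{\mathbf{j}}$ and $\zeta = \widetilde{I}_{\mathbf{k}}\widetilde{I}_{\mathbf{l}}$, each bounded by $1$ and measurable with respect to $\sigma(X_{\mathbf{i}},X_{\mathbf{j}})$ and $\sigma(X_{\mathbf{k}},X_{\mathbf{l}})$ respectively; the disjointness of $\{\mathbf{i},\mathbf{j}\}$ and $\{\mathbf{k},\mathbf{l}\}$ and their separation $d_{\mathbf{i}\mathbf{j};\mathbf{k}\mathbf{l}}$ give $|\E[\widetilde{I}_{\mathbf{i}}\widetilde{I}_{\mathbf{j}}\widetilde{I}_{\mathbf{k}}\widetilde{I}_{\mathbf{l}}] - \E[\widetilde{I}_{\mathbf{i}}\widetilde{I}_{\mathbf{j}}]\,\E[\widetilde{I}_{\mathbf{k}}\widetilde{I}_{\mathbf{l}}]| = |\cov(\eta,\zeta)| \le 4\,\alpha_{2,2}(d_{\mathbf{i}\mathbf{j};\mathbf{k}\mathbf{l}}) \le 4\,\alpha(d_{\mathbf{i}\mathbf{j};\mathbf{k}\mathbf{l}})$.

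There is no substantive obstacle: the work is purely bookkeeping. The two points worth care are (i) matching the cardinalities of the site sets to the subscripts of $\alpha_{j,k}$ --- in particular, coincidences among $\mathbf{j},\mathbf{k},\mathbf{l}$ in (b), or within a pair in (c), only decrease the relevant cardinality, so the cardinality bounds (at most $3$ and at most $2$, respectively) still hold; and (ii) the mean-zero property of $\widetilde{I}_{\mathbf{i}}$ in (b), which is exactly what converts the raw expectation into a covariance, and is needed because $\zeta$ is not centered. The universal constant $c$ simply absorbs the factor $4$ from the covariance inequality.
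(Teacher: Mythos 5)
Your proposal is correct: the paper itself omits the proof, stating only that the lemma is ``a simple application of Lemma B.1 in Damek et al.\ (2023)'', and your argument via the classical covariance inequality for bounded $\alpha$-mixing variables (with $\widetilde{I}_{\mathbf{t}}$ centered and $|\widetilde{I}_{\mathbf{t}}|\le 1$, splitting the four indicators into the blocks $\{\mathbf{i}\}$ vs.\ $\{\mathbf{j},\mathbf{k},\mathbf{l}\}$ in (b) and $\{\mathbf{i},\mathbf{j}\}$ vs.\ $\{\mathbf{k},\mathbf{l}\}$ in (c)) is exactly the content of that cited lemma. Your attention to the cardinality subscripts of $\alpha_{j,k}$, to the mean-zero property that turns the raw fourth moment into a covariance in (b), and to the constant $c=4$ is all sound.
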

The proof of Lemma~\ref{lem:mixing} is a simple application of Lemma B.1 in \cite{damek:mikosch:zhao:zienkiewicz:2023} and thus it is omitted.

\subsection{Theoretical bounds for second-order differences of the integral kernels}
We introduce some notations first. For $\mathbf{h} =(h_1,h_2), \mathbf{j}=(j_1, j_2), \mathbf{j}^{\prime} = (j_1^{\prime},j_2^{\prime}) \in \mathbb{Z}^2$ and two positive integers
$q_1$ and $q_2$, define
\begin{align*}
  \zeta_{\mathbf{h}}(\bm{\omega},\mathbf{j};g)
  &=\psi_{\mathbf{h}} (\omega_1 +(j_1-1)2^{-2q_1+1}\pi , \omega_2 + (j_2
    -1)2^{-2q_2+1}\pi)\,,\\ 
    \widetilde{\zeta}_{\mathbf{h}}(\bm{\omega},\mathbf{j};g)
  &=\widetilde{\psi}_{\mathbf{h}} (\omega_1 +(j_1-1)2^{-2q_1+1}\pi , \omega_2 + (j_2-1)2^{-2q_2+1}\pi)\,, 
\end{align*}
and
\begin{align*}
  \delta_{\mathbf{h}}(\bm{\omega}, \mathbf{j},\mathbf{j}^{\prime})
  & = \Big(\zeta_{\mathbf{h}}(\bm{\omega},\mathbf{j};g)-
    \zeta_{\mathbf{h}}(\bm{\omega}, j_1^{\prime}, j_2; g) \Big) -\Big(
    \zeta_{\mathbf{h}} (\bm{\omega}, j_1, j_2^{\prime}; g) -\zeta_{\mathbf{h}}
    (\bm{\omega}, \mathbf{j}^{\prime}; g) \Big)\,,\\  
  \widetilde{\delta}_{\mathbf{h}}(\bm{\omega}, \mathbf{j},\mathbf{j}^{\prime})
 & = \Big(\widetilde{\zeta}_{\mathbf{h}}(\bm{\omega},\mathbf{j};g)-\widetilde{\zeta}_{\mathbf{h}}(\bm{\omega}, j_1^{\prime}, j_2; g) \Big) -\Big( \widetilde{\zeta}_{\mathbf{h}} (\bm{\omega}, j_1, j_2^{\prime}; g) -\widetilde{\zeta}_{\mathbf{h}} (\bm{\omega}, \mathbf{j}^{\prime}; g) \Big)\,,
\end{align*}
then define
\begin{align*}
d_{\mathbf{h}} (\bm{\omega},\bm{\omega}^{\prime}, \mathbf{j}, \mathbf{j}^{\prime}) 
&=\Big( \delta_{\mathbf{h}}(\bm{\omega}, \mathbf{j},\mathbf{j}^{\prime} )
  - \delta_{\mathbf{h}}(\omega_1^{\prime}, \omega_2, \mathbf{j},\mathbf{j}^{\prime} )\Big)
  -\Big( \delta_{\mathbf{h}}(\omega_1,\omega_2^{\prime} , \mathbf{j},\mathbf{j}^{\prime} ) - \delta_{\mathbf{h}}(\bm{\omega}^{\prime}, \mathbf{j},\mathbf{j}^{\prime})\Big)\,,\\ 
\widetilde{d}_{\mathbf{h}} (\bm{\omega},\bm{\omega}^{\prime}, \mathbf{j},
  \mathbf{j}^{\prime})  
&=\Big( \widetilde{\delta}_{\mathbf{h}}(\bm{\omega}, \mathbf{j},\mathbf{j}^{\prime} )
  - \widetilde{\delta}_{\mathbf{h}}(\omega_1^{\prime}, \omega_2, \mathbf{j},\mathbf{j}^{\prime} )\Big)
  -\Big( \widetilde{\delta}_{\mathbf{h}}(\omega_1,\omega_2^{\prime} , \mathbf{j},\mathbf{j}^{\prime} ) - \widetilde{\delta}_{\mathbf{h}}(\bm{\omega}^{\prime}, \mathbf{j},\mathbf{j}^{\prime})\Big)\,.
\end{align*}

The following lemma is required in the applications of the maximal inequality. 
\begin{lemma}
\label{lem:diff}
Under Assumption~2, we have
\begin{align}
\label{eq:differences}
|d_{\mathbf{h}}( \bm{\omega}, \bm{\omega}^{\prime}, \mathbf{j},
  \mathbf{j}^{\prime} )|\le c 2^{-2q_1 -2q_2} |\omega_1
  -\omega_1^{\prime} |\, |\omega_2  -\omega_2^{\prime} |\, |j_1
  -j_1^{\prime}|\, |j_2 -j_2^{\prime}|\,, \\ \label{eq:hatdifferences}
|\widetilde{d}_{\mathbf{h}} ( \bm{\omega}, \bm{\omega}^{\prime}, \mathbf{j},
  \mathbf{j}^{\prime} )|\le c 2^{-2q_1 -2q_2} |\omega_1
  -\omega_1^{\prime} |\, |\omega_2  -\omega_2^{\prime} |\, |j_1
  -j_1^{\prime}|\, |j_2 -j_2^{\prime}|\,.
\end{align}
\end{lemma}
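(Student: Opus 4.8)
The goal is to bound the fourth-order mixed difference $d_{\mathbf{h}}$ (and its discretized version $\widetilde d_{\mathbf{h}}$) of the kernels $\psi_{\mathbf{h}}$ (resp. $\widetilde\psi_{\mathbf{h}}$) by a product of four increments. The plan is to unwind the nested difference operators and observe that $d_{\mathbf{h}}$ is nothing but a fourth mixed difference of $\psi_{\mathbf{h}}$: two of the difference directions act on the base frequency $\bm{\omega}$ (the $\omega_1,\omega_1'$ and $\omega_2,\omega_2'$ pairs appearing in the definition of $d_{\mathbf{h}}$) and the other two act on the shift indices $\mathbf{j},\mathbf{j}'$ (each shift $(j_i-1)2^{-2q_i+1}\pi$ is itself a translation in the corresponding frequency coordinate). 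Since all four operators translate in the two frequency coordinates $\omega_1$ and $\omega_2$, one of the $\{\omega_1,\text{shift}_1\}$-type operators and one of the $\{\omega_2,\text{shift}_2\}$-type operators collapse, and what remains is a genuine iterated second difference of $\psi_{\mathbf{h}}$ in the two coordinates, evaluated over intervals of lengths comparable to $|\omega_1-\omega_1'|$, $|\omega_2-\omega_2'|$, $|j_1-j_1'|2^{-2q_1+1}\pi$, $|j_2-j_2'|2^{-2q_2+1}\pi$ in an appropriate telescoped sense. This is where the factor $2^{-2q_1-2q_2}$ will come from.

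The key analytic input is the structure of $\psi_{\mathbf{h}}(\bm{\omega})=\int_{[0,\omega_1]\times[0,\omega_2]} g(\mathbf{x})\cos(\mathbf{h}^{\intercal}\mathbf{x})\,\dif\mathbf{x}$. First I would record the basic identity that a first difference in $\omega_1$ replaces the outer integral in the first variable by an integral over a short interval, i.e. $\psi_{\mathbf{h}}(\omega_1+a,\omega_2)-\psi_{\mathbf{h}}(\omega_1,\omega_2)=\int_{\omega_1}^{\omega_1+a}\int_0^{\omega_2} g(\mathbf{x})\cos(\mathbf{h}^{\intercal}\mathbf{x})\,\dif\mathbf{x}$, and similarly in $\omega_2$; taking a mixed difference in both leaves an integral over a small rectangle $[\omega_1,\omega_1+a]\times[\omega_2,\omega_2+b]$ whose area is $|a|\,|b|$ and whose integrand is uniformly bounded by $\sup g<\infty$. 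The first layer of differences in $\delta_{\mathbf{h}}$ (which is a mixed difference in the two shift directions) therefore produces a double integral over a rectangle of side-lengths $|j_1-j_1'|2^{-2q_1+1}\pi$ and $|j_2-j_2'|2^{-2q_2+1}\pi$. Applying the further mixed difference in $\omega_1,\omega_1'$ and $\omega_2,\omega_2'$ — which, after the collapsing described above, shrinks the remaining rectangle of integration to side-lengths of order $|\omega_1-\omega_1'|$ and $|\omega_2-\omega_2'|$ — bounds $|d_{\mathbf{h}}|$ by $\sup|g\cos|$ times $|\omega_1-\omega_1'|\,|\omega_2-\omega_2'|\,|j_1-j_1'|2^{-2q_1+1}\pi\,|j_2-j_2'|2^{-2q_2+1}\pi$, which is exactly \eqref{eq:differences} with $c=(2\pi)^2\sup g$. (The contributions of $\cos(\mathbf{h}^{\intercal}\mathbf{x})$ are absorbed into the $\sup$ since $|\cos|\le 1$; Assumption~2 is invoked to handle the terms where $g$ itself varies, i.e. when a difference operator hits the integrand on the ``interior'' of a rectangle rather than only shifting the limits — this is the role of the mixed-modulus-of-continuity bound \eqref{eq:quadineq}.)

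For the discretized kernel $\widetilde\psi_{\mathbf{h}}(\bm{\omega})=(4\pi^2/n^2)\sum_{i_1\le j_1}\sum_{i_2\le j_2} g(\bm\lambda_{\mathbf{i}})\cos(\mathbf{h}^{\intercal}\bm\lambda_{\mathbf{i}})$ with $j_i=\lfloor n\omega_i/(2\pi)\rfloor$, the same argument runs with Riemann sums in place of integrals: a difference in $\omega_1$ changes the upper summation index, so the mixed difference $\widetilde\delta_{\mathbf{h}}$ is a double sum over a block of indices of size comparable to $(n/2\pi)|j_1-j_1'|2^{-2q_1+1}\pi\times(n/2\pi)|j_2-j_2'|2^{-2q_2+1}\pi$, each summand bounded by $(4\pi^2/n^2)\sup g$; the further difference in $\omega_1,\omega_1',\omega_2,\omega_2'$ again collapses and shrinks the block to size of order $(n/2\pi)|\omega_1-\omega_1'|\times(n/2\pi)|\omega_2-\omega_2'|$, giving the same final bound \eqref{eq:hatdifferences} after the factors of $n$ cancel against the $4\pi^2/n^2$ prefactor. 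The main obstacle is purely bookkeeping: carefully verifying the collapsing of the nested differences so that exactly one difference survives in each of the two coordinates (rather than, say, two surviving in one coordinate and none in the other, which would give the wrong power of $2^{-2q_i}$), and handling the floor functions in the discrete case so that the number of lattice points in each residual block is genuinely $O(n\,|\omega_i-\omega_i'|)$ uniformly; I would do this by writing each quantity as an explicit iterated telescoping sum and counting terms, treating the integral and sum cases in parallel.
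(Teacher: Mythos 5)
Your first step matches the paper's: after a change of variables, $\delta_{\mathbf{h}}(\bm{\omega},\mathbf{j},\mathbf{j}')$ is the integral of $G(x_1+\omega_1,x_2+\omega_2)$, with $G(\mathbf{u})=\cos(\mathbf{h}^{\intercal}\mathbf{u})g(\mathbf{u})$, over a \emph{fixed} rectangle of side-lengths $2^{-2q_1+1}\pi|j_1-j_1'|$ and $2^{-2q_2+1}\pi|j_2-j_2'|$; that is where the factor $2^{-2q_1-2q_2}|j_1-j_1'|\,|j_2-j_2'|$ comes from. The gap is in the second step. The subsequent mixed difference in $(\omega_1,\omega_2)$ does \emph{not} shrink the region of integration to side-lengths $|\omega_1-\omega_1'|$ and $|\omega_2-\omega_2'|$: if you telescope the integration limits that way you \emph{replace} the factors $2^{-2q_i}|j_i-j_i'|$ by $|\omega_i-\omega_i'|$ and obtain only a two-factor bound $c\,|\omega_1-\omega_1'|\,|\omega_2-\omega_2'|$, not the four-factor bound of \eqref{eq:differences}. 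Nor is it true that ``exactly one difference survives in each coordinate'': two survive in each coordinate (one moving the limits of integration, one translating the integrand), and it is precisely this second-difference structure per coordinate that produces all four factors. The correct mechanism, and the one the paper uses, is to keep the rectangle fixed and note that the integrand becomes the mixed second difference of $G$ under the translations $\omega_1\to\omega_1'$, $\omega_2\to\omega_2'$, which must then be shown to be pointwise $O(|\omega_1-\omega_1'|\,|\omega_2-\omega_2'|)$.

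That pointwise bound cannot follow from $\sup|g\cos|$ alone, so your constant $c=(2\pi)^2\sup g$ is wrong: for a merely bounded, non-Lipschitz $g$ the lemma is false (concentrate $g$ near one of the four translated rectangles). The paper writes $\cos(A+B)=\cos A\cos B-\sin A\sin B$ and expands the mixed second difference of each product $f_1(u_1)f_2(u_2)g(u_1,u_2)$ by the discrete product rule into four terms of the schematic form $(\Delta f_1)(\Delta f_2)g$, $(\Delta f_1)f_2(\Delta_2 g)$, $f_1(\Delta f_2)(\Delta_1 g)$, and $f_1f_2(\Delta_{12}g)$; the first three are controlled by the Lipschitz continuity of $\cos$, $\sin$ and $g$, and only the last by \eqref{eq:quadineq}. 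So Assumption~2 together with Lipschitz continuity is the entire content of this step, not a parenthetical remark about ``terms where $g$ varies'', and the constant $c$ necessarily involves the Lipschitz constants and the constant in \eqref{eq:quadineq}. The same correction is needed for $\widetilde{d}_{\mathbf{h}}$, where a Riemann-sum version of this product-rule expansion replaces your count of lattice points in a shrunken block.
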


\begin{proof}[Proof of Lemma~\ref{lem:diff}]
According to the definition of $\psi_{\mathbf{h}}$, we have
\begin{align*}
    & \zeta_{\mathbf{h}}(\bm{\omega},\mathbf{j};g)- \zeta_{\mathbf{h} }(\bm{\omega}, j_1^{\prime}, j_2; g) \\
    = & \int_{\omega_1+ (j_1^{\prime} -1) 2^{-2q_1+1} \pi }^{\omega_1 +(j_1 -1) 2^{-2q_1 +1} \pi } \int_{0}^{\omega_2 +(j_2 -1) 2^{-2q_2 +1} \pi }  \cos (h_1 x_{1} +h_2 x_2) g(x_1, x_2)\,\dif x_1 \dif x_2 \\
    = & \int_{(j_1^{\prime} -1) 2^{-2q_1+1} \pi }^{(j_1 -1) 2^{-2q_1 +1} \pi } \int_{0}^{\omega_2 +(j_2 -1) 2^{-2q_2 +1} \pi }  \cos \big(h_1 (x_{1}+\omega_1) +h_2 x_2\big) g(x_1+\omega_1, x_2)\,\dif x_1 \dif x_2\,,\\ 
    & \zeta_{\mathbf{h}} (\bm{\omega}, j_1, j_2^{\prime}; g) -\zeta_{\mathbf{h}} (\bm{\omega}, \mathbf{j}^{\prime}; g)\\ 
    = & \int_{(j_1^{\prime} -1) 2^{-2q_1+1} \pi }^{(j_1 -1) 2^{-2q_1 +1} \pi } \int_{0}^{\omega_2 +(j_2^{\prime} -1) 2^{-2q_2 +1} \pi }  \cos \big(h_1 (x_{1}+\omega_1) +h_2 x_2\big) g(x_1+\omega_1, x_2)\,\dif x_1 \dif x_2\,,\\ 
    & \delta_{\mathbf{h}}(\bm{\omega}, \mathbf{j},\mathbf{j}^{\prime})\\
    = & \int_{(j_1^{\prime} -1) 2^{-2q_1+1} \pi }^{(j_1 -1) 2^{-2q_1 +1} \pi } \int_{\omega_2 +(j_2^{\prime} -1) 2^{-2q_2 +1} \pi }^{\omega_2 +(j_2 -1) 2^{-2q_2 +1} \pi }  \cos \big(h_1 (x_{1}+\omega_1) +h_2 x_2\big) g(x_1+\omega_1, x_2)\,\dif x_1 \dif x_2\\
    = & \int_{(j_1^{\prime} -1) 2^{-2q_1+1} \pi }^{(j_1 -1) 2^{-2q_1 +1} \pi } \int_{(j_2^{\prime} -1) 2^{-2q_2 +1} \pi }^{(j_2 -1) 2^{-2q_2 +1} \pi }  \cos \big(h_1 (x_{1}+\omega_1) +h_2 (x_2+\omega_2)\big) g(x_1+\omega_1, x_2+\omega_2)\,\dif x_1 \dif x_2\,.
\end{align*}
Then, we have
\begin{align*}
    & d_{\mathbf{h}}( \bm{\omega},\bm{\omega}^{\prime}, \mathbf{j}, \mathbf{j}^{\prime})\\ 
    = & \int_{(j_1^{\prime} -1) 2^{-2q_1+1} \pi }^{(j_1 -1) 2^{-2q_1 +1} \pi } \int_{(j_2^{\prime} -1) 2^{-2q_2 +1} \pi }^{(j_2 -1) 2^{-2q_2 +1} \pi } \Big( \cos \big(h_1 (x_{1} +\omega_1) +h_2 (x_2 +\omega_{2}) \big) \, g(x_1+\omega_1, x_2+\omega_2)\\ 
    & \quad -\cos \big(h_1 (x_{1} +\omega_1^{\prime}) +h_2 (x_2 +\omega_{2}) \big) \, g(x_1+\omega_1^{\prime}, x_2+\omega_2)\\ 
    & \quad -\cos \big(h_1 (x_{1} +\omega_1) +h_2 (x_2 +\omega_{2}^{\prime}) \big) \, g(x_1+\omega_1, x_2+\omega_2^{\prime})\\ & \quad +\cos \big(h_1 (x_{1} +\omega_1^{\prime}) +h_2 (x_2+\omega_{2}^{\prime}) \big)\, g(x_1+\omega_1^{\prime} , x_2+\omega_2^{\prime})\Big) \,\dif x_1\dif x_2\,.
\end{align*}
It is enough to prove that for every $(x_1,x_2)\in \Pi^2$, 
\begin{align}
    & \Big| \cos \big(h_1 (x_1 +\omega_1)\big)\cos \big(h_2 (x_2 +\omega_2) \big) \, g(x_1+\omega_1, x_2+\omega_2) \nonumber\\ 
    & \quad -\cos \big(h_1 (x_1 +\omega_1^{\prime}) \big) \cos \big(h_2 (x_2 +\omega_{2}) \big) \, g(x_1+\omega_1^{\prime}, x_2+\omega_2) \nonumber\\ 
    & \quad -\cos \big(h_1 (x_1 +\omega_1) \big) \cos \big( h_2 (x_2 +\omega_2^{\prime}) \big) \, g(x_1+\omega_1, x_2+\omega_2^{\prime}) \nonumber\\ 
    & \quad +\cos \big(h_1 (x_{1} +\omega_1^{\prime} \big) \cos \big( h_2 (x_2+\omega_{2}^{\prime} ) \big)\, g(x_1+\omega_1^{\prime} , x_2+\omega_2^{\prime}) \Big|\nonumber\\ 
    & \le c |\omega_1 -\omega_1^{\prime} | |\omega_2 -\omega_2^{\prime} | \,, \label{eq:diffcos}\\
    & \Big| \sin \big(h_1 (x_1 +\omega_1)\big)\sin \big(h_2 (x_2 +\omega_2) \big) \, g(x_1+\omega_1, x_2+\omega_2) \nonumber\\ 
    & \quad -\sin \big(h_1 (x_1 +\omega_1^{\prime}) \big) \sin \big(h_2 (x_2 +\omega_{2}) \big) \, g(x_1+\omega_1^{\prime}, x_2+\omega_2) \nonumber\\ 
    & \quad -\sin \big(h_1 (x_1 +\omega_1) \big) \sin \big( h_2 (x_2 +\omega_2^{\prime}) \big) \, g(x_1+\omega_1, x_2+\omega_2^{\prime}) \nonumber\\ 
    & \quad +\sin \big(h_1 (x_{1} +\omega_1^{\prime} \big) \sin \big( h_2 (x_2+\omega_{2}^{\prime} ) \big)\, g(x_1+\omega_1^{\prime} , x_2+\omega_2^{\prime}) \Big|\nonumber\\ 
    & \le c |\omega_1 -\omega_1^{\prime} | |\omega_2 -\omega_2^{\prime} | \,. \label{eq:diffsin}
\end{align}

The proofs of the inequalities \eqref{eq:diffcos} and
\eqref{eq:diffsin} are similar and we will only give the proof of
\eqref{eq:diffcos} here. The left-hand side of \eqref{eq:diffcos} can
be rewritten as 
\begin{align*}
& \Big| \big( \cos (h_1(x_1 + \omega_1)) -\cos (h_1(x_1 +\omega_1^{\prime}))
  \big) \, \big( \cos (h_2 (x_2+\omega_2)) -\cos(h_2 (x_2
  +\omega_2^{\prime})) \big)\,g(\mathbf{x} +\bm{\omega})\\ 
+& \big( \cos (h_1(x_1 + \omega_1)) -\cos (h_1(x_1 +\omega_1^{\prime}))
  \big) \, \cos(h_2 (x_2 +\omega_2^{\prime}))\, \big(g(\mathbf{x}+\bm{\omega})
   -g(x_1+\omega_1, x_2+\omega_2^{\prime}) \big)\\  
-& \cos (h_1 (x_1 +\omega_1^{\prime})) \big(\cos (h_2(x_2
   +\omega_2)) -\cos (h_2 (x_2 +\omega_2^{\prime})) \big) 
 \big(g(\mathbf{x}+\bm{\omega}^{\prime})
  -g(x_1+\omega_1, x_2+\omega_2^{\prime}) \big)  \\ 
+& \cos(h_1(x_1+\omega_1^{\prime})) \cos (h_2(x_2 +\omega_2)) \\
& \qquad \Big(g(\bm{x} +\bm{\omega}) - g(x_1
   +\omega_1^{\prime}, x_2 +\omega_2)  -g(x_1+\omega_1, x_2+\omega_2^{\prime}) +
  g(\bm{x}+\bm{\omega}^{\prime}) \Big) \Big|\,.
\end{align*}
It is easy to show that \eqref{eq:diffcos} holds under Assumption~2.

Since $\widetilde{d}_{\mathbf{h}}(\bm{\omega}, \bm{\omega}^{\prime}, \mathbf{j}, \mathbf{j}^{\prime})$ is a discretized version of
$d_{\mathbf{h}}(\bm{\omega}, \bm{\omega}^{\prime}, \mathbf{j}, \mathbf{j}^{\prime})$, \eqref{eq:hatdifferences} holds by following similar arguments for $d_{\mathbf{h}}(\bm{\omega}, \bm{\omega}^{\prime},
\mathbf{j}, \mathbf{j}^{\prime})$ and applying \eqref{eq:diffcos} and
\eqref{eq:diffsin}.
\end{proof}

\subsection{Central limit theorem for the spatial extremogram $\widetilde{\gamma}$}
The central limit theorem for $\widetilde{\gamma}$ is provided in
\cite{cho:davis:ghosh:2016}, and it is listed below.
\begin{proposition}\label{prop:extremclt}
  Suppose that Assumption 1 holds. Then for any finite set $A\subset \mathbb{Z}^2$, 
  \begin{align*}
\frac{n}{\sqrt{m_n}} \Big( \widetilde{\gamma} (\mathbf{h}) - \E[\widetilde{\gamma}(\mathbf{h})] \Big)_{\mathbf{h}\in A} \overset{d}{\to } (Z_{\mathbf{h}})_{\mathbf{h} \in A} \sim N(\mathbf{0}, \Sigma_A)\,, 
\end{align*}
where the covariance matrix $\Sigma_A$ is given in Theorem 1 of
\cite{cho:davis:ghosh:2016}.
\end{proposition}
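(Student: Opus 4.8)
Proposition~\ref{prop:extremclt} is Theorem~1 of \cite{cho:davis:ghosh:2016} quoted verbatim, so strictly speaking no proof is needed here; for orientation I sketch one way to reprove it under Assumption~1. Set $S_n(\mathbf{h}) = \sum_{\mathbf{t}}\widetilde I_{\mathbf{t}}\widetilde I_{\mathbf{t}+\mathbf{h}}$ (the sum over $\mathbf{t}$ with $\mathbf{t},\mathbf{t}+\mathbf{h}\in\Lambda_n^2$), so that $\widetilde\gamma(\mathbf{h}) = (m_n/n^2)S_n(\mathbf{h})$ and $\tfrac{n}{\sqrt{m_n}}(\widetilde\gamma(\mathbf{h}) - \E[\widetilde\gamma(\mathbf{h})]) = \tfrac{\sqrt{m_n}}{n}(S_n(\mathbf{h}) - \E[S_n(\mathbf{h})])$. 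By the Cram\'er--Wold device it suffices, for an arbitrary vector $(c_{\mathbf{h}})_{\mathbf{h}\in A}$, to prove a univariate CLT for $W_n := \tfrac{\sqrt{m_n}}{n}\sum_{\mathbf{t}\in\Lambda_n^2}(Y_{\mathbf{t}} - \E Y_{\mathbf{t}})$ with $Y_{\mathbf{t}} := \sum_{\mathbf{h}\in A}c_{\mathbf{h}}\widetilde I_{\mathbf{t}}\widetilde I_{\mathbf{t}+\mathbf{h}}$ (the boundary terms omitted from $S_n$ are of smaller order). The key structural point is that $Y_{\mathbf{t}}$ is a \emph{bounded} functional of $\{X_{\mathbf{s}} : \|\mathbf{s}-\mathbf{t}\|\le R\}$ with $R := \max_{\mathbf{h}\in A}\|\mathbf{h}\|$ fixed, so $(Y_{\mathbf{t}})_{\mathbf{t}\in\mathbb{Z}^2}$ is strictly stationary and $\alpha$-mixing with rate controlled by $\alpha(\cdot)$ of Assumption~1(b), $|Y_{\mathbf{t}}|$ is uniformly bounded, and each $\E[\widetilde I_{\mathbf{t}}\widetilde I_{\mathbf{t}+\mathbf{h}}] = O(1/m_n)$.

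\textbf{Asymptotic variance.} First I would show $\var(W_n)\to c^{\intercal}\Sigma_A c$. By stationarity, $\var(W_n) = \tfrac{m_n}{n^2}\sum_{\mathbf{t},\mathbf{s}}\cov(Y_{\mathbf{t}},Y_{\mathbf{s}}) = m_n\sum_{\mathbf{k}\in\mathbb{Z}^2}(1+o(1))\,\cov(Y_{\mathbf{0}},Y_{\mathbf{k}})$, where $\cov(Y_{\mathbf{0}},Y_{\mathbf{k}})$ is a finite linear combination of joint-exceedance probabilities of at most four sites. I would split the lag sum three ways: for $\|\mathbf{k}\| > r_n$, Lemma~\ref{lem:mixing}(c) together with \eqref{eq:mixingsum00} gives $m_n\sum_{\|\mathbf{k}\| > r_n}|\cov(Y_{\mathbf{0}},Y_{\mathbf{k}})| \to 0$; for $h < \|\mathbf{k}\| \le r_n$ the anticlustering condition \eqref{eq:anticlustering} (dominating each covariance by $p_n(\mathbf{k})$-type terms) makes the contribution vanish as $h\to\infty$; and for $\|\mathbf{k}\|$ in a bounded region, I would compute the limits of the scaled four-point tail probabilities $m_n p_n(\cdot)$ through the regular-variation representation \eqref{eq:rv1}--\eqref{eq:rv2} (equivalently, through the spectral tail field $(\Xi_{\mathbf{t}})$). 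Summing the per-lag limits reproduces exactly $c^{\intercal}\Sigma_A c$.

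\textbf{Blocking, decoupling, and the Lindeberg CLT.} Next I would run a two-dimensional Bernstein argument: tile $\Lambda_n^2$ with big squares of side $b_n$ separated by corridors of width $\ell_n$, with $r_n \ll \ell_n \ll b_n \ll n^{1/2}/m_n^{1/4}$ and $(n/b_n)^2\alpha(\ell_n)\to 0$; the super-exponential decay in Assumption~1(b) makes $\ell_n$ of order $(\log n)^{1/\tau}$ (with, e.g., $b_n = n^{1/4}$) admissible, since $m_n/n\to 0$. Writing $U_i$ for the $\tfrac{\sqrt{m_n}}{n}$-scaled centered sum of $Y_{\mathbf{t}}$ over the $i$-th big square and $V$ for the corridor remainder, the covariance estimates above give $\var(V) = O(\ell_n/b_n)\to 0$, so $W_n = \sum_i U_i + o_P(1)$; the characteristic-function mixing estimate $|\E e^{\mathrm{i}tW_n} - \prod_i \E e^{\mathrm{i}tU_i}| \le C(n/b_n)^2\alpha(\ell_n)\to 0$ then reduces the problem to a triangular array of independent copies $\widetilde U_i$. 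Since $|Y_{\mathbf{t}}|$ is bounded, $\max_i|\widetilde U_i| \le C\sqrt{m_n}\,b_n^2/n \to 0$ by the choice of $b_n$, while $\sum_i\var(\widetilde U_i) = \var\big(\sum_i U_i\big)\to c^{\intercal}\Sigma_A c$, so the Lindeberg condition holds and $\sum_i\widetilde U_i\std N(0,c^{\intercal}\Sigma_A c)$; unwinding the reductions gives the claim. The main obstacle is the variance step: one must simultaneously establish summability of the covariance series in the lag $\mathbf{k}$ --- which is exactly where \eqref{eq:anticlustering} and the mixing-rate condition are indispensable --- and evaluate the per-lag limits of the scaled four-point joint tail probabilities via the spectral tail representation; the blocking and Lindeberg parts are comparatively routine.
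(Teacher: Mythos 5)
Your identification is exactly right: the paper offers no proof of Proposition~\ref{prop:extremclt} at all, stating only that the result is quoted from Theorem~1 of \cite{cho:davis:ghosh:2016}, so the citation is the ``proof.'' Your supplementary sketch (Cram\'er--Wold, a three-regime lag decomposition of the variance using the anticlustering condition \eqref{eq:anticlustering} and the mixing bound \eqref{eq:mixingsum00}, then a two-dimensional Bernstein blocking argument with a Lindeberg/characteristic-function step) is sound and faithfully mirrors how the cited result is actually established, so there is nothing to correct.
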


\section{Proofs in Section 3} \label{sec:proof_sec3}
\subsection{Proof of Theorem 1}
For $h$ such that $0<h<r_n<n$ and large $n$, we have
  \begin{align*}
 & E[\widetilde{J}_n(\bm{\omega})] - \widetilde{J}(\bm{\omega})\\ 
 = & \sum_{\|\mathbf{h} \| \le h} \big( \E [ \widetilde{\gamma}(\mathbf{h})] -
  \gamma(\mathbf{h}) \big) \widetilde{\psi}_{\mathbf{h}} (\bm{\omega}) + \sum_{
 \| \mathbf{h} \| >h }  \E[ \widetilde{\gamma}(\mathbf{h})] \widetilde{\psi}_{\mathbf{h}}(\bm{\omega}) -
  \sum_{\| \mathbf{h} \| > h} \gamma(\mathbf{h}) \psi_{\mathbf{h}}(\bm{\omega})\\ 
 =: & Q_1 + Q_2 + Q_3\,. 
  \end{align*}
Since $\int_{\Pi^2} g^2(\bm{\omega}) \dif \bm{\omega} <+\infty$, we have
$\sup_{\mathbf{h}, \bm{\omega}} |\widetilde{\psi}_{\mathbf{h}}(\bm{\omega})| < +\infty$. According to the fact that $\E[\widetilde{\gamma} (\mathbf{h}) ] = m_n p_n(\mathbf{h}) - 1/m_n$ and $\gamma(\mathbf{h}) = m_n p_n(\mathbf{h})$, we have 
\begin{align*}
|Q_1| \le \sup_{\mathbf{h}, \bm{\omega}} |\widetilde{\psi}_{\mathbf{h}}(\bm{\omega})| \sum_{\|\mathbf{h} \| \le h} |\E[\widetilde{\gamma} (\mathbf{h}) ] -
  \gamma(\mathbf{h})| \leq c h^2 /m_n \to 0\,, \quad n\to \infty, h\to \infty\, 
\end{align*}
due to $ h^2 /m_n \leq r_n^2/m_n \leq r_n^4/m_n \to 0$. The conditions (5) and (6) imply that
\begin{align*}
|Q_2| \le c m_n \sum_{h< \| \mathbf{h}\| \le r_n} p_n(\mathbf{h}) + c m_n
  \sum_{\| \mathbf{h} \| >r_n} \alpha(\| \mathbf{h} \|) \to 0\,, \quad n\to \infty, h\to \infty.
\end{align*}
Since $\sum_{\mathbf{h}} |\gamma(\mathbf{h})| < \infty$, we have $Q_3 \to 0 $ as $n\to
\infty$ and $h\to \infty$. This completes the proof of Theorem 1.

\subsection{Proof of Theorem 2}

We provide the explicit form of the covariance function of
$(Z_{\mathbf{h}})$ in Theorem 2: 
\begin{align*}
\cov(Z_{\mathbf{h}_1}, Z_{\mathbf{h}_2}) = &\limsup_{n\to \infty} m_n \P\big(\min(|X_{\mathbf{0}}|, |X_{\mathbf{h}_1}|, |X_{\mathbf{h}_2}|) >a_{m_n} \big)\\  \nonumber
& \quad + m_n \sum_{\mathbf{s}=(s_1,s_2) \neq \mathbf{0}, \min(s_1,s_2)\ge 0} \Big( \P\big(\min(|X_{\mathbf{0}}|, |X_{\mathbf{h}_1}|, |X_{\mathbf{s}}|, |X_{\mathbf{s}+ \mathbf{h}_2}| ) >a_{m_n})\\  \nonumber
& \qquad + \P\big(\min(|X_{\mathbf{0}}|, |X_{\mathbf{h}_2}|, |X_{\mathbf{s}}|, |X_{\mathbf{s}+ \mathbf{h}_1}| ) >a_{m_n})\\  \nonumber
& \qquad + \P\big(\min(|X_{s_1,0}|, |X_{(s_1 ,0)+\mathbf{h}_1}|, |X_{0,s_2}|, |X_{(0,s_2)+ \mathbf{h}_2}| ) >a_{m_n})\\  \nonumber
& \qquad + \P\big(\min(|X_{0,s_2}|, |X_{(0 ,s_2)+\mathbf{h}_1}|, |X_{s_1, 0}|, |X_{(s_1, 0)+ \mathbf{h}_2}| ) >a_{m_n})\Big) \,,
\end{align*}
for any $\mathbf{h}_1, \mathbf{h}_2 \in \mathbb{Z}^2$.

To facilitate the proof, we will first prove the result 
\begin{align} \label{eq:fclt1}
\frac{n}{\sqrt{m_n}} (J_n -\E[J_n]) \overset{d}{\to } G\,,
\end{align}
where $J_n (\bm{\omega})= \sum_{\| \mathbf{h} \| <n} \widetilde{\gamma}(\mathbf{h})
\psi_{\mathbf{h}}(\bm{\omega})$ for $\bm{\omega} \in \Pi^2$, the random field $ G$ is the same as in (8).

The proof of Theorem 2 consists of three parts. 

\textbf{Part 1:} We establish the finite-dimensional convergence of the random field in \eqref{eq:fclt1}. For every $k\ge 1$, an application of Proposition~\ref{prop:extremclt} with $A=\{\mathbf{h}\in\mathbb{Z}^2: \|\mathbf{h}\| < k\}$ yields that
\begin{align*}
\frac{n}{\sqrt{m_n}} \big( \sum_{\|\mathbf{h} \|<k}\psi_{\mathbf{h}} (
  \widetilde{\gamma}(\mathbf{h}) - \E[\widetilde{\gamma}(\mathbf{h})] ) \big)
  \overset{d}{\to} \sum_{\| \mathbf{h} \|<k } \psi_{\mathbf{h}} Z_{\mathbf{h}}\,,
\end{align*}
where $(Z_{\mathbf{h}})$ is a mean zero Gaussian random field with
covariance structure specified in Theorem 1 of
\cite{cho:davis:ghosh:2016}.

\textbf{Part 2:} We verify the following tightness condition to complete the proof of the weak convergence in (8).

\begin{lemma} \label{lem:tightness1}
  Assume that the conditions of Theorem 2 hold. Then for
  any $\varepsilon>0$,
  \begin{align} \label{eq:tightness1}
  \lim_{h\to \infty} \limsup_{n\to \infty} \P \Big( \frac{n}{\sqrt{m_n}}
    \sup_{\bm{\omega} \in \Pi^2} \Big| \sum_{\|\mathbf{h} \|>h} \psi_{\mathbf{h}} (\bm{\omega})
    \big( \widetilde{\gamma}(\mathbf{h}) - \E[\widetilde{\gamma}(\mathbf{h})]
    \big) \Big| > \varepsilon  \Big) =0\,.
  \end{align}
\end{lemma}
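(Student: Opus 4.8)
\textbf{Proof strategy for Lemma~\ref{lem:tightness1}.}
The plan is to control the supremum over $\bm{\omega} \in \Pi^2$ of the tail sum by a chaining/maximal-inequality argument, exploiting that (i) the grid $\{\bm{\lambda}_{\mathbf{i}}\}$ is finite, so the supremum over $\Pi^2$ is attained (up to the discretization already built into $\widetilde{\psi}_{\mathbf{h}}$) over $n^2$ points, and (ii) the increments of $\bm{\omega} \mapsto \sum_{\|\mathbf{h}\|>h} \psi_{\mathbf{h}}(\bm{\omega}) (\widetilde{\gamma}(\mathbf{h}) - \E[\widetilde{\gamma}(\mathbf{h})])$ in the two coordinate directions are governed by the mixed second-order differences $d_{\mathbf{h}}(\bm{\omega},\bm{\omega}',\mathbf{j},\mathbf{j}')$ bounded in Lemma~\ref{lem:diff}. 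First I would fix the dyadic resolution level and write, for each $\mathbf{h}$, $\psi_{\mathbf{h}}(\bm{\omega})$ as a telescoping sum of the second-order differences $\delta_{\mathbf{h}}$ over a nested sequence of partitions of $\Pi^2$; this reduces $\sup_{\bm{\omega}}$ to a sum over dyadic levels $q$ of maxima over $O(2^{2q})$ rectangles of the partial sums $\sum_{\|\mathbf{h}\|>h} \delta_{\mathbf{h}}(\bm{\omega},\mathbf{j},\mathbf{j}')(\widetilde{\gamma}(\mathbf{h}) - \E[\widetilde{\gamma}(\mathbf{h})])$.

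Second I would estimate the fourth moment (or second moment, if that suffices) of each such partial sum. Expanding $\widetilde{\gamma}(\mathbf{h}) = (m_n/n^2)\sum_{\mathbf{t}} \widetilde{I}_{\mathbf{t}} \widetilde{I}_{\mathbf{t}+\mathbf{h}} + (\text{centering terms})$ and using the mixing bounds of Lemma~\ref{lem:mixing}(a)--(c), together with the anticlustering condition \eqref{eq:anticlustering} and the mixing-sum bound \eqref{eq:mixingsum00}, one obtains that
\begin{align*}
\E\Big[ \Big( \frac{n}{\sqrt{m_n}} \sum_{\|\mathbf{h}\|>h} \delta_{\mathbf{h}}(\bm{\omega},\mathbf{j},\mathbf{j}') (\widetilde{\gamma}(\mathbf{h}) - \E[\widetilde{\gamma}(\mathbf{h})]) \Big)^2 \Big] \le c\, \epsilon_h\, \|\text{(increment factors)}\|,
\end{align*}
where $\epsilon_h \to 0$ as $h \to \infty$ uniformly in $n$; the key point is that restricting to $\|\mathbf{h}\|>h$ forces either $p_n(\mathbf{h})$ to be small (via anticlustering) or the mixing coefficient $\alpha(\|\mathbf{h}\|)$ to be summable-small, so the variance of the tail piece is $o(1)$ after the $m_n$-normalization. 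The quadratic bound $|d_{\mathbf{h}}| \le c\, 2^{-2q_1-2q_2} |\omega_1 - \omega_1'|\,|\omega_2-\omega_2'|\,|j_1-j_1'|\,|j_2-j_2'|$ from Lemma~\ref{lem:diff} then supplies exactly the geometric decay in $q$ needed to sum the maximal inequality contributions over all dyadic levels; a Cram\'er--Wold / Minkowski argument over the $O(2^{2q})$ rectangles at level $q$, combined with a moment bound of order $>2$ to beat the cardinality, closes the chaining.

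Third, summing over $q$ and then letting $h \to \infty$ makes the $\epsilon_h$-factor vanish, which yields \eqref{eq:tightness1}. The main obstacle I anticipate is the moment computation in the second step: getting a variance (or fourth-moment) bound for the tail sum $\sum_{\|\mathbf{h}\|>h} \psi_{\mathbf{h}}(\bm{\omega})(\widetilde{\gamma}(\mathbf{h}) - \E[\widetilde{\gamma}(\mathbf{h})])$ that is simultaneously (a) uniform in $\bm{\omega}$, (b) small in $h$, and (c) compatible with the dyadic cardinality blow-up requires careful bookkeeping of the many index configurations arising from the product $\widetilde{I}_{\mathbf{t}}\widetilde{I}_{\mathbf{t}+\mathbf{h}}\widetilde{I}_{\mathbf{s}}\widetilde{I}_{\mathbf{s}+\mathbf{h}'}$ and their separation distances, and it is here that Lemma~\ref{lem:mixing}(b)--(c) and the precise rate conditions $r_n^4/m_n \to 0$ and $nr_n/m_n^{3/2}\to 0$ in Assumption~\ref{asm:M1} must be invoked with care. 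Once that estimate is in hand, the chaining over dyadic scales is routine given Lemma~\ref{lem:diff}.
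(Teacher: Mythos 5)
Your proposal follows essentially the same route as the paper's proof: a dyadic decomposition of both the lag range $\|\mathbf{h}\|>h$ and the frequency domain $\Pi^2$, increment control via the second-order differences $d_{\mathbf{h}}$ of Lemma~\ref{lem:diff} fed into Billingsley's maximal inequality, and a second-moment bound for the blockwise tail sums obtained from Lemma~\ref{lem:mixing} together with the anticlustering condition \eqref{eq:anticlustering} and the mixing-sum bound \eqref{eq:mixingsum00}, producing a factor $K_{hn}$ with $\lim_{h\to\infty}\limsup_{n\to\infty}K_{hn}=0$. The only minor difference is that the paper needs only second moments (the cardinality at each dyadic level is beaten by the thresholds $\varepsilon_{\mathbf{q}}=2^{-2(q_1+q_2)/\kappa}$ and the quadratic increment exponents, not by higher-order moments), so your hedge toward a fourth-moment bound is unnecessary.
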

\begin{proof}[Proof of Lemma~\ref{lem:tightness1}]
  It is sufficient to prove the following limit
  \begin{align}
  \label{eq:tightness11}
\lim_{h\to \infty} \limsup_{n\to \infty} \P \Big( \frac{n}{\sqrt{m_n}}
    \sup_{\bm{\omega} \in \Pi^2} \Big| \sum_{h_1 = h +1}^{n-1} \sum_{h_2 = h+1}^{n-1} \psi_{\mathbf{h}}(\bm{\omega}) \big( \widetilde{\gamma}(\mathbf{h}) -
    \E[\widetilde{\gamma}(\mathbf{h})]\big) \Big| > 2 \varepsilon \Big) =0\, 
  \end{align}
holds for $\varepsilon>0$. Without loss of generality, we assume that $h=2^a-1$
and $n=2^{b+1}$, where $a$ and $b$ satisfying $a<b$ are positive integers. A slight
modification of the proof is required when $h$ and $n$ do not have
such representations, which is omitted for the ease of the
explanation.

Let $\varepsilon_{\mathbf{q}} = 2^{-2(q_1+q_2)/\kappa}$ for $\mathbf{q}=(q_1,q_2)$ with integers $q_1,q_2>0$ and
some $\kappa>0$ chosen later. Fix $\varepsilon>0$. Let
\begin{align*}
Q =& \P \Big( \frac{n}{\sqrt{m_n}} \sup_{\bm{\omega} \in \Pi^2} \Big|
   \sum_{h_1=2^a}^{2^{b+1}-1} \sum_{h_2 = 2^a}^{2^{b+1}-1} \psi_{\mathbf{h}}(\bm{\omega})
   \big( \widetilde{\gamma}(\mathbf{h}) - \E[\widetilde{\gamma}(\mathbf{h})]
   \big) \Big| > 2 \varepsilon \Big)\\ 
 \le & \P \Big( \frac{n}{\sqrt{m_n}} \sum_{q_1,q_2 =a }^b \sup_{\bm{\omega}
   \in \Pi^2} \Big| \sum_{h_1=2^{q_1}}^{2^{q_1+1}-1} \sum_{h_2 =
   2^{q_2}}^{2^{q_2+1}-1} \psi_{\mathbf{h}} (\bm{\omega}) \big(
   \widetilde{\gamma}(\mathbf{h}) - \E[\widetilde{\gamma}(\mathbf{h})] \big)
   \Big| > 2 \varepsilon \Big)\\ 
 \le & \P \Big( \sum_{q_1,q_2 =a}^b \varepsilon_{\mathbf{q}} > 2 \varepsilon \Big) \\
 & + \P \Big(
   \bigcup_{q_1,q_2=a}^b \Big\{ \frac{n}{\sqrt{m_n}} \sup_{\bm{\omega} \in \Pi^2}
   \Big|\sum_{h_1 = 2^{q_1}}^{2^{q_1 +1}-1} \sum_{h_2 = 2^{q_2}}^{2^{q_2 +1} -1}
   \psi_{\mathbf{h}} (\bm{\omega}) \big( \widetilde{\gamma}(\mathbf{h}) -
   \E[\widetilde{\gamma}(\mathbf{h})] \big)\Big| > \varepsilon_{\mathbf{q}}
   \Big\} \Big)\\ 
 \le & \sum_{q_1,q_2 = a}^b \P \Big( \frac{n}{\sqrt{m_n}} \sup_{\bm{\omega}
   \in \Pi^2} \Big| \sum_{h_1 = 2^{q_1}}^{2^{q_1 +1} -1} \sum_{h_2 =
   2^{q_2}}^{2^{q_2 +1} -1} \psi_{\mathbf{h}} (\bm{\omega}) \big(
   \widetilde{\gamma}(\mathbf{h}) - \E[\widetilde{\gamma}(\mathbf{h})] \big)
   \Big| > \varepsilon_{\mathbf{q}}\Big)\\ 
 =&: \sum_{q_1, q_2 = a}^b Q_{\mathbf{q}}\,.
\end{align*}
In the above inequalities, we use the fact that the probability $\P \big( \sum_{q_1, q_2 =a}^b
\varepsilon_{\mathbf{q}} > \varepsilon \big) =0$ when $a$ is large enough. 

Given $\mathbf{v}=(v_1,v_2)$ with integers $v_1$ and $v_2$, define $L_{\mathbf{q} \mathbf{v}} = \big\{ \big((v_1-1)2^{q_1} + l_1,
(v_2 -1) 2^{q_2} + l_2\big): 1\le l_j \le 2^{q_j}\,, j=1,2 \big\}$ as a
set of vectors of integers. For $\mathbf{j} \in L_{\mathbf{q} \mathbf{v}}$ and $\bm{\omega} \in \prod_{j=1}^2 [0, 2^{-2q_j+1} \pi]$,
write 
\begin{align*}
    Y_{\mathbf{q} \mathbf{j}} (\bm{\omega}) 
    = \frac{n}{\sqrt{m_n}} \sum_{h_1 = 2^{q_1}}^{2^{q_1 +1}-1} \sum_{h_2 = 2^{q_2}}^{2^{q_2+1} -1} & \psi_{\mathbf{h}}\big( \omega_1 +(j_1 -1) 2^{-2q_1+1} \pi, \\
    & \omega_2 + (j_2-1) 2^{-2q_2+1}\pi\big) \big(\widetilde{\gamma}(\mathbf{h}) - \E[\widetilde{\gamma}(\mathbf{h})]\big)\,.
\end{align*}
We have
\begin{align*}
    Q_{\mathbf{q}} & = \P \Big( \frac{n}{\sqrt{m_n}} \max_{1\le v_i \le 2^{q_i}, i=1,2} \max_{\mathbf{j} \in L_{\mathbf{q} \mathbf{v}}} \sup_{\bm{\omega} \in \prod_{i=1}^2 [(j_i -1) 2^{-2 q_i +1} \pi, j_i 2^{- 2 q_i +1} \pi]} \\
    & \qquad \Big| \sum_{h_1 = 2^{q_1}}^{2^{q_1 +1} -1} \sum_{h_2 = 2^{q_2}}^{2^{q_2 +1} -1} \psi_{\mathbf{h}} (\bm{\omega}) \big( \widetilde{\gamma}(\mathbf{h}) - \E[\widetilde{\gamma} (\mathbf{h})] \big) \Big| > \varepsilon_{\mathbf{q}} \Big)\\ 
    & \le \sum_{v_1 =1}^{2^{q_1}} \sum_{v_2 =1}^{2^{q_2}} \P \Big( \max_{\mathbf{j} \in L_{\mathbf{q} \mathbf{v}}}   \sup_{\bm{\omega} \in \prod_{i=1}^2 [0, 2^{-2 q_i+1} \pi]} |Y_{\mathbf{q} \mathbf{j} } (\bm{\omega})| > \varepsilon_{\mathbf{q}}\Big)\\ 
    & =: \sum_{v_1 =1}^{2^{q_1}} \sum_{v_2 =1}^{2^{q_2}} Q_{\mathbf{q} \mathbf{v}}\,.
\end{align*}

Recall the definition of $d_{\mathbf{h}}(\bm{\omega}, \bm{\omega}^{\prime}, \mathbf{j}, \mathbf{j}^{\prime})$ above Lemma \ref{lem:diff}. Our next task is to derive the upper bound of $Q_{\mathbf{q} \mathbf{v}}$, which needs to prove the inequality
\begin{align}
\label{eq:maxinequ}
& \frac{n^2}{m_n} \E \Big[ \sum_{h_1 = 2^{q_1}}^{2^{q_1 +1 }-1} \sum_{h_2
  = 2^{q_2}}^{2^{q_2 +1} -1} \big( \widetilde{\gamma}(\mathbf{h}) -
  \E[\widetilde{\gamma}(\mathbf{h})] \big) d_{\mathbf{h}} (\bm{\omega},
  \bm{\omega}^{\prime}
  , \mathbf{j}, \mathbf{j}^{\prime}) \Big]^2 \le c \prod_{i=1}^2
  2^{-2 q_i} \big|j_i - j_i^{\prime} \big|^2 \big| \omega_i - \omega_i^{\prime} \big|^2 K_{hn}\,,
\end{align}
where $K_{hn}$ satisfies $\lim_{h\to \infty} \limsup_{n\to \infty} K_{hn} =0$.

According to Lemma~\ref{lem:diff}, we have
\begin{align*}
|d_{\mathbf{h}} (\bm{\omega}, \bm{\omega}^{\prime},\mathbf{j},
  \mathbf{j}^{\prime})| \le c \prod_{i=1}^2 2^{-2 q_i} |j_i - j_i^{\prime}| |\omega_i - \omega_i^{\prime}|\,.
\end{align*}
Thus,
\begin{align*}
& \frac{n^2}{m_n} \E \Big[ \Big( \sum_{h_1 = 2^{q_1}}^{2^{q_1 + 1} -1}
  \sum_{h_2 = 2^{q_2}}^{2^{q_2 +1} -1} \big( \widetilde{\gamma}(\mathbf{h})
  - \E[\widetilde{\gamma}(\mathbf{h})] \big) d_{\mathbf{h}}(\bm{\omega},
  \bm{\omega}^{\prime}, \mathbf{j}, \mathbf{j}^{\prime}) \Big)^2  \Big]\\ 
\le & c \prod_{i=1}^2 2^{- 4 q_i} |j_i - j_i^{\prime}|^2 |\omega_i - \omega_i^{\prime}|^2
  \Big( \frac{n^2}{m_n} \E  \Big[ \Big( \sum_{h_1 = 2^{q_1}}^{2^{q_1 + 1} -1}
  \sum_{h_2 = 2^{q_2}}^{2^{q_2 +1} -1} \big( \widetilde{\gamma}(\mathbf{h})
  - \E[\widetilde{\gamma}(\mathbf{h})] \big) \Big)^2  \Big]\Big)\,.
\end{align*}

Therefore, it is enough to prove 
\begin{align}\label{eq:covbound}
\frac{n^2}{m_n} \E  \Big[ \Big( \sum_{h_1 = 2^{q_1}}^{2^{q_1 + 1} -1}
  \sum_{h_2 = 2^{q_2}}^{2^{q_2 +1} -1} \big( \widetilde{\gamma}(\mathbf{h})
  - \E[\widetilde{\gamma}(\mathbf{h})] \big) \Big)^2  \Big] \le 2^{2q_1 + 2
  q_2} K_{hn}\,.
\end{align}
We will consider two cases: (a)  $h< \max_{i=1,2}2^{q_i+1} \le 3r_n$; 
(b) $\max_{i=1,2} 2^{q_i +1} > 3r_n$. 
Write $\sum_{\mathbf{h}_1, \mathbf{h}_2}^{(\mathbf{q})} = \sum_{h_{11} = 2^{q_1}}^{2^{q_1 +1} -1} \sum_{h_{12}
  =2^{q_2}}^{2^{q_2+1} -1} \sum_{h_{21} = 2^{q_1}}^{2^{q_1 +1} -1} \sum_{h_{22}
  =2^{q_2}}^{2^{q_2+1} -1}$. Then
\begin{align*}
& \frac{n^2}{m_n} \E  \Big[ \Big( \sum_{h_1 = 2^{q_1}}^{2^{q_1 + 1} -1}
  \sum_{h_2 = 2^{q_2}}^{2^{q_2 +1} -1} \big( \widetilde{\gamma}(\mathbf{h})
  - \E[\widetilde{\gamma}(\mathbf{h})] \big) \Big)^2  \Big]\\ 
= & \frac{m_n}{n^2} \sum_{\mathbf{h}_1, \mathbf{h}_2}^{(\mathbf{q})} \sum_{\{(\mathbf{t},\mathbf{s}): \mathbf{t},
  \mathbf{t}+\mathbf{h}_1, \mathbf{t}+\mathbf{s}, \mathbf{t}+\mathbf{s}+\mathbf{h}_2 \in
  \Lambda_n^2 \}} \big( \E[\widetilde{I}_{\mathbf{t}}
  \widetilde{I}_{\mathbf{t}+\mathbf{h}_1}
  \widetilde{I}_{\mathbf{t}+\mathbf{s}} \widetilde{I}_{\mathbf{t} +
  \mathbf{s} +\mathbf{h}_2}] - \E[\widetilde{I}_{\mathbf{0}}
  \widetilde{I}_{\mathbf{h}_1}] \E[\widetilde{I}_{\mathbf{0}}
    \widetilde{I}_{\mathbf{h}_2}] \big)\\ 
\le & m_n   \sum_{\mathbf{h}_1, \mathbf{h}_2}^{(\mathbf{q})} \sum_{ \mathbf{s},\mathbf{s}+\mathbf{h}_2 \in
  \Lambda_n^2 } \Big| \E[\widetilde{I}_{\mathbf{0}}
  \widetilde{I}_{\mathbf{h}_1}
  \widetilde{I}_{\mathbf{s}} \widetilde{I}_{ \mathbf{s}
  +\mathbf{h}_2}] - \E[\widetilde{I}_{\mathbf{0}} 
  \widetilde{I}_{\mathbf{h}_1}] \E[\widetilde{I}_{\mathbf{0}}
  \widetilde{I}_{\mathbf{h}_2}] \Big|\\
\le & m_n  \sum_{\mathbf{h}_1, \mathbf{h}_2}^{(\mathbf{q})} \Big(
  \sum_{h<\|\mathbf{s}\|\le r_n} +
  \sum_{r_n<\|\mathbf{s}\|\le \|\mathbf{h}_1 \|+\| \mathbf{h}_2\|+r_n} +
  \sum_{\|\mathbf{s}\|> \|\mathbf{h}_1 \|+\| \mathbf{h}_2\|+r_n}^{}
  \Big)\\
\qquad & \times \Big| \E[\widetilde{I}_{\mathbf{0}}
  \widetilde{I}_{\mathbf{h}_1}
  \widetilde{I}_{\mathbf{s}} \widetilde{I}_{ \mathbf{s}
  +\mathbf{h}_2}] - \E[\widetilde{I}_{\mathbf{0}} 
  \widetilde{I}_{\mathbf{h}_1}] \E[\widetilde{I}_{\mathbf{0}}
  \widetilde{I}_{\mathbf{h}_2}] \Big|\\ 
= & : P_1 + P_2 + P_3\,.
\end{align*}

We assume the first case where $h< \max_{i=1,2} 2^{q_i +1} \le 3r_n$. We have 
\begin{align*}
P_1 & \le m_n \sum_{\mathbf{h}_1,\mathbf{h}_2}^{(\mathbf{q})}
      \sum_{ h<\|\mathbf{s}\|\le r_n} \big(|\E[\widetilde{I}_{\mathbf{0}}
      \widetilde{I}_{\mathbf{s}} \widetilde{I}_{\mathbf{h}_1}
      \widetilde{I}_{\mathbf{s} +\mathbf{h}_2}]| + |p_n(\mathbf{h}_1)
      p_n(\mathbf{h}_2)| \big)\\ 
& \le  2^{2q_1 + 2q_2}\big( m_n \sum_{h< \|\mathbf{h}\|\le r_n}
  p_n(\mathbf{h}) + O(m_n^{-1}) \big)\,, \\
P_2 & \le m_n \sum_{\mathbf{h}_1,\mathbf{h}_2}^{(\mathbf{q})}
      \sum_{ r_n<\|\mathbf{s}\|\le \|\mathbf{h}_1 \|+\| \mathbf{h}_2\|+r_n} \big(|\E[\widetilde{I}_{\mathbf{0}}
      \widetilde{I}_{\mathbf{s}} \widetilde{I}_{\mathbf{h}_1}
      \widetilde{I}_{\mathbf{s} +\mathbf{h}_2}]| + |p_n(\mathbf{h}_1)
      p_n(\mathbf{h}_2)| \big)\\ 
   & \le 2^{2q_1 + 2q_2} \big( m_n \sum_{\|\mathbf{h}\|>r_n}
     \alpha(\|\mathbf{h} \|) + O(m_n^{-1}) \big) \,,\\ 
P_3 & \le 2^{2q_1 + 2q_2} m_n \sum_{\|\mathbf{h}\|>r_n} \alpha(\|\mathbf{h}\|)\,.
\end{align*}

We consider the second case where $\max_{i=1,2} 2^{q_i +1} > 3r_n$. 
For $P_1$,
\begin{align*}
P_1 & \le m_n \sum_{\mathbf{h}_1, \mathbf{h}_2}^{(\mathbf{q})} \sum_{h<\|\mathbf{s} \| \le r_n}\big(|
      \E[\widetilde{I}_{\mathbf{0}} \widetilde{I}_{\mathbf{s}}
      \widetilde{I}_{\mathbf{h}_1} \widetilde{I}_{\mathbf{s} +
      \mathbf{h}_2}] - \E[\widetilde{I}_{\mathbf{0}}
      \widetilde{I}_{\mathbf{s}}]
      \E[\widetilde{I}_{\mathbf{h}_1}\widetilde{I}_{\mathbf{s}
      +\mathbf{h}_2}] | +| \E[\widetilde{I}_{\mathbf{0}}
      \widetilde{I}_{\mathbf{s}}]
      \E[\widetilde{I}_{\mathbf{h}_1}\widetilde{I}_{\mathbf{s}
      +\mathbf{h}_2}]|\\ 
    & \quad +
      |\E[\widetilde{I}_{\mathbf{0}}\widetilde{I}_{\mathbf{h}_1}]
      \E[\widetilde{I}_{\mathbf{0}} \widetilde{I}_{\mathbf{h}_2}]|
      \big)\\ 
& \le m_n r_n^2  \big( \alpha(r_n) + c \big(p_n(\mathbf{0}) \big)^2 +
  \alpha(\|\mathbf{h}_1\|) \alpha(\|\mathbf{h}_2 \|) \big)\\ 
& \le 2^{2 q_1 + 2 q_2} \big(m_n r_n^2 \alpha(r_n) +O(r_n^2/m_n) \big)\,.
\end{align*}
For $P_2$,
\begin{align*}
P_2 & \le m_n \sum_{\mathbf{h}_1, \mathbf{h}_2}^{(\mathbf{q})} \sum_{ r_n<
      \|\mathbf{s} \| \le \|\mathbf{h}_1\| + \|\mathbf{h}_2\| +r_n}\big(|
      \E[\widetilde{I}_{\mathbf{0}} \widetilde{I}_{\mathbf{s}}
      \widetilde{I}_{\mathbf{h}_1} \widetilde{I}_{\mathbf{s} +
      \mathbf{h}_2}] | +| \E[\widetilde{I}_{\mathbf{0}}
      \widetilde{I}_{\mathbf{h}_1}]
      \E[\widetilde{I}_{\mathbf{0}}\widetilde{I}_{\mathbf{h}_2}]|
      \big)\\
& \le 2^{2q_1 + 2q_2} \Big( m_n \sum_{\|\mathbf{h}\|>r_n}
  \alpha(\|\mathbf{h}\|) + m_nn^2 \alpha(\|\mathbf{h}_1\|) \alpha(\|\mathbf{h}_2\|) \Big)\,.
\end{align*}
For $P_3$,
\begin{align*}
P_3 \le 2^{2q_1 + 2q_2} m_n \sum_{\|\mathbf{h}\|>r_n} \alpha(\|\mathbf{h}\|)\,.
\end{align*}
Let 
\begin{align*}
K_{hn} = m_n \sum_{\|\mathbf{h} \| >r_n} \alpha(\|\mathbf{h} \|) + m_n \sum_{h
  <\|\mathbf{h} \|\le r_n} p_n(\mathbf{h}) +c m_n^{-1}\,,
\end{align*}
which completes the proof of \eqref{eq:covbound}.

Notice that
\begin{align*}
 \P \Big( \sup_{\bm{\omega} \in \prod_{i=1}^2[0, 2^{-2q_i+1} \pi]}
  \big|Y_{\mathbf{q} \mathbf{j}} - Y_{\mathbf{q} \mathbf{j}^{\prime}}
  \big| > \varepsilon_{\mathbf{q}} \Big)
& \le c \varepsilon^{-2}_{\mathbf{q}} \prod_{i=1}^2 2^{-2 q_i} (2^{-2 q_i+1} \pi)^2 (j_i -
  j_i^{\prime})^2 K_{hn}\\
& \le c \prod_{i=1}^2 2^{4 q_i (\kappa^{-1}-3/2)} (j_i - j_i^{\prime})^2 K_{hn}\,.
\end{align*}
In the above inequality, we apply the maximal inequality (Theorem 10.2 in
\cite{billingsley:1999}) twice with respect to $\bm{\omega}$. Now we
apply the maximal inequality again to $\sup_{\bm{\omega} \in \prod_{i=1}^2[0, 2^{-2q_i+1} \pi]}
  \big|Y_{\mathbf{q} \mathbf{j}} - Y_{\mathbf{q} \mathbf{j}^{\prime}}
  \big|$ with respect to $\mathbf{j}$ and we have
\begin{align*}
Q_{\mathbf{q} \mathbf{v}} = \P\Big( \max_{\mathbf{j} \in L_{\mathbf{q}
  \mathbf{v}}}  \sup_{\bm{\omega} \in \prod_{i=1}^2[0, 2^{-2q_i+1} \pi]}
  \big|Y_{\mathbf{q} \mathbf{j}} \big| > \varepsilon_{\mathbf{q}} \Big) \le c
  \prod_{i=1}^2 2^{4 q_i (\kappa^{-1} - 1/2)} K_{hn}\,.
\end{align*}
Thus, we have
\begin{align*}
Q \le \sum_{q_1,q_2=a}^b  \sum_{v_1=1}^{2^{q_1}} \sum_{v_2=1}^{2^{q_2}} Q_{\mathbf{q} \mathbf{v}} \le c
  K_{hn} \sum_{q_1,q_2=a}^{b} \prod_{i=1}^2 2^{4 q_i (\kappa^{-1} -1/2)}\,.
\end{align*}
For $\kappa>2$, the right-hand side of the above inequality converges to
zero by first letting $n\to \infty$ and then $h\to \infty$. This completes the
proof of Lemma~\ref{lem:tightness1}.
\end{proof}

\textbf{Part 3:} We demonstrate that the difference between $J_n(\bm{\omega}) - \E[J_n(\bm{\omega})]$ and $\widetilde{J}_n(\bm{\omega}) - \E[\widetilde{J}_n (\bm{\omega})]$ is negligible in probability.

\begin{lemma}\label{lem:diffintperiod}
 Assume that the conditions of Theorem 2 hold. Then for
 any $\varepsilon >0$, 
 \begin{align*}
\lim_{n\to \infty} \P\Big( \frac{n}{\sqrt{m_n}} \sup_{\bm{\omega} \in \Pi^2} \Big| \big(
   J_n(\bm{\omega}) - \E[J_n(\bm{\omega})]  \big) - \big(
   \widetilde{J}_n(\bm{\omega}) - \E[\widetilde{J}_n (\bm{\omega})] \big) \Big|
   >\varepsilon \Big) = 0\,. 
 \end{align*}
\end{lemma}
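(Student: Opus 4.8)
The plan is to rewrite the quantity as
$\tfrac{n}{\sqrt{m_n}}\sup_{\bm{\omega}\in\Pi^2}\bigl|\sum_{\|\mathbf{h}\|<n}\bigl(\widetilde{\gamma}(\mathbf{h})-\E[\widetilde{\gamma}(\mathbf{h})]\bigr)\bigl(\psi_{\mathbf{h}}(\bm{\omega})-\widetilde{\psi}_{\mathbf{h}}(\bm{\omega})\bigr)\bigr|$ — using that $\psi_{\mathbf{h}}$ and $\widetilde{\psi}_{\mathbf{h}}$ are deterministic, so the centerings of $J_n$ and $\widetilde{J}_n$ subtract off $\sum_{\mathbf{h}}\E[\widetilde{\gamma}(\mathbf{h})](\psi_{\mathbf{h}}-\widetilde{\psi}_{\mathbf{h}})$ — and then to split the sum over $\mathbf{h}$ at a fixed truncation level $h$, treating the head $\|\mathbf{h}\|\le h$ and the tail $\|\mathbf{h}\|>h$ separately. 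As in the proof of Lemma~\ref{lem:tightness1}, the statement then follows by letting $n\to\infty$ first and $h\to\infty$ afterwards: for each fixed $h$ the head is $o_{\P}(1)$, so $\limsup_{n}\P(\text{total}>\varepsilon)\le\limsup_{n}\P(\text{tail}>\varepsilon/2)$, and the right-hand side will be shown to vanish as $h\to\infty$, forcing the $h$-free left-hand side to be $0$.

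For the head, the point is that $\widetilde{\psi}_{\mathbf{h}}$ is a right-endpoint Riemann-sum discretization of $\psi_{\mathbf{h}}$. Bounding the quadrature error over the grid, together with the integral over the thin boundary strip $[0,\omega_1]\times[0,\omega_2]\setminus[0,2\pi j_1/n]\times[0,2\pi j_2/n]$ not covered by the sum, and using that $g$ is Lipschitz and bounded, I would get $\sup_{\bm{\omega}\in\Pi^2}|\psi_{\mathbf{h}}(\bm{\omega})-\widetilde{\psi}_{\mathbf{h}}(\bm{\omega})|\le c(1+\|\mathbf{h}\|)/n$, which $\to0$ as $n\to\infty$ for each fixed $\mathbf{h}$. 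Since Proposition~\ref{prop:extremclt} gives $\tfrac{n}{\sqrt{m_n}}\bigl(\widetilde{\gamma}(\mathbf{h})-\E[\widetilde{\gamma}(\mathbf{h})]\bigr)=O_{\P}(1)$ for each $\mathbf{h}$, every summand with $\|\mathbf{h}\|\le h$ is $o_{\P}(1)$, and there are only finitely many of them, so the head contributes $o_{\P}(1)$ for every fixed $h$.

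For the tail I would bound $\sup_{\bm{\omega}}\bigl|\sum_{\|\mathbf{h}\|>h}(\psi_{\mathbf{h}}-\widetilde{\psi}_{\mathbf{h}})(\widetilde{\gamma}(\mathbf{h})-\E[\widetilde{\gamma}(\mathbf{h})])\bigr|$ by $\sup_{\bm{\omega}}\bigl|\sum_{\|\mathbf{h}\|>h}\psi_{\mathbf{h}}(\widetilde{\gamma}-\E[\widetilde{\gamma}])\bigr|+\sup_{\bm{\omega}}\bigl|\sum_{\|\mathbf{h}\|>h}\widetilde{\psi}_{\mathbf{h}}(\widetilde{\gamma}-\E[\widetilde{\gamma}])\bigr|$. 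The first term is $\lim_{h}\limsup_{n}$-negligible by Lemma~\ref{lem:tightness1} itself. The second term I would handle by repeating the proof of Lemma~\ref{lem:tightness1} with $\psi_{\mathbf{h}}$ replaced by $\widetilde{\psi}_{\mathbf{h}}$ and $d_{\mathbf{h}}$ by $\widetilde{d}_{\mathbf{h}}$: the variance estimate \eqref{eq:covbound} involves only $\widetilde{\gamma}$ and is unchanged, and the second-order difference bound \eqref{eq:hatdifferences} of Lemma~\ref{lem:diff} plays exactly the role that \eqref{eq:differences} plays there, with the same constant, so the maximal-inequality/chaining argument (choosing $\kappa>2$) goes through verbatim. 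A union bound over the head and the two tail pieces then finishes the proof.

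The main obstacle is the tail. The naive route — pulling the $\bm{\omega}$-supremum inside the sum and using $\sup_{\bm{\omega}}|\psi_{\mathbf{h}}-\widetilde{\psi}_{\mathbf{h}}|\le c(1+\|\mathbf{h}\|)/n$ everywhere — fails, because $\sum_{\|\mathbf{h}\|<n}(1+\|\mathbf{h}\|)$ is of order $n^3$, and after multiplying by $n/\sqrt{m_n}$ and using the fluctuation scale $\E|\widetilde{\gamma}(\mathbf{h})-\E[\widetilde{\gamma}(\mathbf{h})]|\le c\sqrt{m_n}/n$ one is left with a divergent factor of order $n^2$. What rescues the argument is the observation that the tail needs no smallness of $\psi_{\mathbf{h}}-\widetilde{\psi}_{\mathbf{h}}$ at all; it only needs the chaining machinery already developed for $\psi_{\mathbf{h}}$, which decouples the supremum over $\bm{\omega}$ from the summation over $\mathbf{h}$ and, through the anticlustering condition \eqref{eq:anticlustering} and the mixing bound \eqref{eq:mixingsum00}, makes the tail uniformly negligible once $h$ is large. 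The remaining work is bookkeeping: checking that \eqref{eq:hatdifferences} matches \eqref{eq:differences} so the chaining estimate transfers to $\widetilde{\psi}_{\mathbf{h}}$ without change, and verifying the one-dimensional edge sums (where only one component of $\mathbf{h}$ exceeds $h$) exactly as in Lemma~\ref{lem:tightness1}.
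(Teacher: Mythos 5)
Your proposal is correct and follows essentially the same route as the paper: the same three-way split (head with $\psi_{\mathbf{h}}-\widetilde{\psi}_{\mathbf{h}}$, tail with $\psi_{\mathbf{h}}$ handled by Lemma~\ref{lem:tightness1}, tail with $\widetilde{\psi}_{\mathbf{h}}$ handled by rerunning the chaining argument with \eqref{eq:hatdifferences} in place of \eqref{eq:differences}), resting on the same key discretization bound $\sup_{\bm{\omega}}|\psi_{\mathbf{h}}(\bm{\omega})-\widetilde{\psi}_{\mathbf{h}}(\bm{\omega})|\le c\|\mathbf{h}\|/n$. The only cosmetic difference is that the paper controls the head via a Cauchy--Schwarz/variance bound rather than your ``finitely many $o_{\P}(1)$ summands'' argument, and both are valid.
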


\begin{proof}[Proof of Lemma~\ref{lem:diffintperiod}]
  For $h\ge 1$, we have
\begin{align*}
    & \P\Big( \frac{n}{\sqrt{m_n}} \sup_{\bm{\omega} \in \Pi^2} \Big| \big(   J_n(\bm{\omega}) - \E[J_n(\bm{\omega})]  \big) - \big( \widetilde{J}_n(\bm{\omega}) - \E[\widetilde{J}_n (\bm{\omega})] \big) \Big| >\varepsilon \Big)\\
    \le & \P \Big( \frac{n}{\sqrt{m_n}} \sup_{\bm{\omega} \in \Pi^2} \Big|  \sum_{\|\mathbf{h} \|\le h} (\widetilde{\gamma}(\mathbf{h}) - \E[\widetilde{\gamma}(\mathbf{h})]) (\psi_{\mathbf{h}}(\bm{\omega}) -  \widetilde{\psi}_{\mathbf{h}}(\bm{\omega})) \Big| > \varepsilon/3 \Big)\\ 
    \quad & + \P\Big( \frac{n}{\sqrt{m_n}} \sup_{\bm{\omega} \in \Pi^2} \Big|
    \sum_{\|\mathbf{h} \| > h} (\widetilde{\gamma}(\mathbf{h}) -
    \E[\widetilde{\gamma}(\mathbf{h})]) \psi_{\mathbf{h}}(\bm{\omega})
    \Big| > \varepsilon/3 \Big) \\
    \quad & + \P\Big( \frac{n}{\sqrt{m_n}} \sup_{\bm{\omega} \in \Pi^2} \Big|
    \sum_{\|\mathbf{h} \| > h} (\widetilde{\gamma}(\mathbf{h}) -
    \E[\widetilde{\gamma}(\mathbf{h})]) \widetilde{\psi}_{\mathbf{h}}(\bm{\omega})
    \Big| > \varepsilon/3 \Big)\\ 
    =: &  V_1 + V_2 + V_3\,.
\end{align*}

According to Lemma~\ref{lem:tightness1}, we have $V_2 \to 0$ as $n\to
\infty$. By similar arguments in the proof of Lemma~\ref{lem:tightness1},
we can show that $V_3 \to 0$ as $n\to \infty$ due to Lemma~\ref{lem:diff}. According to the Cauchy-Schwarz inequality, $V_1$ has upper bound
\begin{align*}
  V_1
& \le 9 \varepsilon^{-2} \frac{n^2}{m_n} \E \Big[ \sup_{\bm{\omega} \in \Pi^2}
  \Big| \sum_{\| \mathbf{h} \| \le h} (\widetilde{\gamma}(\mathbf{h}) -
  \E[\widetilde{\gamma} (\mathbf{h})]) (\psi_{\mathbf{h}}(\bm{\omega}) -
  \widetilde{\psi}_{\mathbf{h}} (\bm{\omega}))\Big|^2 \Big]\\ 
& \le c \frac{n^2}{m_n} \E \Big[ \sup_{\bm{\omega} \in \Pi^2} \sum_{\|
  \mathbf{h} \| \le h} (\widetilde{\gamma}(\mathbf{h}) -
  \E[\widetilde{\gamma}(\mathbf{h})])^2  \sum_{\|\mathbf{h} \| \le h}
  |\psi_{\mathbf{h}} (\bm{\omega}) - \widetilde{\psi}_{\mathbf{h}}(\bm{\omega})|^2 \Big]\\ 
& \le c h^2 \frac{n^2}{m_n} \sum_{\|\mathbf{h}\| \le h } \var\big(
  \widetilde{\gamma}(\mathbf{h}) \big) \left(\sup_{\|\mathbf{h}\|\le h,\bm{\omega} \in \Pi^2}
  |\psi_{\mathbf{h}}(\bm{\omega}) - \widetilde{\psi}_{\mathbf{h}}(\bm{\omega})| \right)^2	\,.
\end{align*}
Now we focus on $\sup_{\bm{\omega} \in \Pi^2} |\psi_{\mathbf{h}}(\bm{\omega}) -
  \widetilde{\psi}_{\mathbf{h}}(\bm{\omega})|$. Write $\lambda_j= 2\pi j/n$,
  $j=1,\ldots,n$ and $x_{in} =  \lfloor n \omega_i/2 \pi \rfloor$, $i=1,2$. Trivially, for
  $\bm{\omega} \in \Pi^2$,
\begin{align*}
  & \Big| \int_{\lambda_{x_{1n}}}^{\omega_1} \int_0^{\pi} \cos(h_1 x_1 + h_2x_2) g(x_1,x_2)
    \dif x_2 \dif x_1 \Big| \\
    & \quad + \Big| \int_{\lambda_{x_{1n}}}^{\omega_1} \int_0^{\pi}
    \cos(h_1 x_1 + h_2x_2) g(x_1,x_2)  \dif x_2 \dif x_1 \Big| \\
  & \quad + \Big| \int_{\lambda_{x_{1n}}}^{\omega_1} \int_{\lambda_{x_{2n}}}^{\omega_2} \cos(h_1 x_1 + h_2x_2) g(x_1,x_2)
    \dif x_2 \dif x_1 \Big|\\ 
  \le & c/n\,, 
  \end{align*}
where $c$ only depends on $g$. For $\bm{\omega} \in \Pi^2$, 
\begin{align*}
& \big| \psi_{\mathbf{h}}( \lambda_{x_{1n}}, \lambda_{x_{2n}} ) -
  \widetilde{\psi}_{\mathbf{h}}(\lambda_{x_{1n}}, \lambda_{x_{2n}})  \big|\\ 
= & \Big| \sum_{i_1=1}^{x_{1n}} \sum_{i_2=1}^{x_{2n}} \int_{\lambda_{i_1
  -1}}^{\lambda_{i_1}} \int_{\lambda_{i_2 -1}}^{\lambda_{i_2}} \big( \cos(h_1 x_1 + h_2 x_2)
  g(x_1,x_2) - \cos(h_1 \lambda_{i_1}+ h_2 \lambda_{i_2}) g(\lambda_{i_1}, \lambda_{i_2})
  \big) \dif x_2 \dif x_1 \Big|\\ 
\le & \sum_{i_1=1}^{x_{1n}} \sum_{i_2=1}^{x_{2n}} \int_{\lambda_{i_1
  -1}}^{\lambda_{i_1}} \int_{\lambda_{i_2 -1}}^{\lambda_{i_2}} \big| \cos(h_1 x_1 + h_2 x_2)
  g(x_1,x_2) - \cos(h_1 x_1 + h_2 x_2)
  g(\lambda_{i_1},x_2)  \big|\\
\qquad & + \big| \cos(h_1 x_1 + h_2 x_2)
  g(\lambda_{i_1},x_2) - \cos(h_1 \lambda_{i_1} + h_2 x_2)
  g(\lambda_{i_1},x_2)\big|\\ 
\qquad & + \big| \cos(h_1 \lambda_{i_1} + h_2 x_2)
  g(\lambda_{i_1},x_2) - \cos(h_1 \lambda_{i_1} + h_2 \lambda_{i_2})
  g(\lambda_{i_1},x_2)\big|\\ 
\qquad & + \big| \cos(h_1 \lambda_{i_1} + h_2 \lambda_{i_2})
  g(\lambda_{i_1},x_2) - \cos(h_1 \lambda_{i_1} + h_2 \lambda_{i_2})
  g(\lambda_{i_1},\lambda_{i_2})\big| \dif x_2 \dif x_1\\ 
\le & c h/n\,. 
\end{align*}
As shown in Proposition~\ref{prop:extremclt}, $(n^2/m_n) \sum_{\|\mathbf{h} \| \le
  h} \var(\widetilde{\gamma}(\mathbf{h})) \le c h^2$. Thus, as $n\to \infty$,
\begin{align*}
V_1 \le c h^5/n \to 0\,.
\end{align*}
\end{proof}
This completes the proof of Theorem 2.

\section{Proofs in Section 4} \label{sec:proof_sec4}
\subsection{Proof of Theorem 3}
According to the Cram\'{e}r-Wold device, it is enough to prove that the
linear combination with arbitrary coefficients
$(a_{\mathbf{i}})_{\mathbf{i} \in A}$, 
\begin{align*}
\sum_{\mathbf{i} \in \mathbf{A}} a_{\mathbf{i}} \big( \widehat{\gamma}^{\star}(\mathbf{i})
  - \E^{\star}\big[ \widehat{\gamma}^{\star}(\mathbf{i}) \big] \big)
\end{align*}
converges in distribution to a normal random variable. For ease of explanation, we
assume that the cardinal number of $A$ is $2$. Therefore, it is
sufficient to prove that for $\mathbf{h}_1, \mathbf{h}_2 \in \mathbb{Z}^2$,
\begin{align*}
\sum_{j=1}^2 a_j \big( \widehat{\gamma}^{\star}(\mathbf{h}_j) -
  \E^{\star}[\widehat{\gamma}^{\star}(\mathbf{h}_j) ] \big) 
\end{align*}
converges to a normal random variable. When the cardinal number of $A$
is larger than $2$, the arguments are similar. 

We introduce the notation. For $\mathbf{t} \in \Lambda_n^2$, write
\begin{align*}
  I_{\mathbf{t}}(\mathbf{h} )
  & = I_{\mathbf{t}} I_{\mathbf{t} +\mathbf{h}}\,, \quad \widetilde{I}_{\mathbf{t}}(\mathbf{h}) =
 I_{\mathbf{t}}(\mathbf{h}) - \E[I_{\mathbf{t}}(\mathbf{h})]\,,\\ 
  I_{\mathbf{t}}(\mathbf{h}_1, \mathbf{h}_2)
& = \sum_{j=1}^2 a_j I_{\mathbf{t}} ( \mathbf{h}_j)\,, \quad
  \widetilde{I}_{\mathbf{t}} (\mathbf{h}_1, \mathbf{h}_{2}) =
  I_{\mathbf{t}} (\mathbf{h}_1, \mathbf{h}_2) -
  \E[I_{\mathbf{t}}(\mathbf{h}_1, \mathbf{h}_2)]\,,\\  
  \widehat{I}_{\mathbf{t}}(\mathbf{h})
&= I_{\mathbf{t}}(\mathbf{h}) - C_n(\mathbf{h})\,, \quad
  \widehat{I}_{\mathbf{t}}(\mathbf{h}_1, \mathbf{h}_2) = \sum_{j=1}^2 a_j
  \widehat{I}_{\mathbf{t}}(\mathbf{h}_j)\,. 
\end{align*}
When $\mathbf{t}, \mathbf{t}  +\mathbf{h},\mathbf{t} + \mathbf{h}_1, \mathbf{t} +\mathbf{h}_2
\in \Lambda_n^2$, $\E[I_{\mathbf{t}}(\mathbf{h})] = p_n(\mathbf{h})$ and
$\E[I_{\mathbf{t}}(\mathbf{h}_1, \mathbf{h}_2)] = \sum_{j=1}^2
a_jp_n(\mathbf{h}_j)$. Thus,
\begin{align*}
  \widehat{I}_{\mathbf{t}}(\mathbf{h})
& = \widetilde{I}_{\mathbf{t}} (\mathbf{h}) - \big( C_n(\mathbf{h}) -
  p_n(\mathbf{h})\big)\,,\\
  \widehat{I}_{\mathbf{t}} (\mathbf{h}_1, \mathbf{h}_2)
& = \sum_{j=1}^2 a_j \big( \widetilde{I}_{\mathbf{t}}(\mathbf{h}_j) -
  (C_n(\mathbf{h}_j) - p_n(\mathbf{h}_j)) \big)\,.
\end{align*}

The following lemma provides the bound of the difference between
$C_n(\mathbf{h})$ and $p_n(\mathbf{h})$.
\begin{lemma} \label{lem:negcenters}
 Given the conditions of Theorem 3, the
 following limit
\begin{align} \label{eq:negcenters00}
m_n \E  \big[ \big( C_n(\mathbf{h}) - p_n(\mathbf{h}) \big)^2 \big] =
  O(r_n^2/n^2)\,, \quad \|\mathbf{h}\|\le r_n\,, 
\end{align}
holds.
\end{lemma}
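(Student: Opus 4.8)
The plan is to bound the mean-squared error by the usual bias–variance split, $\E[(C_n(\mathbf{h})-p_n(\mathbf{h}))^2] \le 2\,\var(C_n(\mathbf{h})) + 2\big(\E[C_n(\mathbf{h})]-p_n(\mathbf{h})\big)^2$, keeping every constant independent of $\mathbf{h}$ (for $\|\mathbf{h}\|\le r_n$) so that the estimate is uniform; by isotropy we may take $h_1,h_2\ge 0$, as in Section~4. The structural point is to isolate the wrap-around terms produced by the modular index convention: write $C_n(\mathbf{h}) = D_n(\mathbf{h}) + R_n(\mathbf{h})$, where $D_n(\mathbf{h}) = n^{-2}\sum_{\mathbf{t}:\,\mathbf{t},\mathbf{t}+\mathbf{h}\in\Lambda_n^2} I_{\mathbf{t}}I_{\mathbf{t}+\mathbf{h}}$ uses only indices that do not wrap, and $R_n(\mathbf{h}) = n^{-2}\sum_{\mathbf{t}\in S_{\mathbf{h}}} I_{\mathbf{t}}I_{\pi_n(\mathbf{t}+\mathbf{h})}$ collects the remaining terms over the thin strip $S_{\mathbf{h}}$ of size $|S_{\mathbf{h}}|\le (h_1+h_2)n\le 2r_n n$, so that $0\le R_n(\mathbf{h})\le 2r_n/n$ and $\E[(C_n-p_n)^2]\le 2\E[(D_n-p_n)^2] + 2\E[R_n^2]$.

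For $R_n$ a crude estimate suffices: every summand of $R_n^2$ is a product of indicators, hence $\le I_{\mathbf{t}}$, so $\E[R_n^2]\le n^{-4}|S_{\mathbf{h}}|^2\,\P(|X|>a_{m_n}) \le 4r_n^2 n^2\cdot n^{-4}m_n^{-1} = 4r_n^2/(n^2 m_n)$, whence $m_n\E[R_n^2] = O(r_n^2/n^2)$. For the bias of $D_n$, stationarity gives $\E[D_n(\mathbf{h})] = \frac{(n-h_1)(n-h_2)}{n^2}\,p_n(\mathbf{h})$; since $\big|\frac{(n-h_1)(n-h_2)}{n^2}-1\big| \le \frac{h_1+h_2}{n} + \frac{h_1 h_2}{n^2} \le \frac{3r_n}{n}$ and $p_n(\mathbf{h})\le p_n(\mathbf{0}) = 1/m_n$, one gets $|\E[D_n(\mathbf{h})]-p_n(\mathbf{h})|\le 3r_n/(nm_n)$, hence $m_n(\E[D_n]-p_n)^2 = O(r_n^2/(n^2 m_n)) = o(r_n^2/n^2)$.

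The main work is $\var(D_n)$. Write $\var(D_n(\mathbf{h})) = n^{-4}\sum_{\mathbf{t},\mathbf{s}}\cov(I_{\mathbf{t}}I_{\mathbf{t}+\mathbf{h}}, I_{\mathbf{s}}I_{\mathbf{s}+\mathbf{h}})$, the sum over $\mathbf{t},\mathbf{s}$ with $\mathbf{t},\mathbf{t}+\mathbf{h},\mathbf{s},\mathbf{s}+\mathbf{h}\in\Lambda_n^2$. Since no wrap occurs, the covariance depends only on $\mathbf{u} = \mathbf{s}-\mathbf{t}$ — call this common value $\cov_{\mathbf{u}}$ — and at most $n^2$ pairs share a given $\mathbf{u}$, so $\var(D_n(\mathbf{h}))\le n^{-2}\sum_{\mathbf{u}}|\cov_{\mathbf{u}}|$. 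For $\|\mathbf{u}\|\le 2r_n$ use the trivial bound $|\cov_{\mathbf{u}}|\le \E[I_{\mathbf{0}}I_{\mathbf{u}}] + p_n(\mathbf{h})^2 \le 2/m_n$; there are $O(r_n^2)$ such $\mathbf{u}$, contributing $O(r_n^2/(n^2 m_n))$. For $\|\mathbf{u}\|>2r_n$ the blocks $\{\mathbf{t},\mathbf{t}+\mathbf{h}\}$ and $\{\mathbf{s},\mathbf{s}+\mathbf{h}\}$ are separated by at least $\|\mathbf{u}\|-\|\mathbf{h}\|\ge \|\mathbf{u}\|-r_n$, and since $I_{\mathbf{t}}I_{\mathbf{t}+\mathbf{h}}$ is the indicator of an event in $\sigma(X_{\mathbf{t}},X_{\mathbf{t}+\mathbf{h}})$, Assumption~1(b) gives $|\cov_{\mathbf{u}}|\le \alpha(\|\mathbf{u}\|-r_n)$; a change of indexing yields $\sum_{\|\mathbf{u}\|>2r_n}\alpha(\|\mathbf{u}\|-r_n) \le c\sum_{\|\mathbf{v}\|>r_n}\alpha(\|\mathbf{v}\|)$, whose $m_n$-multiple tends to $0$ by \eqref{eq:mixingsum00}. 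Hence $m_n\var(D_n(\mathbf{h})) = O(r_n^2/n^2)$ uniformly in $\|\mathbf{h}\|\le r_n$, and combining the three pieces gives \eqref{eq:negcenters00}.

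The anticipated obstacle is purely the bookkeeping forced by the modular index convention: the periodized indicator field is not stationary, so $\cov(I_{\mathbf{t}}(\mathbf{h}), I_{\mathbf{s}}(\mathbf{h}))$ is not a function of $\mathbf{s}-\mathbf{t}$ alone. The $D_n/R_n$ decomposition is exactly what circumvents this — $R_n$ is dispatched by brute force because $S_{\mathbf{h}}$ is thin ($O(r_n n)$ sites), while on $D_n$ genuine stationarity is available. The only analytic input is that the mixing tail $m_n\sum_{\|\mathbf{v}\|>r_n}\alpha(\|\mathbf{v}\|)$ and its constant-shifted variants vanish, which is precisely \eqref{eq:mixingsum00}; in particular no sharpening of the anticlustering bound \eqref{eq:anticlustering} is needed, since the $O(r_n^2)$ near-diagonal displacements are already controlled by the crude estimate $|\cov_{\mathbf{u}}|\le 2/m_n$.
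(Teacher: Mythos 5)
Your proof is correct and follows essentially the same route as the paper's: both arguments reduce the second moment to a double sum over pairs of lattice sites, bound the $O(r_n^2 n^2)$ near-diagonal pairs by the crude estimate $O(1/m_n)$ per term, and control the remaining pairs via the mixing bound together with $m_n\sum_{\|\mathbf{h}\|>r_n}\alpha(\|\mathbf{h}\|)\to 0$. The only difference is that you make explicit the wrap-around and bias bookkeeping (the $D_n/R_n$ split and the discrepancy between $\E[C_n(\mathbf{h})]$ and $p_n(\mathbf{h})$), which the paper's proof absorbs implicitly by writing $C_n(\mathbf{h})-p_n(\mathbf{h})=n^{-2}\sum_{\mathbf{t}}\widetilde{I}_{\mathbf{t}}(\mathbf{h})$; these terms are of the negligible order $O(r_n^2/(n^2m_n))$ either way.
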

\begin{proof}[Proof of Lemma~\ref{lem:negcenters}]
   Since $C_n(\mathbf{h}) - p_n(\mathbf{h}) = n^{-2} \sum_{\mathbf{t} \in \Lambda_n^2} \widetilde{I}_{\mathbf{t}}(\mathbf{h})$, we have
\begin{align*} 
m_n \E \big[ \big( C_n(\mathbf{h}) - p_n(\mathbf{h}) \big)^2 \big] \le & \frac{m_n}{n^4}\Big( \sum_{\mathbf{t}_1, \mathbf{t}_2 \in \Lambda_n^2\,, \|\mathbf{t}_1 - \mathbf{t}_2 \|\le r_n } +
  \sum_{ \mathbf{t}_1, \mathbf{t}_2 \in \Lambda_n^2 \,, \| \mathbf{t}_1- \mathbf{t}_2 \| >r_n} \Big)
 \Big| \E \Big[ \widetilde{I}_{\mathbf{t}_1}(\mathbf{h})
 \widetilde{I}_{\mathbf{t}_2}(\mathbf{h}) \Big] \Big| \\  
\le & O(r_n^2/n^2) + \frac{m_n}{n^2} \sum_{\|\mathbf{h}\|>r_n}
  \alpha(\|\mathbf{h}\|) \,.
\end{align*}
In the last step, we use the result implied by Lemma B.1 in
\cite{damek:mikosch:zhao:zienkiewicz:2023}, 
\begin{align*}
\Big|\E \Big[ \widetilde{I}_{\mathbf{t}_1}(\mathbf{h}) \widetilde{I}_{\mathbf{t}_2}(\mathbf{h}) 
  \Big] \Big| \le c \alpha(\|\mathbf{t}_1 - \mathbf{t}_2\|)\,.
\end{align*}
Moreover,
\begin{align*}
\frac{n^2}{r_n^2} \Big( \frac{m_n}{n^2} \sum_{\|\mathbf{h}\|>r_n}
  \alpha(\|\mathbf{h}\|) \Big) = \frac{1}{r_n^2} m_n \sum_{\|\mathbf{h}\|>r_n}
  \alpha(\|\mathbf{h}\|) \to 0\,, \quad n\to \infty\,, 
\end{align*}
and thus we have
\begin{align*}
m_n \E \big[ \big( C_n(\mathbf{h}) - p_n(\mathbf{h}) \big)^2
  \big] = O(r_n^2/n^2)\,.
\end{align*}
\end{proof}

To prove (11) in the main paper, it is equivalent to prove the following limit
\begin{align}
\label{eq:bootclt01}
 d\big( \frac{n}{\sqrt{m_n}} \sum_{j=1}^2 a_j
  \big( \widehat{\gamma}^{\star}(\mathbf{h}_j) - m_n C_n(\mathbf{h}_j ) \big), Z
  \big)& = d \Big( \frac{\sqrt{m_n}}{n} \sum_{\mathbf{t} \in \Lambda_n^2}
  \widehat{I}_{\mathbf{t}^{\star}}(\mathbf{h}_1, \mathbf{h}_2), Z 
  \Big) \overset{\P}{\to} 0\,,
\end{align}
where $Z$ is a Gaussian random variable with mean zero.

The proof of \eqref{eq:bootclt01} consists of two parts. 

\textbf{Part 1:} We show that the bootstrapped samples at the 
locations outside $\Lambda_n^2$ are negligible. 
Write $N_j= \max\{ i \ge 1: \sum_{l=1}^i L_{jl} < n\} +1$, $j=1,2$. Define
\begin{align*}
S_N & = \sum_{i_1=1}^{N_1} \sum_{i_2=1}^{N_2}  S_n(H_{1i_1},L_{1i_1}, H_{2i_2}, L_{2i_2}) \,,
\end{align*}
where
\begin{align*}
S_n (H_{1i_1},L_{1i_1}, H_{2i_2}, L_{2i_2}) & = \frac{m_n}{n^2} \sum_{j=1}^{2} \sum_{H_{j,i_j} \le t_j\le
  H_{j,i_j}+L_{j,i_j}-1} \widehat{I}_{\mathbf{t}}(\mathbf{h}_1, \mathbf{h}_2)\,.
\end{align*}
\begin{lemma}\label{lem:bootneg}
 Under the conditions of Theorem 3, 
 \begin{align}
   \label{eq:bootneg}
   \P^{\star}\Big( \frac{n}{\sqrt{m_n}} \Big| \sum_{j=1}^2 a_j
   \big(\widehat{\gamma}^{\star}(\mathbf{h}_j) -m_n C_n(\mathbf{h}_j) \big) - S_N 
   \Big| > \delta \Big) \overset{\P}{\to } 0\,, \quad \delta >0\,.
 \end{align} 
\end{lemma}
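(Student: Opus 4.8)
The plan is to rewrite the difference inside \eqref{eq:bootneg} as a normalized sum of the centered indicators $\widehat{I}_{\mathbf{t}^{\star}}(\mathbf{h}_1,\mathbf{h}_2)$ over the ``overhang'' of the bootstrap concatenation, and then to show this sum is negligible by exploiting the cancellation inherent in the centering $\E^{\star}[\widehat{I}_{\mathbf{t}^{\star}}(\mathbf{h})]=0$. By the identity behind \eqref{eq:bootclt01}, $\tfrac{n}{\sqrt{m_n}}\sum_{j=1}^2a_j(\widehat{\gamma}^{\star}(\mathbf{h}_j)-m_nC_n(\mathbf{h}_j))=\tfrac{\sqrt{m_n}}{n}\sum_{\mathbf{t}\in\Lambda_n^2}\widehat{I}_{\mathbf{t}^{\star}}(\mathbf{h}_1,\mathbf{h}_2)$, whereas $\tfrac{n}{\sqrt{m_n}}S_N=\tfrac{\sqrt{m_n}}{n}\sum_{\mathbf{t}^{\star}}\widehat{I}_{\mathbf{t}^{\star}}(\mathbf{h}_1,\mathbf{h}_2)$ with the latter sum running over all $\Sigma_1\Sigma_2$ positions of the concatenation of the $N_1$ horizontal and $N_2$ vertical blocks, $\Sigma_i=\sum_{l=1}^{N_i}L_{il}$ with $0\le\Sigma_i-n<L_{i,N_i}$. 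Hence the left side of \eqref{eq:bootneg} equals $\tfrac{\sqrt{m_n}}{n}\big|\sum_{\mathbf{t}^{\star}\in R}\widehat{I}_{\mathbf{t}^{\star}}(\mathbf{h}_1,\mathbf{h}_2)\big|$, with $R=(\{1,\dots,\Sigma_1\}\times\{1,\dots,\Sigma_2\})\setminus\Lambda_n^2$. Because $L_{i,N_i}=O_{\P^{\star}}(1/\theta)$, one has $|R|=O_{\P^{\star}}(n/\theta)=O_{\P^{\star}}(nm_n/r_n)$, so the trivial bound $|\widehat{I}_{\mathbf{t}^{\star}}(\mathbf{h}_1,\mathbf{h}_2)|\le\sum_j|a_j|$ only yields $O_{\P^{\star}}(m_n^{3/2}/r_n)$, which diverges; cancellation is essential.

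I would then split $R$ into the row overhang $R^{(1)}$, the last $\Sigma_1-n$ rows of the concatenation (a block of $\Sigma_1-n$ consecutive original rows) against all $\Sigma_2$ resampled columns, and the column overhang $R^{(2)}$, the $n$ retained rows against the last $\Sigma_2-n$ resampled columns, which are handled symmetrically. For $R^{(1)}$ the two structural facts I would use are: (i) its leading original row $u_0\equiv H_{1,N_1}+L_{1,N_1}-(\Sigma_1-n)\ (\mathrm{mod}\ n)$ is uniform on $\{1,\dots,n\}$, since $H_{1,N_1}$ is uniform and independent of the sequence $(L_{1l})$ that alone determines $N_1$, $L_{1,N_1}$ and $\Sigma_1-n$; and (ii) conditionally on the horizontal randomness and on the vertical block lengths the vertical blocks are i.i.d., so $\sum_{\mathbf{t}^{\star}\in R^{(1)}}\widehat{I}_{\mathbf{t}^{\star}}(\mathbf{h}_1,\mathbf{h}_2)$ is a sum of that many conditionally independent contributions. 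Writing $R_u=\sum_{v=1}^{n}\widehat{I}_{(u,v)}(\mathbf{h}_1,\mathbf{h}_2)$ for the centered row-$u$ total, the conditional mean of this sum equals $(1+o(1))\mathcal{R}$ with $\mathcal{R}:=\sum_{k=1}^{\Sigma_1-n}R_{u_0+k}$ a partial sum over consecutive rows of the sequence $(R_u)_{u=1}^{n}$, and the crucial point is $\sum_{u=1}^{n}R_u=\sum_{\mathbf{t}\in\Lambda_n^2}\widehat{I}_{\mathbf{t}}(\mathbf{h}_1,\mathbf{h}_2)=0$, because $\widehat{I}$ is centered by the \emph{global} average $C_n$.

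To control $\mathcal{R}$ I would average over the uniform $u_0$: $\E_{u_0}[\mathcal{R}]=0$ and $\E_{u_0}[\mathcal{R}^2]=\sum_{|d|<\Sigma_1-n}(\Sigma_1-n-|d|)\,\tfrac1n\sum_uR_uR_{u+d}$. Using Lemma~\ref{lem:mixing} together with the anticlustering bound \eqref{eq:anticlustering}, each expected lagged product $\E[R_0R_d]$ splits into the genuine covariance of the uncentered row totals plus $d$-free ``global-centering'' corrections, and $\sum_d|\E[R_0R_d]|=O(n/m_n)$ up to a polylogarithmic factor governed by the correlation length $\alpha^{-1}(m_n^{-1})$; since $\Sigma_1-n=O_{\P^{\star}}(1/\theta)$ this gives $\E_{u_0}[\mathcal{R}^2]=O_{\P}(n/r_n)$ up to the same polylogarithmic factor. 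A parallel expansion of the conditional variance $\var^{\star}$ of the same block sum, summed over the $N_2=O_{\P^{\star}}(n\theta)$ vertical blocks, is of the same order. Consequently $\tfrac{\sqrt{m_n}}{n}\sum_{\mathbf{t}^{\star}\in R^{(1)}}\widehat{I}_{\mathbf{t}^{\star}}(\mathbf{h}_1,\mathbf{h}_2)=O_{\P^{\star}}\big((m_n/(nr_n))^{1/2}\big)$ up to polylogarithmic factors, which tends to $0$ in probability under the conditions of Theorem~3 (which force $m_n/n\to0$, $r_n\to\infty$, and the polylogarithmic correlation length to be absorbed by $r_n$); the same estimate holds for $R^{(2)}$, and Markov's inequality under $\P^{\star}$ then yields \eqref{eq:bootneg}. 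The main obstacle is exactly this second-moment bookkeeping: the naive term count overshoots by the diverging factor $m_n^{3/2}/r_n$, so one must harvest the cancellation coming from the global centering (the zero-sum property of the row totals), which in turn requires carefully separating the harmless $d$-independent corrections from the genuinely lagged covariances and checking that the latter are summable with the correct power of $m_n$; this is where the $\alpha$-mixing and anticlustering assumptions enter.
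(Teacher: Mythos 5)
Your proposal is correct in substance and lands on the same quantitative conclusion as the paper, but it reorganizes the key second-moment bound. The paper also identifies the left-hand side of \eqref{eq:bootneg} with the normalized sum over the overhang of the concatenation, but it then uses the memorylessness of the geometric block lengths and the uniformity of the starting positions to represent that overhang sum \emph{in distribution} as an inclusion--exclusion of three pieces built from a single generic block $(H_{11},L_{11},H_{21},L_{21})$ --- (row block)$\times$(all columns) plus (all rows)$\times$(column block) minus the corner --- and bounds $\E\big[\E^{\star}[(\cdot)^2]\big]$ for each piece by conditioning on the geometric length $l$ and summing: the covariance summability of the $p_n$-centered indicators (via the mixing bound of Lemma~\ref{lem:mixing}, the anticlustering condition and \eqref{eq:mixingsum00}) gives a contribution $O(l/n)$, Lemma~\ref{lem:negcenters} gives $O(l^2r_n^2/n^2)$ for the $C_n-p_n$ correction, and the geometric weights turn these into $O((n\theta)^{-1})+O(r_n^2(n\theta)^{-2})\to 0$. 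You instead split the overhang into two disjoint rectangles and average over the uniform starting row $u_0$, leaning on the exact identity $\sum_u R_u=0$. That identity is a convenience rather than the load-bearing element: the cancellation that actually rescues the divergent term count is the \emph{local} centering of $\widetilde I_{\mathbf t}(\mathbf h)$ at $p_n(\mathbf h)$ (each term has variance of order $1/m_n$) combined with the summability $\sum_{\mathbf s}|\E[\widetilde I_{\mathbf 0}(\mathbf h)\widetilde I_{\mathbf s}(\mathbf h)]|=O(1/m_n)$, which is exactly what drives your bound $\sum_d|\E[R_0R_d]|=O(n/m_n)$ (no polylogarithmic factor is needed there, by the three-way split $\|\mathbf s\|\le h$, $h<\|\mathbf s\|\le r_n$, $\|\mathbf s\|>r_n$); the $d$-free global-centering corrections are controlled by Lemma~\ref{lem:negcenters} whether or not they cancel exactly. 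Both routes give the same order $m_n/(nr_n)$ for the second moment of the normalized overhang sum, so your argument would go through; the paper's version is somewhat more economical because conditioning on the geometric length dispenses with the $u_0$-averaging and the circular bookkeeping of lagged row products.
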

\begin{proof}[Proof of Lemma~\ref{lem:bootneg}]
By following the arguments in \cite{politis:romano:1994}, the
memoryless property of the geometric distribution and
Lemma~\ref{lem:negcenters} ensure that  
\begin{align*}
\frac{n}{\sqrt{m_n}} \Big(\sum_{j=1}^2 a_j \big(
  \widehat{\gamma}^{\star}(\mathbf{h}_j) - m_n C_n(\mathbf{h}_j) \big) - S_N  \Big) 
\end{align*} 
has the distribution 
\begin{align} \label{eq:temp00}
 \frac{\sqrt{m_n}}{n} \Big( \sum_{t_1=H_{11}}^{H_{11}+L_{11}-1} \sum_{t_2=1}^n + \sum_{t_1=1}^n \sum_{t_2=H_{21}
  }^{H_{21}+L_{21}-1} - \sum_{t_1=H_{11}}^{H_{11}+L_{11}-1} \sum_{t_2
  =H_{21}}^{H_{21} + L_{21} -1 }\Big)
  \widehat{I}_{\mathbf{t}} ( \mathbf{h}_1, \mathbf{h}_2)   \,.
\end{align}
We shall prove
\begin{align*}
\frac{\sqrt{m_n}}{n} \sum_{t_1= H_{11}}^{H_{11} + L_{11} -1} \sum_{t_2 =1}^n 
 \widehat{I}_{\mathbf{t}}(\mathbf{h}_1, \mathbf{h}_2)  \overset{\P}{\to } 0\,,
\end{align*}
and thus \eqref{eq:temp00} converges in probability to zero by similar
arguments. We have
\begin{align*}
& \E \Big[ \E^{\star} \Big[  \frac{m_n}{n^2} \Big(\sum_{t_1= H_{11}}^{H_{11} + L_{11} -1} \sum_{t_2 =1}^n 
  \widehat{I}_{\mathbf{t}}(\mathbf{h}_1, \mathbf{h}_2) 
  \big) \Big)^2 \Big] \Big]\\ 
= & \E \Big[ \E  \Big[  \frac{m_n}{n^2} \Big(\sum_{t_1= 1}^{ L_{11} -1} \sum_{t_2 =1}^n 
 \widehat{I}_{\mathbf{t}}(\mathbf{h}_1, \mathbf{h}_2)
 \Big)^2 \mid L_{11} \Big] \Big] \\
= & \sum_{l=1}^{\infty} \theta(1-\theta)^{l-1} \E \Big[  \frac{m_n}{n^2}  \Big(\sum_{t_1=
  1}^{l-1} \sum_{t_2 =1}^n 
  \widehat{I}_{\mathbf{t}}(\mathbf{h}_1, \mathbf{h}_2)  \big) \Big)^2
  \Big]\\ 
\le & \sum_{l=1}^{\infty} \theta(1-\theta)^{l-1}  \sum_{j=1}^2 2a_j^2 \E \Big[  \frac{m_n}{n^2}  \Big(\sum_{t_1=
  1}^{l-1} \sum_{t_2 =1}^n 
  \widehat{I}_{\mathbf{t}}(\mathbf{h}_j)  \big) \Big)^2
  \Big] \,.
\end{align*}

For fixed $l$, we can derive an upper bound
\begin{align*}
&  \E \Big[  \frac{m_n}{n^2}  \Big(\sum_{t_1= 1}^{l-1} \sum_{t_2 =1}^n
  \widehat{I}_{\mathbf{t}} (\mathbf{h}_1)
  \Big)^2 \Big] \\ 
= & \frac{m_n}{n^2} \E \Big[ \sum_{t_{11} = 1}^{l-1} \sum_{t_{12} =1}^n
  \sum_{t_{21} =1}^{l-1} \sum_{t_{22} =1}^n
  \widehat{I}_{\mathbf{t}_1} (\mathbf{h}_1) \widehat{I}_{\mathbf{t}_2}
   (\mathbf{h}_1)\Big] \\ 
\le & \Big| \frac{m_n}{n^2}  \sum_{t_{11} = 1}^{l-1} \sum_{t_{12} =1}^n
  \sum_{t_{21}=1}^{l-1} \sum_{t_{22} =1}^n\E \big[
  \widetilde{I}_{\mathbf{t}_1} (\mathbf{h}_1)
  \widetilde{I}_{\mathbf{t}_2} (\mathbf{h}_1)  \big]\Big| + l^2 m_n
  \E\big[ (C_n(\mathbf{h}_1)- p_n(\mathbf{h}_1))^2 \big] \\ 
= & : Q_1 + Q_2\,.
\end{align*}

According to Assumption~1, we have
\begin{align*}
Q_1 & = \Big| \frac{m_n}{n^2}  \sum_{t_{11} = 1}^{l-1} \sum_{t_{12} =1}^n
 \Big( \sum_{\|\mathbf{t}_2 - \mathbf{t}_1\|\le h} + \sum_{h< \| \mathbf{t}_2 -
  \mathbf{t}_1\| \le r_n} + \sum_{\|\mathbf{t}_2 - \mathbf{t}_1\| >r_n} \Big)\E \big[
  \widetilde{I}_{\mathbf{t}_1} (\mathbf{h}_1)
  \widetilde{I}_{\mathbf{t}_2} (\mathbf{h}_1)  \big]\Big|\\ 
& \le \frac{l}{n} \sum_{\|\mathbf{h} \| \le h} m_n p_n(\mathbf{h}) +
  \frac{l}{n} \sum_{h < \| \mathbf{h} \| \le r_n} m_n p_n(\mathbf{h}) +
  \frac{l}{n} m_n \sum_{\|\mathbf{h} \| > r_n} \alpha(\| \mathbf{h} \|) + O(l/(n m_n))\,.
\end{align*}

By Lemma~\ref{lem:negcenters}, we have $Q_2 \le c l^2r_n^2/n^2$. Therefore, we obtain
\begin{align*}
    &\sum_{l=1}^{\infty} \theta (1- \theta)^{l-1} \sum_{j=1}^2 a_j^2 \E \Big[  \frac{m_n}{n^2} \Big(\sum_{t_1= 1}^{l-1} \sum_{t_2 =1}^n \widehat{I}_{\mathbf{t}} (\mathbf{h}_j) \Big)^2 \Big] \\
    \le & c \frac{1}{n} \sum_{l=1}^{\infty} l\theta (1- \theta)^{l-1} + c \frac{r_n^2}{n^2} \sum_{l=1}^{\infty} l^2 \theta (1- \theta)^{l-1}\\ 
    = & O((n \theta)^{-1}) + O(r_n^2 (n \theta)^{-2}) \to 0\,, \quad n\to \infty\,.
\end{align*}
In the last step, we use the condition $m_n/n\to 0$ and $r_n\to\infty$. This completes the proof.  
\end{proof}

\textbf{Part 2:} We verify a Lyapunov condition based on Anscombe-type
arguments to complete the proof of Theorem 3.

Since $\E^{\star} \big[ S_n (H_{1i_1},L_{1i_1},H_{2i_2}, L_{2i_2}) \big] =0$, we have
$\E^{\star}[S_N ] =0$ and the infinite smallest
condition conditional on $(X_{\mathbf{t}})_{\mathbf{t} \in \mathbb{Z}^2}$ is satisfied by
$S_N$. By applying a classical
Anscombe type argument (see \cite{embrechts:kluppelberg:mikosch:1997},
Lemma 2.5.8), the random indices can be replaced by integer sequences
$(k_{jn})_{j=1,2; n\in \mathbb{N}}$ such that $k_{jn} \to \infty$, $n\theta/k_{jn} \to 1$ as
$n\to \infty$ for $j=1,2$. We focus on the asymptotic distribution of 
\begin{align}\label{eq:targetasympdistr}
\frac{n}{\sqrt{m_n}} \sum_{i_1=1}^{k_{1n}} \sum_{i_2 =1}^{k_{2n}} S_n(H_{1i_1},
  L_{1i_1} , H_{2i_2}, L_{2i_2})  \,.
\end{align}

Note that according to
Lemma~\ref{lem:negcenters},  
\begin{align*} 
& \frac{n^2}{m_n} k^2_{1n}k^2_{2n} \sum_{j=1}^2 a^2_j m_n^{-1} \Big(m_n\E\big[\big( C_n(\mathbf{h}_j) -
  p_n(\mathbf{h}_j) \big)^2\big] \Big) \\
= & O \big( (n^2(n \theta)^4/m_n^2)
  (r_n^2/n^2)\big) \\
= & O \big( (n^2 \theta^3)^2 \big) \to 0\,, \quad n\to \infty\,.
\end{align*}
In the last step we use (10). Therefore, we can replace $\widehat{I}_{\mathbf{t}}(\mathbf{h}_1, \mathbf{h}_2)$
with $\widetilde{I}_{\mathbf{t}}(\mathbf{h}_1, \mathbf{h}_2)$ in
\eqref{eq:targetasympdistr} and the asymptotic distribution of
\eqref{eq:targetasympdistr} remains unchanged. 

It is sufficient to prove the limit
\begin{align}\label{eq:cltlast}
\frac{\sqrt{m_n}}{n} \sum_{i_1 =1}^{k_{1n}} \sum_{i_2 =1}^{k_{2n}}
  \sum_{t_1=H_{1i_1}}^{H_{1 i_1}+ L_{1 i_1} -1} \sum_{t_2 = h_{2i_2}}^{H_{2i_2}
  +L_{2i_2} -1} 
  \widetilde{I}_{\mathbf{t}}(\mathbf{h}_1, \mathbf{h}_2)  \overset{d}{\to } Z\,, 
\end{align}
where $Z$ is a normal random variable with mean zero. The key step in
the proof is the verification of the Lyapunov condition conditional 
on $(X_{\mathbf{t}})$, 
\begin{align*}
    & \frac{m_n^{3/2}}{n^3} k_{1n} k_{2n} \E^{\star} \Big[ \Big| \sum_{t_1 = H_{1i_1}}^{H_{1i_1} + L_{1i_1} -1} \sum_{t_2 = H_{2i_2}}^{H_{2i_2} +L_{2i_2} -1} \widetilde{I}_{\mathbf{t}}(\mathbf{h}_1, \mathbf{h}_2) \Big|^3 \Big]\\ 
    \sim & \frac{m_n^{3/2}}{n} \theta^2 \E^{\star} \Big[ \Big| \sum_{t_1 = H_{1i_1}}^{H_{1i_1} + L_{1i_1} -1} \sum_{t_2 = H_{2i_2}}^{H_{2i_2} +L_{2i_2} -1} \widetilde{I}_{\mathbf{t}}(\mathbf{h}_1,
    \mathbf{h}_2) \Big|^3  \Big] \\
    \overset{\P}{\to } & 0\,. 
\end{align*}

Notice that
\begin{align*}
    &\E\Big[ \E^{\star} \Big[ \Big| \sum_{t_1 = H_{1 i_1}}^{H_{1 i_1} + L_{1 i_1} -1} \sum_{t_2 = H_{2i_2}}^{H_{2i_2} +L_{2i_2} -1}  \widetilde{I}_{\mathbf{t}}(\mathbf{h}_1, \mathbf{h}_2) \Big|^3 \Big]  \Big] \\
    = & \E \Big[ \E \Big[ \Big| \sum_{t_1 = 1 }^{ L_{1 i_1} -1} \sum_{t_2 = 1 }^{L_{2i_2} -1} \widetilde{I}_{\mathbf{t}}(\mathbf{h}_1, \mathbf{h}_2) \Big|^3 \mid L_{1i_1}, L_{2i_2} \Big] \Big]\\  
    = & \sum_{l_1=1}^{\infty} \sum_{l_2 =1}^{\infty} (1-\theta)^{l_1+l_2-2} \theta^2 \E \Big[ \Big| \sum_{t_1 = 1 }^{ l_1 -1} \sum_{t_2 = 1 }^{ l_2 -1}   \widetilde{I}_{\mathbf{t}}(\mathbf{h}_1, \mathbf{h}_2)\Big|^3 \Big] \,.
\end{align*}
For fixed $l_1$ and $l_2$, we have
\begin{align*}
& m_n \E \Big[ \Big| \sum_{t_1 = 1 }^{ l_1 -1} \sum_{t_2 = 1 }^{l_2 -1}
 \widetilde{I}_{\mathbf{t}}(\mathbf{h}_1, \mathbf{h}_2 )\Big|^3 \Big]\\ 
= & m_n\E \Big[ \Big| \sum_{t_1 = 1 }^{ l_1 -1} \sum_{t_2 = 1 }^{l_2 -1}
  \widetilde{I}_{\mathbf{t}}(\mathbf{h}_1, \mathbf{h}_2)\Big| \Big(
  \sum_{t_1 = 1 }^{ l_1 -1} \sum_{t_2 = 1 }^{l_2 -1}
  \widetilde{I}_{\mathbf{t}}(\mathbf{h}_1, \mathbf{h}_2)\Big)^2
  \Big]\\  
\le & m_n \sum_{t_{11} =1}^{l_1 - 1} \sum_{t_{12} = 1}^{l_2 -1} \sum_{t_{21} =
  1}^{l_1-1} \sum_{t_{22} =1 }^{l_2 -1} \sum_{t_{31} =1}^{l_1 - 1} \sum_{t_{32}
  =1}^{l_2 -1} \E \Big[ \Big|
  \widetilde{I}_{\mathbf{t}_1}(\mathbf{h}_1, \mathbf{h}_2) \Big| \prod_{k=2}^3
  \widetilde{I}_{\mathbf{t}_k}(\mathbf{h}_1, \mathbf{h}_2) \Big]\\ 
\le & m_n \Big(\sum_{(\mathbf{t}_1,\mathbf{t}_2, \mathbf{t}_3) \in A_1} +
  \sum_{(\mathbf{t}_1,\mathbf{t}_2, \mathbf{t}_3) \in A_2} +
  \sum_{(\mathbf{t}_1,\mathbf{t}_2, \mathbf{t}_3) \in A_3} \Big)  \E \Big[
  \Big|  \widetilde{I}_{\mathbf{t}_1}(\mathbf{h}_1, \mathbf{h}_2) \Big| \prod_{k=2}^3
  \widetilde{I}_{\mathbf{t}_k}(\mathbf{h}_1, \mathbf{h}_2 )\Big]\\ 
= & : R_1 + R_2 + R_3 \,,
\end{align*}
where $A_1 = \big\{ (\mathbf{t}_1,\mathbf{t}_2,\mathbf{t}_3) \in \mathbb{N}^6: 1\le
t_{ij}\le l_j-1\,, \|\mathbf{t}_k - \mathbf{t}_1\| \le r_n +
\max(\|\mathbf{h}_1\|, \|\mathbf{h}_2\|)\,, i=1,2,3; j=1,2; k=2,3 \big\}$,
$A_2 = \big\{ (\mathbf{t}_1,\mathbf{t}_2,\mathbf{t}_3) \in \mathbb{N}^6: 1\le
t_{ij}\le l_j-1\,, \|\mathbf{t}_k - \mathbf{t}_1\| > r_n +
\max(\|\mathbf{h}_1\|, \|\mathbf{h}_2\|)\,, \|\mathbf{t}_2-\mathbf{t}_3\| \le r_n +
\max(\|\mathbf{h}_1\|, \|\mathbf{h}_2\|)\,, i=1,2,3; j=1,2; k=2,3 \big\}$,
and $A_3 = \big\{ (\mathbf{t}_1,\mathbf{t}_2,\mathbf{t}_3) \in \mathbb{N}^6: 1\le
t_{ij}\le l_j-1\,, \|\mathbf{t}_k - \mathbf{t}_1\| > r_n +
\max(\|\mathbf{h}_1\|, \|\mathbf{h}_2\|)\,, \|\mathbf{t}_2-\mathbf{t}_3\| > r_n +
\max(\|\mathbf{h}_1\|, \|\mathbf{h}_2\|)\,, i=1,2,3; j=1,2; k=2,3 \big\}$.

We start with $R_1$, which has an upper bound
\begin{align*}
  |R_1| \le m_n \sum_{(\mathbf{t}_1, \mathbf{t}_2, \mathbf{t}_3 ) \in A_1 } \Big| \E \Big[ \Big|
           \widetilde{I}_{\mathbf{t}_1}(\mathbf{h}_1,\mathbf{h}_2)\Big| \prod_{k=2}^3
           \widetilde{I}_{\mathbf{t}_k}(\mathbf{h}_1,
    \mathbf{h}_2)\Big] \Big| \le c l_1 l_2 r_n^4 m_n p_n(\mathbf{0})\,.  
\end{align*}

For $R_2$, we have 
\begin{align*}
  |R_2| 
  & \le m_n \sum_{(\mathbf{t}_1, \mathbf{t}_2, \mathbf{t}_3 ) \in A_2 }
        \Big| \E \Big[ \Big| 
  \widetilde{I}_{\mathbf{t}_1}(\mathbf{h}_1, \mathbf{h}_2)\Big| \prod_{k=2}^3
 \widetilde{I}_{\mathbf{t}_k}(\mathbf{h}_1, \mathbf{h}_2)\Big] - \E \Big[ \Big| 
  \widetilde{I}_{\mathbf{t}_1}(\mathbf{h}_1, \mathbf{h}_2)\Big| \Big] \E \Big[
    \prod_{k=2}^3\widetilde{I}_{\mathbf{t}_k}(\mathbf{h}_1, \mathbf{h}_2)\Big]  \Big|\\ 
 & \quad + m_n \sum_{(\mathbf{t}_1, \mathbf{t}_2, \mathbf{t}_3 ) \in A_2 }\Big| \E \Big[ \Big| 
  \widetilde{I}_{\mathbf{t}_1}(\mathbf{h}_1, \mathbf{h}_2)\Big| \Big] \E \Big[ \prod_{k=2}^3
 \widetilde{I}_{\mathbf{t}_k}(\mathbf{h}_1, \mathbf{h}_2)\Big] \Big|\\ 
 & \le c l_1 l_2 r_n^2 m_n \sum_{\|\mathbf{h} \| >r_n} \alpha(\|\mathbf{h}\|) + c
   l_1l_2(r_n^2/m_n) m_n p_n(\mathbf{0})\,.
\end{align*}

Similarly, we have 
\begin{align*}
  |R_3| 
  & \le m_n \sum_{(\mathbf{t}_1, \mathbf{t}_2, \mathbf{t}_3 ) \in A_3 }
        \Big| \E \Big[ \Big|
    \widetilde{I}_{\mathbf{t}_1}(\mathbf{h}_1, \mathbf{h}_2)\Big|
    \prod_{k=2}^3  \widetilde{I}_{\mathbf{t}_k}(\mathbf{h}_1, \mathbf{h}_2)\Big]- \E \Big[ \Big| 
  \widetilde{I}_{\mathbf{t}_1}(\mathbf{h}_1, \mathbf{h}_2)\Big| \Big]
    \E \Big[ \prod_{k=2}^3 \widetilde{I}_{\mathbf{t}_k}(\mathbf{h}_1, \mathbf{h}_2)\Big]  \Big|\\ 
 & \quad + m_n \sum_{(\mathbf{t}_1, \mathbf{t}_2, \mathbf{t}_3 ) \in A_3 }\Big| \E \Big[ \Big| 
  \widetilde{I}_{\mathbf{t}_1}(\mathbf{h}_1, \mathbf{h}_2)\Big| \Big] \E \Big[ \prod_{k=2}^3
\widetilde{I}_{\mathbf{t}_k}(\mathbf{h}_1, \mathbf{h}_2)\Big] \Big|\\ 
 & \le c l_1^3 l_2^3 m_n \sum_{\|\mathbf{h} \| >r_n} \alpha(\|\mathbf{h}\|) + c
   l_1^3 l_2^3 (m_n p_{\mathbf{0}}) \sum_{\|\mathbf{h} \| >r_n}
   \alpha(\|\mathbf{h} \|)\,.
\end{align*}

To sum up, we have
\begin{align*}
& \frac{m_n^{3/2}}{n} \theta^2 \E \Big[ \E^{\star}
  \Big[ \Big| \sum_{t_1 = H_{1i_1}}^{H_{1i_1} + L_{1i_1} -1} \sum_{t_2 = H_{2i_2}}^{H_{2i_2}
  +L_{2i_2} -1} \widetilde{I}_{\mathbf{t}}(\mathbf{h}_1, \mathbf{h}_2)
  \Big|^3  \Big] \Big]\\ 
\le & c \frac{m_n^{1/2} \theta^2}{n} \sum_{l_1 =1}^{\infty} \sum_{l_2=1}^{\infty} (1-
  \theta)^{l_1 +l_2 -2} \theta^2 \Big( l_1 l_2 r_n^4 m_np_n(\mathbf{0}) + l_1
  l_2 (r_n^2/m_n) m_n p_n(\mathbf{0}) \\
\quad & + l_1^3 l_2^3 m_n \sum_{\|\mathbf{h}\| > r_n} \alpha(\| \mathbf{h} \|) \Big)\\ 
\le & c \frac{m_n^{1/2}r_n^4}{n} + c
  \frac{m_n^{1/2} }{n \theta^4} m_n \sum_{\|\mathbf{h}\| > r_n} \alpha(\| \mathbf{h}
  \|)\\ 
\to & 0\,, \quad n\to \infty\,. 
\end{align*}

This completes the proof of Theorem 3. 

\qed

\subsection{Proof of Theorem 4}
Since
\begin{align*}
    & \Big| \E^{\star} [\widehat{J}_n^{\star}(\bm{\omega}) ] -  J(\bm{\omega}) \Big| \\
    \le & \Big| \E^{\star} [\widehat{J}_n^{\star}(\bm{\omega})] - \sum_{\|\mathbf{h} \| \le
        r_n} m_n p_n(\mathbf{h})
        \widetilde{\psi}_{\mathbf{h}}(\bm{\omega})\Big| + \Big| \sum_{\|\mathbf{h}
        \| \le r_n} m_n p_n(\mathbf{h})
        \widetilde{\psi}_{\mathbf{h}}(\bm{\omega}) - \sum_{\|\mathbf{h} \| \le r_n}
        m_n p_n(\mathbf{h}) \psi_{\mathbf{h}}(\bm{\omega}) \Big| \\
& \quad +\Big| \sum_{\|\mathbf{h}\| \le r_n} m_n p_n(\mathbf{h})
    \psi_{\mathbf{h}}(\bm{\omega}) - J(\bm{\omega})\Big| \,,
\end{align*}
we consider these two parts respectively. For the first part, we have
\begin{align*}
    &\E\big[\sup_{\bm{\omega} \in \Pi^2}\big(\E^{\star}[\widehat{J}_n^{\star}(\bm{\omega})] - \sum_{\|\mathbf{h} \| \le r_n} m_n p_n(\mathbf{h}) \psi_{\mathbf{h}}(\bm{\omega})\big)^2 \big]\\ 
    = & \E \big[\sup_{\bm{\omega} \in \Pi^2} m_n^2 \big(\sum_{\|\mathbf{h} \| \le r_n} (C_n(\mathbf{h}) - p_n(\mathbf{h})) \psi_{\mathbf{h}}(\bm{\omega}) \big)^2 \big]\\  
    = & O(m_nr_n^4/n^2) \\
    = & O((r_n^4/m_n)(m_n^2/n^2)) \to 0\,, \quad n\to \infty\,
\end{align*}
by applying Lemma~\ref{lem:negcenters}. For the second and third
parts, we have
\begin{align*}
&\Big| \sum_{\|\mathbf{h}\| \le r_n} m_n p_n(\mathbf{h}) \widetilde{\psi}_{\mathbf{h}}(\bm{\omega}) - \sum_{\|\mathbf{h} \| \le r_n}
m_n p_n(\mathbf{h}) \psi_{\mathbf{h}}(\bm{\omega}) \Big|\\ 
& \le c r_n^2/n\,.
\end{align*}
\begin{align*}
    & \limsup_{n\to \infty} \sup_{\bm{\omega} \in \Pi^2}\Big| \sum_{\|\mathbf{h}\| \le r_n} m_n p_n(\mathbf{h})
    \psi_{\mathbf{h}}(\bm{\omega}) - J(\bm{\omega})\Big|\\ 
    \le & c \big(\limsup_{n\to \infty} \sum_{\|\mathbf{h}\| \le h}\big|m_n p_n(\mathbf{h}) - \gamma(\mathbf{h})| + \limsup_{n\to \infty} \sum_{h< \|\mathbf{h} \| \le r_n} m_n p_n(\mathbf{h}) + \sum_{\| \mathbf{h} \| > h} \gamma(\mathbf{h}) \big)\,,
\end{align*}
which tends to zero under the conditions of Theorem 4 as $h \to \infty$. 

\qed

\subsection{Proof of Theorem 5}
The idea of the proof is similar to Theorem 2. It is
enough to show that for some $q_1, q_2 \in \mathbb{N}$ with
$2^{\max(q_1,q_2)} < r_n/2$,  
\begin{align}
\label{eq:bootpart00}
\frac{n^2}{m_n} \E\Big[ \Big( \sum_{h_1=2^{q_1}}^{2^{q_1 +1} -1} \sum_{h_2 =
  2^{q_2}}^{2^{q_2 +1} -1} \big(\widehat{\gamma}^{\star}(\mathbf{h}) -\E[\widehat{\gamma}^{\star}(\mathbf{h})] \big) \Big)^2 \Big] \le
  2^{2q_1 + 2 q_2} K_{hn}\,,
\end{align}
where $K_{hn}$ satisfies $\lim_{h\to \infty} \limsup_{n\to \infty} K_{hn} =0$. The inequality
\eqref{eq:bootpart00} holds if for $h<\|\mathbf{h}_1\|, \|\mathbf{h}_2\| \le
r_n$, 
  \begin{align}
  \label{eq:boot4mom}
& \Big| \E \Big[ \frac{n^2}{m_n}
  \cov^{\star}[\widehat{\gamma}^{\star}(\mathbf{h}_1),
  \widehat{\gamma}^{\star}(\mathbf{h}_2)] \Big] \Big| \le K_{hn}\,. 
  \end{align}

Notice that
\begin{align*}
    & \E\Big[  \frac{n^2}{m_n} \cov^{\star}[\widehat{\gamma}^{\star}(\mathbf{h}_1), \widehat{\gamma}^{\star}(\mathbf{h}_2)] \Big] \\ 
    = & \frac{m_n}{n^2} \sum_{\mathbf{t} \in \Lambda_n^2} \E[ \E^{\star} [
   \widehat{I}_{\mathbf{t}^{\star}}(\mathbf{h}_1)
   \widehat{I}_{\mathbf{t}^{\star}} (\mathbf{h}_2) ]]+ \frac{m_n}{n^2}
   \sum_{\mathbf{s} \neq \mathbf{t}}
   \E \big[ \E^{\star}[\widehat{I}_{\mathbf{t}^{\star}}(\mathbf{h}_1)
   \widehat{I}_{\mathbf{s}^{\star}}(\mathbf{h}_2)] \big] \\ 
    =: & S_1 + S_2 \,.
  \end{align*}

For $S_1$, we have
\begin{align*}
    \E[S_1] & = \frac{m_n}{n^2} \sum_{\mathbf{t} \in \Lambda_n^2} \E \big[ \widehat{I}_{\mathbf{t}}(\mathbf{h}_1) \widehat{I}_{\mathbf{t} }(\mathbf{h}_2) \big]\\
    & =\frac{m_n}{n^2} \sum_{\mathbf{t} \in \Lambda_n^2} \E \big[
            \widetilde{I}_{\mathbf{t}}(\mathbf{h}_1)
            \widetilde{I}_{\mathbf{t} }(\mathbf{h}_2) -
            \prod_{j=1}^2(C_n(\mathbf{h}_j) - p_n(\mathbf{h}_j)) \big]\\ 
    & = m_n p_n(\mathbf{0}) + O(r_n^2/n^2) \,.
\end{align*}
In the last, we use the result from Lemma~\ref{lem:negcenters}.
 
  For $S_2$, we have
\begin{align*}
  S_2 = & \frac{m_n}{n^2} \sum_{\mathbf{t} \in \Lambda_n^2} \sum_{\mathbf{s}\neq 0,
        \min(s_1,s_2)>0} \Big( \E^{\star} [
        \widehat{I}_{\mathbf{t}^{\star}} (\mathbf{h}_1)
        \widehat{I}_{(\mathbf{t}+\mathbf{s})^{\star}} ( \mathbf{h}_2) ]  +
        \E^{\star}[\widehat{I}_{(\mathbf{t}+\mathbf{s})^{\star}}(\mathbf{h}_1)
        \widehat{I}_{\mathbf{t}^{\star}}(\mathbf{h}_2)] \\
  \quad & + \E^{\star}[\widehat{I}_{(t_1+s_1)^{\star},(t_2)^{\star}}(\mathbf{h}_1)
        \widehat{I}_{t_1^{\star}, (t_2 + s_2)^{\star}}(\mathbf{h}_2)] + \E^{\star}[\widehat{I}_{(t_1)^{\star}, (t_2+s_2)^{\star}}(\mathbf{h}_1)    \widehat{I}_{(t_1+s_1)^{\star}, (t_2)^{\star}}(\mathbf{h}_2)] \Big)\\ 
  =: &  S_{21} + S_{22} + S_{23} +S_{24}\,. 
  \end{align*}

 Parts $R_{21}$ and $R_{22}$ have similar structure while $S_{23}$ and $S_{24}$ are similar. We will use $S_{21}$ and $S_{23}$ as
examples.
\begin{align*}
    & S_{21} \\
    = & \frac{m_n}{n^2} \sum_{\mathbf{s} \neq \mathbf{0},  \min(s_1,s_2)\ge 0} \prod_{j=1}^2 (1- \theta)^{s_j} \sum_{\mathbf{t} \in \Lambda_n^2} \big(I_{\mathbf{t}}(\mathbf{h}_1) I_{\mathbf{t} + \mathbf{s}} (\mathbf{h}_2) -C_n(\mathbf{h}_1) C_n(\mathbf{h}_2) \big) \\ 
    & + \frac{m_n}{n^4} \sum_{\mathbf{s} \neq \mathbf{0}, \min(s_1,s_2)\ge 0} \prod_{j=1}^2 \big(1 - (1 - \theta)^{s_j} \big) \sum_{\mathbf{t}_1, \mathbf{t}_2 \in \Lambda_n^2} \big( I_{\mathbf{t}_1}(\mathbf{h}_1)  I_{\mathbf{t}_2}(\mathbf{h}_2) - C_n(\mathbf{h}_1) C_n(\mathbf{h}_2) \big) \\ 
    & + \frac{m_n}{n^3} \sum_{\mathbf{s} \neq \mathbf{0}, \min(s_1,s_2)\ge 0} (1-(1-\theta)^{s_1})(1- \theta)^{s_2} \sum_{t_1,t_2,t_3=1}^n \big(  I_{ \mathbf{t}}(\mathbf{h}_1) I_{t_3,t_2+s_2} (\mathbf{h}_2) -C_n(\mathbf{h}_1) C_n(\mathbf{h}_2) \big) \\ 
    & + \frac{m_n}{n^3} \sum_{\mathbf{s} \neq \mathbf{0}, \min(s_1,s_2)\ge 0} (1- \theta)^{s_1} (1-(1-\theta)^{s_2}) \sum_{t_1,t_2,t_3=1}^n \big( I_{\mathbf{t}}(\mathbf{h}_1) I_{t_1+s_1,t_3}(\mathbf{h}_2) -C_n(\mathbf{h}_1) C_n(\mathbf{h}_2)  \big)\\ 
    = & : \sum_{j=1}^4 S_{21j} \,.
\end{align*}
and

\begin{align*}
    & S_{23} \\
    = & \frac{m_n}{n^2}  \sum_{\mathbf{s} \neq \mathbf{0},  \min(s_1,s_2)\ge 0} \prod_{j=1}^2 (1- \theta)^{s_j} \sum_{\mathbf{t} \in \Lambda_n^2}\big( I_{t_1+s_1,t_2}(\mathbf{h}_1) I_{t_1,t_2 + s_2} (\mathbf{h}_2) -C_n(\mathbf{h}_1) C_n(\mathbf{h}_2) \big) \\ 
     & + \frac{m_n}{n^4} \sum_{\mathbf{s} \neq \mathbf{0}, \min(s_1,s_2)\ge 0} \prod_{j=1}^2 \big(1 - (1 - \theta)^{s_j} \big) \sum_{\mathbf{t}_1, \mathbf{t}_2 \in \Lambda_n^2} \big(I_{\mathbf{t}_1}(\mathbf{h}_1) I_{\mathbf{t}_2}(\mathbf{h}_2) -C_n(\mathbf{h}_1) C_n(\mathbf{h}_2) \big)\\ 
     & + \frac{m_n}{n^3} \sum_{\mathbf{s} \neq \mathbf{0}, \min(s_1,s_2)\ge 0} (1-(1-\theta)^{s_1})(1- \theta)^{s_2} \sum_{t_1,t_2,t_3=1}^n \big(  I_{\mathbf{t}}(\mathbf{h}_1) I_{t_3,t_2+s_2} (\mathbf{h}_2) - C_n(\mathbf{h}_1) C_n(\mathbf{h}_2) \big) \\ 
     & + \frac{m_n}{n^3} \sum_{\mathbf{s} \neq \mathbf{0}, \min(s_1,s_2)\ge 0} (1-\theta)^{s_1} (1- (1-\theta)^{s_2}) \sum_{t_1,t_2,t_3=1}^n  \big( I_{t_1+s_1,t_2}(\mathbf{h}_1) I_{t_1,t_3}(\mathbf{h}_2) - C_n(\mathbf{h}_1) C_n(\mathbf{h}_2) \big) \\ 
    = & : \sum_{j=1}^4 S_{23j} \,. 
\end{align*}
Since 
\begin{align*}
\frac{m_n}{n^4} \sum_{\mathbf{t}_1, \mathbf{t}_2 \in \Lambda_n^2} \big(
 I_{\mathbf{t}_1}(\mathbf{h}_1) I_{\mathbf{t}_2}(\mathbf{h}_2)
  -C_n(\mathbf{h}_1) C_n(\mathbf{h}_2) \big) =0\,,
\end{align*}
we have $S_{212} = S_{232} =0$.

Parts $S_{211}$ and $S_{231}$ have similar structures while $S_{213}$, $S_{214}$, $S_{233}$ and $S_{234}$ are similar. Thus, we will use only
$S_{211}$ and $S_{233}$ as examples. 
\begin{align*}
  \E [S_{211}]& = \frac{m_n}{n^2}\E\Big[ \sum_{\mathbf{s} \neq \mathbf{0},
  \min(s_1,s_2)\ge 0} \prod_{j=1}^2 (1- \theta)^{s_j} \sum_{\mathbf{t} \in \Lambda_n^2}
    \big(\widetilde{I}_{\mathbf{t}}(\mathbf{h}_1) 
  \widetilde{I}_{\mathbf{t} + \mathbf{s}} (\mathbf{h}_2) - \prod_{j=1}^2
    \big(C_n(\mathbf{h}_j) - p_n(\mathbf{h}_j) \big) \big) \Big]\\  
& =\frac{m_n}{n^2} \sum_{
 0< \|\mathbf{s} \|<h, \min(s_1,s_2)\ge 0} \prod_{j=1}^2 (1- \theta)^{s_j}
  \sum_{\mathbf{t} \in \Lambda_n^2} \E \Big[ \widetilde{I}_{\mathbf{t}}(\mathbf{h}_1)
  \widetilde{I}_{\mathbf{t} + \mathbf{s}} (\mathbf{h}_2) \Big] +o(1)\\
& = m_n \sum_{0< \|\mathbf{s} \|<h, \min(s_1,s_2)\ge 0}  p_n( \mathbf{s})
  +o(1) < \infty\,. 
\end{align*}
In the above equality, we use 
\begin{align*}
\Big| (m_n n /\theta) \E \Big[\prod_{j=1}^2 (C_n(\mathbf{h}_j) - p_n(\mathbf{h}_j))
  \Big] \Big| \to 0\,, \quad n\to \infty\,,
\end{align*}
which is implied by Lemma~\ref{lem:negcenters}.

\begin{align*}
|\E [S_{233}]|& \le \frac{m_n}{n^3} \Big( \sum_{0\le s_2 \le r_n} + \sum_{s_2 >r_n}  \Big)
  \sum_{t_2=1}^{n} \Big( \sum_{|t_1-t_3|\le r_n} + \sum_{|t_1 -t_3|>r_n} \Big)
  \big| \E \big[ \widetilde{I}_{\mathbf{t}}(\mathbf{h}_1)
  \widetilde{I}_{t_3,t_2+ 
  s_2}(\mathbf{h}_2) \big] \big| \\ 
  & \quad  +
 m_n n \sum_{s_2=1}^n (1- \theta)^{s_2} \Big| \E \Big[ \prod_{j=1}^2
    (C_n(\mathbf{h}_j) -p_n(\mathbf{h}_j) ) \Big] \Big|\\ 
& \le \frac{m_n}{n^2} \sum_{0\le s_2 \le r_n} \sum_{|t_1-t_3|\le r_n} \big| \E\big[ 
  \widetilde{I}_{t_1,t_2}(\mathbf{h}_1)
  \widetilde{I}_{t_3,t_2+s_2}(\mathbf{h}_2) \big] \big| + c m_n
  \sum_{\|\mathbf{h}\|>r_n} 
  \alpha(\|\mathbf{h}\|)\\ 
  & \quad +  \big| (m_n n/\theta) \E \Big[\prod_{j=1}^2 (C_n(\mathbf{h}_j) -
    p_n(\mathbf{h}_j)) \Big] \big| \\ 
& \to 0\,, \quad n\to \infty\,.  
\end{align*}

Therefore, $\lim_{h\to +\infty}\limsup_{n\to \infty} K_{hn} =0$ and this completes the proof of
\eqref{eq:boot4mom}. An auxiliary result 
from the above calculations is 
\begin{align}\label{eq:bootcov}
    & \limsup_{n\to \infty} \frac{n^2}{m_n} \E\big[ \cov^{\star}(\widehat{\gamma}^{\star}(\mathbf{h}_1), \widehat{\gamma}^{\star}(\mathbf{h}_2)) \big] \\ \nonumber
    = & \limsup_{n\to \infty} m_n \P\big(\min(|X_{\mathbf{0}}|, |X_{\mathbf{h}_1}|,  |X_{\mathbf{h}_2}|) >a_{m_n} \big)\\  \nonumber
    \quad & + m_n \sum_{\min(s_1,s_2)\ge 0, \mathbf{s}\neq \mathbf{0}}  \P\big(\min(|X_{\mathbf{0}}|, |X_{\mathbf{h}_1}|, |X_{\mathbf{s}}|,  |X_{\mathbf{s}+ \mathbf{h}_2}| ) >a_{m_n})\\  \nonumber
    \qquad & + \P\big(\min(|X_{\mathbf{0}}|, |X_{\mathbf{h}_2}|, |X_{\mathbf{s}}|,  |X_{\mathbf{s}+ \mathbf{h}_1}| ) >a_{m_n})\\  \nonumber
    \qquad & + \P\big(\min(|X_{s_1,0}|, |X_{(s_1 ,0)+\mathbf{h}_1}|, |X_{0,s_2}|,  |X_{(0,s_2)+ \mathbf{h}_2}| ) >a_{m_n})\\  \nonumber
    \qquad & + \P\big(\min(|X_{0,s_2}|, |X_{(0 ,s_2)+\mathbf{h}_1}|, |X_{s_1, 0}|,  |X_{(s_1, 0)+ \mathbf{h}_2}| ) >a_{m_n})\,.
\end{align}

Now we assume that $h=2^a -1$ and $r_n = 2^{b+1}$. Then by following
the arguments in the proof of Theorem 2, the proof of
Theorem 5 is completed and the details are
omitted. 

\qed 

\end{appendices}

\end{document}